\tikzset{dotmark/.style={circle,fill,inner sep=1.5pt}}
\tikzset{emptymark/.style={circle,draw,fill=white,inner sep=1.5pt}}
\tikzset{crossmark/.style={thick,inner sep=1.5pt}}
\def\ShowAuthNotes{1}
\newcommand{\authnote}[3]{\textcolor{#3}{[{\bf #1:} { {#2}}]}}
\newcommand{\authnote}[3]{}
\newcommand{\Oh}{\mathcal{O}}
\newcommand{\Ohtilde}{\tilde{\Oh}}
\def\fragmentco#1#2{\bm{[}\,#1\,\bm{.\,.}\,#2\,\bm{)}}
\def\fragmentoc#1#2{\bm{(}\,#1\,\bm{.\,.}\,#2\,\bm{]}}
\def\fragmentoo#1#2{\bm{(}\,#1\,\bm{.\,.}\,#2\,\bm{)}}
\def\fragment#1#2{\bm{[}\,#1\,\bm{.\,.}\,#2\,\bm{]}}
\def\position#1{\bm{[}\,#1\,\bm{]}}
\newcommand{\ceil}[1]{\lceil #1 \rceil}
\newcommand{\yes}{{\sc yes}\xspace}
\newcommand{\no}{{\sc no}\xspace}
\renewcommand{\NG}{\operatorname{NG}}
\newcommand{\LG}{\operatorname{LG}}
\newcommand{\CNG}{\operatorname{CNG}}
\newcommand{\CLG}{\operatorname{CLG}}
\newcommand{\CG}{\operatorname{CG}}
\newcommand{\minscore}{\operatorname{min\ score}}
\newcommand{\one}{\mathtt{1}}
\newcommand{\zero}{\mathtt{0}}
\newcommand{\two}{\mathtt{2}}
\newcommand{\three}{\mathtt{3}}
\newcommand{\cost}{\mathsf{cost}}
\newcommand{\score}{\mathsf{score}}
\newcommand{\dyck}{\mathsf{dyck}}
\newcommand{\ed}{\mathsf{ed}}
\newcommand{\ted}{\mathsf{ted}}
\newcommand{\sub}{\mathsf{sub}}
\newcommand{\TED}{\textsf{TED}\xspace}
\newcommand{\DTED}{dynamic \textsf{TED}\xspace}
\newcommand{\ED}{\textsf{ED}\xspace}
\newcommand{\APSP}{\textsf{APSP}\xspace}
\newcommand{\OMv}{\textsf{OMv}\xspace}
\newcommand{\OPT}{\textsf{OPT}\xspace}
\newcommand{\Tminplus}{\textsf{T}_{\textsf{Min-Plus}}}
\newcommand{\TminplusOMv}{\textsf{T}_{\textsf{Min-Plus-OMv}}}
\newcommand{\ttA}{\mathtt{A}}
\newcommand{\ttB}{\mathtt{B}}
\newcommand{\ttR}{\mathtt{R}}
\newcommand{\ttL}{\mathtt{L}}
\newcommand{\ttC}{\mathtt{C}}
\newcommand{\mA}{\mathcal{A}}
\newcommand{\mX}{\mathcal{X}}
\newcommand{\mY}{\mathcal{Y}}
\newcommand{\mZ}{\mathcal{Z}}
\newcommand{\mI}{\mathcal{I}}
\newcommand{\mL}{\mathcal{L}}
\newcommand{\mS}{\mathcal{S}}
\newcommand{\bS}{\mathbf{S}}
\newcommand{\bG}{\mathbf{G}}
\newcommand{\bL}{\mathbf{L}}
\newcommand{\bR}{\mathbf{R}}
\newcommand{\bP}{\mathbf{P}}
\newcommand{\bT}{\mathbf{T}}
\newcommand{\w}{\mathrm{w}}
\renewcommand{\epsilon}{\varepsilon}
\DeclarePairedDelimiter\abs{\lvert}{\rvert}
\DeclarePairedDelimiter\norm{\lVert}{\rVert}
\let\oldabs\abs
\def\abs{\@ifstar{\oldabs}{\oldabs*}}
\let\oldnorm\norm
\def\norm{\@ifstar{\oldnorm}{\oldnorm*}}
\def\problembox#1{%
    \vspace{2mm}%
    \noindent\fbox{%
    \begin{minipage}{.985\linewidth}%
        #1
    \end{minipage}%
    }%
    \vspace{2mm}%
}
\newcommand{\defproblem}[3]{%
    \problembox{%
        \textbf{#1}\\
        {\bf{Input:}} #2 ~\\
        {\bf{Output:}} #3
    }%
}
\renewenvironment{cases}{%
  \matrix@check\cases\env@cases
}{%
  \endarray\right.%
}
\def\env@cases{%
  \let\@ifnextchar\new@ifnextchar
  \left\lbrace
  \def\arraystretch{1.1}%
  \array{@{\;}c@{\quad}l@{}}%
}
\def\mid{\ensuremath :}
\newcommand\thefont{\expandafter\string\the\font}
\title{Hardness of Dynamic Tree Edit Distance and Friends}
\author{Bingbing Hu}{University of California, San Diego\\La Jolla, United States}{bingbing@ucsd.edu}{https://orcid.org/0000-0001-8854-5969qw}{}
\author{Jakob Nogler}{Massachusetts Institute of Technology\\Cambridge, United States}{jnogler@mit.edu}{https://orcid.org/0009-0002-7028-2595}{}
\author{Barna Saha}{University of California, San Diego\\La Jolla, United States}{barnas@ucsd.edu}{https://orcid.org/0000-0002-6494-3839}{}
\authorrunning{B. Hu, J. Nogler, and B. Saha}
\begin{document}
\pagenumbering{roman}
\maketitle
\begin{abstract}
String Edit Distance is a more-than-classical problem whose behavior in the dynamic setting, where the strings are updated over time, is well studied. 
A single-character substitution, insertion, or deletion can be processed in time $\tilde{\mathcal{O}}(n w)$\footnote{$\tilde{\mathcal{O}}(\cdot)$ hides polylogarithmic factors.} when operation costs are positive integers bounded by $w$ [Charalampopoulos, Kociumaka, Mozes, CPM 2020][Gorbachev, Kociumaka, STOC 2025]. If the weights are further uniform (insertions and deletions have equal cost), also an $\tilde{\mathcal{O}}(n \sqrt{n})$-update time algorithm exists [Charalampopoulos, Kociumaka, Mozes, CPM 2020]. This is a substantial improvement over the static $\mathcal{O}(n^2)$ algorithm when $w \ll n$ or when we are dealing with uniform weights. 

In contrast, for inherently related problems such as Tree Edit Distance, Dyck Edit Distance, and RNA Folding, it has remained unknown whether it is possible to devise dynamic algorithms with an advantage over the static algorithm. In this paper, we resolve this question by showing that (weighted) Tree Edit Distance, Dyck Edit Distance, and RNA Folding admit no dynamic speedup: under well-known fine-grained assumptions we show that the best possible algorithm recomputes the solution from scratch after each update. Furthermore, we prove a quadratic per-update lower bound for unweighted Tree Edit Distance under the $k$-Clique Conjecture. This provides the first separation between dynamic unweighted String Edit Distance and unweighted Tree Edit Distance, problems whose relative difficulty in the static setting is still open.

\end{abstract}

\pagenumbering{arabic}

\section{Introduction}

The Tree Edit Distance (\TED) measures the minimum number of insertions, deletions, and substitutions required to transform one labeled, ordered tree into another\footnote{More specifically, upon deleting a node, the children of the deleted node become the children of the original parent of the deleted node. Insertion is the reverse process of deletion.}, and has found numerous applications in settings where tree structures arise: computational biology
\cite{gusfield_1997,10.1093/bioinformatics/6.4.309,HochsmannTGK03,waterman1995introduction},
data analysis \cite{KochBG03,Chawathe99,FerraginaLMM09},
image processing \cite{BellandoK99,KleinTSK00,KleinSK01,SebastianKK04},
 compiler optimization \cite{DMRW10} and more.

 There is a long line of work spanning over thirty years \cite{Tai79, ShashaZhang89, Klein98, DMRW10} on improving the running time of \TED  to arrive at a cubic algorithm \cite{DMRW10}. It has only been shown recently that from a computational point of view,
weighted \TED (where there is a weight associated with every operation, and the goal is to minimize the total weight of the transformation) resembles the All-Pairs Shortest Paths (\APSP) Problem. Bringmann \emph{et al.} \cite{BGMW20} proved that weighted \TED cannot be solved in $\Oh(n^{3-\varepsilon})$ time under \APSP, and subsequently Nogler \emph{et al.} \cite{NPSVXY25} showed that it is, in fact, \emph{equivalent} to \APSP.
For unweighted \TED, however, the situation is drastically different and truly subcubic algorithms are known \cite{M22, Durr23, NPSVXY25}, with the best running time to date being ${\Ohtilde(n^{(3 + \omega)/2})} = {\Oh(n^{2.6857})}$ \cite{NPSVXY25}\footnote{$\omega < 2.372$ is the fast matrix multiplication exponent \cite{ADVXXZ24}.}. On the other hand, from a lower bound perspective all we know is that unweighted \TED is at least as hard as the String Edit Distance. It remains a major open problem whether unweighted \TED is strictly harder than String Edit Distance, conditioned under some fine-grained hardness hypothesis.

In this paper, we study the complexity of \TED (both weighted and unweighted) in the dynamic setting. While this question is well understood for the String Edit Distance problem, its complexity remains wide open for \TED. Note that the String Edit Distance, also known as the Levenshtein distance, is defined as the minimum number of single-character insertions, deletions, and substitutions required to transform one string into the other. When each operation is assigned a weight and the goal is to minimize the total cost, we obtain the \emph{weighted} edit distance. Computing the (weighted or unweighted) edit distance between two strings of length $n$ is a standard introductory example of dynamic programming, solvable in $\Oh(n^2)$ time \cite{Vintsyuk1968, NW70, WF74, Sel74}. Surprisingly, despite its simplicity, this algorithm is in fact optimal under the Strong Exponential Time Hypothesis (SETH) \cite{LBStringED15, LB2StringED15, LB3StringED15, LB4StringED15}.

The complexity of the Edit Distance Problem is well understood in the dynamic setting, where the strings undergo updates (insertions, deletions and substitutions) and we need to maintain the edit distance throughout the updates. After initial works that considered updates occurring only at the endpoints of the two strings \cite{LMS98, KP04, IIST05, Tis08}, Charalampopoulos, Kociumaka, and Mozes gave the first algorithm for the most general scenario \cite{CKM20}, where updates can happen anywhere. In the unweighted setting, their algorithm supports updates in $\tilde{\mathcal{O}}(n)$ time, and this is conditionally optimal under SETH. In the uniform-cost setting (where insertions and deletions have equal positive integer cost) bounded by $w = n^{\mathcal{O}(1)}$, their algorithm supports updates in $\tilde{\mathcal{O}}(n \cdot \min(\sqrt{n},w))$ time. Very recently, it has been shown that $\tilde{\mathcal{O}}(n \sqrt{n})$ is (conditionally) optimal when $w$ is a large polynomial in $n$ \cite{CKW23}. Moreover, an $\tilde{\mathcal{O}}(nw)$ update-time algorithm for general positive weights has been given \cite{GK24}. Beyond this, dynamic string Edit Distance has been also studied in more specific settings: Gorbachev and Kociumaka showed in \cite{GK24} that if the edit distance is \emph{bounded} by $k$, then the update time improves to $\Ohtilde(k)$. Moreover, for \emph{approximate} (unweighted) Edit Distance, there is an algorithm achieving subpolynomial amortized update time with a corresponding approximation guarantee \cite{KMS23}.

As stated earlier, it is a major open question whether unweighted \TED is strictly harder than unweighted string edit distance, and this same question remains in the dynamic setting.

\begin{center}
\textit{Question 1: Is unweighted \TED stricty harder than String Edit Distance?}
\end{center}

Very recently, \TED has been extensively studied in settings such as the approximate case \cite{BGHS19, Seddighin22} and the bounded case \cite{DGHKSS22, DGHKS23, KS25}, where several breakthrough results have been achieved.
However, almost nothing is known about its behavior in the dynamic setting. The only (very recent) work about \TED that supports update is a  dynamic algorithm with approximation factor $n^{1/2+o(1)}$ and subpolynomial update time \cite{DGKHS25}. However, this result is obtained through an embedding into String Edit Distance and employing known algorithms for strings, rather than an algorithm specific to trees.

The lack of knowledge for dynamic (exact) \TED stands out even more jarringly when contrasted with our good understanding of dynamic \APSP \cite{Kin99, KT01, DI02, DI04, Tho04, Tho05, ACK17, GWN20b, CZ23, Mao24} (which, in the static setting, is equivalent to weighted \TED) and dynamic String Edit Distance (which may still be no easier than unweighted \TED).
In fact, it is reasonable to assume from the recent work of \cite{NPSVXY25} that dynamic improvements for \TED might be possible: The authors introduce the notion of border-to-border distance on tree-based graphs. It is a generalization of the \emph{string} alignment graph, which is used by \cite{CKM20} in their dynamic algorithm for String Edit Distance. When we represent the  strings as two sequences of singleton trees, we recover exactly the string alignment graph. The seeming connection between the alignment graphs raises the question:

\begin{center}
\textit{Question 2: Is there a dynamic advantage for dynamic Tree Edit Distance?}
\end{center}

\subparagraph{Dyck Edit Distance and RNA Folding.}
Another problem closely related to String Edit Distance is the Dyck Edit Distance Problem, which asks for the minimum number of edits needed to transform a given (not necessarily balanced) parentheses string into a well-balanced one.
A further related problem is the RNA Folding Problem: given a string over the nucleotides \texttt{A}, \texttt{U}, \texttt{C}, and \texttt{G}, the goal is to find a maximum set of non-crossing base pairs (\texttt{A},\texttt{U} and \texttt{C},\texttt{G}) so that the string “folds” into a valid secondary structure.
(See \cref{sec:preliminaries} for formal definitions and a discussion of why these problems are so closely connected.)

In the static setting, both problems allow for a simple $\Oh(n^3)$ dynamic programming solution, assuming constant grammar size for Dyck Edit Distance \cite{aho1972minimum, NJ80}. A series of (inexhaustive) works \cite{VGF14, BGSV19, DBLP:conf/soda/WilliamsX20, CDX22} improves this running time to the current best of $\Ohtilde(n^{(3+\omega)/2})$ via bounded-monotone min-plus products \cite{CDXZ22}. A lot of attention has also been given to the approximation setting \cite{s14, s15, DKS22, KS23, BGHS19,Seddighin22}.
In terms of hardness, Abboud \emph{et al.} \cite{ABVW15} show a \emph{combinatorial} $\Omega(n^3)$ lower bound for both problems. However, this does not exclude the possibility of a faster dynamic (combinatorial) algorithm, as such an algorithm would be highly desirable in practice.
However, in the dynamic setting, nothing is known of these two problems except for an approximation algorithm \cite{DGKHS25} for dynamic Dyck Edit Distance, which relies on an embedding into strings. This leads to our third question: 

\begin{center}
\textit{Question 3: Is there a dynamic advantage for dynamic Dyck Edit Distance and RNA Folding?}
\end{center}

\paragraph*{Our Results.}

In our paper, we make progress in answering the above questions.
Technique-wise, we use static hardness conjectures to prove lower bounds for dynamic problems, a technique started by \cite{popular14}.
We base our hardness on the Min-Weight $k$-Clique Conjecture and the $k$-Clique Detection Conjecture in fine-grained complexity (see \cite{finegrainedsurvey} for background). The two problems are defined as follows:

\defproblem
{Min-Weight $k$-Clique}
{A weighted graph $\bG = (V, E, w)$, where $V = \fragment{1}{n}$.}
{$\min_{v_1, \ldots, v_k \in V} \sum_{i < j } w(v_i, v_j)$ such that $v_1, \ldots, v_k$ is a $k$-clique.} 
\defproblem
{$k$-Clique Detection}
{A unweighted graph $\bG = (V, E)$ on $n$ nodes, where $V = \fragment{1}{n}$.}
{\yes if there are $v_1, \ldots, v_k \in V$ such that $v_1, \ldots, v_k$ is a $k$-clique, and \no otherwise.}

\begin{conjecture} [Min-Weight $k$-Clique Conjecture] \label{conj:weightedclique}
    For any $\varepsilon > 0$, there exists a constant $c > 0$ such that for any $k \geqslant 3$, the Min-Weight $k$-Clique Problem with edge weights in $\{1, \dots, n^{ck}\}$ cannot be solved in $\Oh(n^{k(1-\varepsilon)})$ time.
\end{conjecture}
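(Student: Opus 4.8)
Conjecture~\ref{conj:weightedclique} is not a statement that we, or anyone, can prove: it is one of the standard hardness hypotheses of fine-grained complexity, on the same footing as \SETH, the \threeSUM\ conjecture, and the \APSP\ conjecture (see \cite{finegrainedsurvey}), and like them it asserts a near-$n^k$ lower bound for a problem in $\mathrm{P}$ — something far beyond the reach of current unconditional lower-bound techniques. The honest ``proof'' is therefore to adopt it as an axiom, exactly as the reductions later in this paper do; what I can offer instead is a plan for arguing its plausibility and for identifying the only routes by which one could hope to \emph{derive} it from an even more primitive assumption.

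The plausibility argument has three parts. First, I would observe that the $k=3$ instance — Min-Weight $3$-Clique with polynomially bounded weights — is fine-grained equivalent to the Negative-Weight Triangle problem and hence to \APSP, so Conjecture~\ref{conj:weightedclique} for $k=3$ is precisely the widely believed statement that \APSP\ has no truly subcubic algorithm; in particular, any $k$ for which the conjecture failed with $k=3$ would refute the \APSP\ conjecture. Second, I would survey the upper-bound side: for weights as large as $n^{ck}$ no $\Oh(n^{k(1-\varepsilon)})$ algorithm is known, and the best algorithms — built from bounded-difference and min-plus-product machinery, like those for \APSP\ and for the cubic problems discussed in the introduction — shave only subpolynomial factors off the $\Oh(n^k)$ brute force; the conjecture merely posits that this is essentially optimal, in line with every other min-plus-type problem. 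Third, I would note that the wide weight window $\{1,\dots,n^{ck}\}$ is exactly what makes the hypothesis robust and is exactly what the reductions in this paper need: with only $\Oh(\log n)$-bit weights the problem becomes easier via scaling, so the $\Theta(ck\log n)$-bit per-edge budget is calibrated to force essentially arbitrary weight behaviour.

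If one insisted on a proof rather than an axiom, the natural attempt is a reduction from a genuinely $k$-ary weighted hypothesis: encode each edge with a block of $\Theta(ck\log n)$ bits so that the total weight of a $k$-clique records, without carries, the aggregate of its $\binom{k}{2}$ pairwise quantities, so that a Min-Weight $k$-Clique oracle returns the minimum aggregate; an $\Oh(n^{k(1-\varepsilon)})$ algorithm would then break the source hypothesis, and the uniformity of $c$ in $\varepsilon$ (but not in $k$) would fall out of the uniform per-edge bit budget. The main obstacle — and the reason the statement is a conjecture and not a theorem — is that no such reduction is known from any canonical assumption: \APSP\ yields only the $k=3$ case; $k$-\OV, and hence \SETH, imposes a $k$-ary constraint on coordinates that is not expressible as a sum of pairwise edge weights; and $k$-SUM-type problems give only an $\approx n^{k/2}$ barrier for Exact-Weight $k$-Clique, far short of $n^{k}$. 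Thus Min-Weight $k$-Clique stands as an independent hardness axiom; an unconditional proof of it would itself be a landmark result; and the substantive contribution of this paper is the chain of fine-grained reductions built on top of it rather than any attempt to establish it.
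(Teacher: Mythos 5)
You are right that this statement is a conjecture, not a theorem: the paper offers no proof of it and simply adopts it (citing the fine-grained complexity literature) as the hardness hypothesis underlying Theorem~\ref{thm:weightedted}, exactly as you describe. Your treatment — taking it as an axiom and discussing its plausibility and relation to \APSP\ — matches the paper's use of the statement.
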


\begin{conjecture} [Combinatorial $k$-Clique Detection Conjecture] \label{conj:cliquedetection}
    For any $\varepsilon > 0$, the $k$-Clique Detection Problem cannot be solved in $\Oh(n^{k - \varepsilon})$ time by any combinatorial algorithm.
\end{conjecture}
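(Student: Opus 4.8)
The last displayed claim is the Combinatorial $k$-Clique Detection Conjecture, i.e.\ a hardness hypothesis rather than a theorem; an unconditional proof is impossible in any standard sense, if only because ``combinatorial algorithm'' is not a formalized model of computation. The plan, then, is not to prove it outright but to justify adopting it, by reducing it to more primitive and more widely believed hypotheses and recalling the evidence behind those.

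The cleanest case is $k = 3$, where the conjecture asserts exactly that triangle detection in an $n$-vertex graph needs $n^{3-o(1)}$ combinatorial time. By the classical subcubic-equivalence machinery relating triangle detection to Boolean matrix multiplication (see \cite{finegrainedsurvey}) --- an equivalence whose reductions are themselves combinatorial --- this is in turn equivalent to the Combinatorial BMM Conjecture, a cornerstone hypothesis of fine-grained complexity of the kind underlying the combinatorial cubic lower bound of \cite{ABVW15} recalled above, and supported by the decades in which no genuinely combinatorial subcubic Boolean matrix multiplication algorithm has appeared (the $n^3/\polylog n$ ``Four Russians''-type speedups being only subpolynomial improvements).

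For general $k \geqslant 3$ I would argue robustness. First, the conjecture is \emph{no weaker} than Combinatorial BMM: the standard tensoring reduction turns a $k$-clique instance on $n$ vertices into a triangle instance on $N \leqslant n^{\lceil k/3 \rceil}$ vertices --- enumerate all $\lceil k/3\rceil$-cliques of the input (at most $n^{\lceil k/3\rceil}$ of them, found in $n^{\lceil k/3\rceil + \Oh(1)}$ time since $k$ is fixed) and join two of them whenever their union induces a clique --- so a combinatorial $\Oh(N^{3-\varepsilon})$ triangle detector would yield a combinatorial $\Oh(n^{k-\varepsilon'})$ $k$-clique detector; contrapositively, the $k$-clique conjecture implies the triangle (hence BMM) conjecture. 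Second, the only combinatorial $k$-clique algorithms known are precisely this tensoring composed with combinatorial BMM, running in $n^{k - o(1)}$ and unimproved for decades --- which is the empirical content of the conjecture, and the reason it is standardly posited for all $k$ simultaneously.

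The main obstacle is intrinsic. Because no standard machine model captures the ``combinatorial'' algorithms the statement quantifies over, it is unprovable unconditionally by design; and for $k > 3$ even a \emph{conditional} derivation from Combinatorial BMM is out of reach, since the tensoring reduction only runs in the direction ``fast triangle $\Rightarrow$ fast $k$-clique'' and no reverse reduction --- which would have to compress a triangle instance on $N$ vertices into a $k$-clique instance on roughly $N^{3/k}$ vertices --- is known. Hence the most one can offer is the chain of equivalences and implications above, anchoring the conjecture to Combinatorial BMM and to the long-standing failure to beat it combinatorially.
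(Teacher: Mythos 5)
This statement is a hardness hypothesis, not a theorem: the paper does not (and could not) prove it, but simply adopts it as a standard conjecture with a pointer to the fine-grained complexity literature, exactly as you observe. Your supporting discussion via tensoring down to combinatorial triangle detection/BMM is sound as informal evidence (with the small caveat that the clean bound $N^{3-\varepsilon}\Rightarrow n^{k-\varepsilon'}$ requires $3\mid k$ or an unbalanced-partition refinement), and it matches the spirit of the paper's treatment, so there is nothing to reconcile against a paper proof.
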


The term ``combinatorial algorithms'' has been used very frequently by the research community (including but not limited
to \cite{HKN13, RT13, popular14, HenzingerKNS15, VW18, AFKLM24}), although it is not a well-defined term. In our paper, as well as many other papers, it is used to refer to algorithms that do not use ``algebraic'' techniques such as in Strassen's algorithm \cite{Strassen1969}, which underlie all known algorithms for fast matrix multiplication. (This definition has been first introduced by Ballard \emph{et al.}  \cite{BDHS13}.) Compared to algebraic algorithms, combinatorial algorithms are generally more feasible to implement in practice. Thus, a lower bound for combinatorial algorithms can be somehow also considered as a ``practical'' lower bound.

For the first question, we show that unweighted \TED is strictly harder than String Edit Distance in the dynamic setting when restricted to combinatorial algorithms. 
\begin{restatable*}{mtheorem}{unweightedted}\label{thm:unweightedted}
Unless \cref{conj:cliquedetection} fails, for any $\varepsilon > 0$, there is no combinatorial algorithm for (unweighted) \DTED of size $\Oh(N)$ that satisfies $p(N) + N (u(N) + q(N)) = \Oh(N^{3-\varepsilon})$,
where $p(N), u(N)$, and $q(N)$ denote the preprocessing, update, and query times of the algorithm, respectively.
This lower bound holds even when the alphabet size is constant.
\end{restatable*}

Thus, even though it still remains open whether unweighted \TED can be solved in the same time complexity as String Edit Distance,
we show that at least in the dynamic setting, these two problems are separate.

For the second question, we show that there is no dynamic advantage to weighted Tree Edit Distance. 
\begin{restatable*}{mtheorem}{weightedted}\label{thm:weightedted}
Unless \cref{conj:weightedclique} fails, for any $\varepsilon > 0$, there is no algorithm for \DTED of size $\Oh(N)$ and alphabet size $\Oh(N)$ that satisfies $p(N) + N (u(N) + q(N)) = \Oh(N^{4-\varepsilon})$,
where $p(N), u(N)$, and $q(N)$ denote the preprocessing, update, and query times of the algorithm, respectively.
\end{restatable*}

Thus, while both weighted String Edit Distance and \APSP admit faster dynamic advantage, weighted \TED does not: under the Min-Weight $k$-Clique Conjecture, it cannot be improved beyond recomputing from scratch after each update. This highlights a major difference between weighted \TED and the two problems which it is most closely related to, both computationally and definition-wise. In fact, our lower bound holds even for incremental and decremental \TED.

To answer the third question, we show that there is no dynamic advantage to RNA Folding and Dyck Edit Distance among combinatorial algorithms.

\begin{restatable*}{mtheorem}{rnafolding}\label{thm:rnafolding}
Unless \cref{conj:cliquedetection} fails, for any $\varepsilon > 0$, there is are no combinatorial algorithms for dynamic RNA Folding or dynamic Dyck Edit Distance of size $\Oh(N)$ that satisfy $p(N) + N (u(N) + q(N)) = \Oh(N^{4-\varepsilon})$,
where $p(N), u(N)$, and $q(N)$ are the preprocessing, update, and query times of the algorithm, respectively. This lower bound holds even when the alphabet size is constant.
\end{restatable*}

Finally, we remark that by combining existing results, one can obtain a dynamic advantage for RNA Folding and Dyck Edit Distance when the updates are only insertions at the end of the string (essentially placing us in an online setting). By combining \cite{DG24} and \cite{HP25}, one can show an even stronger result: in this online setting, these problems achieve the same time complexity as Online Matrix-Vector Multiplication (\OMv) Problem \cite{HenzingerKNS15}. The latter is one of the most recent studied problems in fine-grained complexity for dynamic lower bounds: given an $n \times n$ matrix $M$ and a sequence of vectors $v_1, \ldots, v_n$ arriving one by one, the task is to output $Mv_i$ before the arrival of $v_{i+1}$.

\begin{restatable*}{lemma}{onlineRNA}\label{thm:online}
    There is a randomized algorithm that solves online RNA Folding and online Dyck Edit Distance of size $\Oh(N)$ in the same total time complexity as \OMv (currently $N^3/2^{\Omega(\sqrt{\log(N)})}$ by \cite{OMVwilliams}),  and succeeds with high probability.
\end{restatable*}

\section{Overview}

\subparagraph*{Lower Bound for Dynamic Unweighted \TED.}

We show how to reduce an instance of $3k$-Clique Detection, $G$, to an instance of dynamic unweighted \TED.
The key ingredient is the construction of embeddings $\CLG$ and $\CNG$, referred to as \emph{clique gadgets}, which map two $k$-cliques $X, Y$ in $\bG$ into strings $\CLG(X)$ and $\CNG(Y)$ of length $\Oh(n \log n)$. These gadgets are designed so that $\ed(\CLG(X), \CNG(Y)) = C$ for some fixed constant $C$ if and only if $X$ and $Y$ are fully connected; otherwise, the edit distance is strictly larger than $C$.

First, we enumerate the set $\mS$ of all $k$-cliques in $\bG$.
Then, for each fixed $Z \in \mS$, we construct two trees $\bT(Z)$ and $\bT'(Z)$ that satisfy:
\begin{enumerate}
\item The tree edit distance $\ted(\bT(Z), \bT'(Z))$ minimizes, over all $X, Y \in \mS$, the sum
$\ed(\CLG(X), \CNG(Y)) + \ed(\CNG(X), \CLG(Z)) + \ed(\CLG(Y), \CNG(Z))$.
\item The sizes of $\bT(Z)$ and $\bT'(Z)$ are $\Oh(n^{k+1} \log n)$, while the part of the construction that depends on $Z$ involves only $\Oh(n \log n)$ nodes in $\bT(Z), \bT'(Z)$.
\end{enumerate}

Next, we run $\abs{\mS} = \Oh(n^k)$ rounds, one for each $Z \in \mS$. In each round, we maintain $\bT(Z),\bT'(Z)$, update them to the next instance using only $\Oh(n \log n)$ updates, and then query for the tree edit distance. This allows us to solve the $3k$-Clique Detection instance in $\Oh(n^{k+1} \log n)$ updates and $\Oh(n^{k} \log n)$ queries. A careful analysis of the factors, together with choosing $k$ sufficiently large, yields \cref{thm:unweightedted}.

\subparagraph*{Lower Bound for Dynamic Weighted \TED.}
Our starting point is the \TED lower bound construction of Bringmann \emph{et al.} \cite{BGMW20}, which encodes a Minimum Weight Triangle instance (known to be equivalent to \APSP) into a \TED instance. We extend such a reduction into another one that simultaneously ensures the two following:
\begin{enumerate}
\item instead of yielding the minimum weight triangle, the instance should yield the minimum weight 4-clique with one of the four vertices fixed; and
\item the portion of the instance that depends on the chosen fixed vertex should involve only $\Oh(1)$ nodes.
\end{enumerate}
This allows us to proceed similarly to the unweighted case. 
Given a 4-clique instance $\bG$, the algorithm runs for $n$ rounds, one for each vertex in $\bG$. In each round, we maintain the instance corresponding to the fixed vertex, update it to the next instance using only $\Oh(1)$ updates, and then query for the minimum-weight 4-clique containing the current vertex. Taking the minimum over all queried values yields the minimum weight 4-clique.

We remark that although, at a high level, this reduction may appear similar to the one for the unweighted case, the unweighted instances $\bT(Z)$ and $\bT(Z')$ differ significantly from the weighted \TED instances (compare \cref{fig:unweighted_instance} with \cref{fig:weighted:b}). The latter rely heavily on extremely large weights, whereas the construction of $\bT(Z)$ and $\bT(Z')$ is closer in spirit to gadgets used for unweighted string edit distance such as \cite{LBStringED15} than to those for weighted \TED.

\subparagraph*{Lower Bound for Dynamic RNA folding and Dyck Edit Distance.} Our starting point is once again the static lower bound \cite{ABVW15}. Conceptually, the reduction is similar to the previous ones: the static lower bound encodes a $3k$-Clique Detection instance into a single RNA Folding instance. For the dynamic lower bound, however, we instead encode a $4k$-Clique Detection instance into a single RNA Folding instance, where $k$ of the vertices are fixed for each round. Similar to the previous reductions, 
we must ensure that the part of our dynamic RNA Folding instance depending on the currently fixed vertices is small in size, so the number of updates stays small.
Regarding Dyck Edit Distance, we do not need to build any new instance. We use the embedding of RNA Folding into Dyck Edit Distance from \cite{C15}.

\subparagraph*{Organization of the Paper.}
First, we set up notation in \cref{sec:preliminaries}.
\cref{sec:unweighted}, \cref{sec:weighted} and \cref{sec:rnafolding} are dedicated to the proof of \cref{thm:unweightedted}, \cref{thm:weightedted}, and \cref{thm:rnafolding}, respectively. For sake of completeness, we also present the proof of \cref{thm:online} in \cref{sec:appendix}.

\section{Preliminaries}
\label{sec:preliminaries}
\subparagraph*{Sets.}
For integers \(i, j \in \mathbb{Z}\), we write \(\fragment{i}{j}\) to represent the set \(\{i, \dots, j\}\), and \(\fragmentco{i}{j}\) to denote the set \(\{i, \dots, j - 1\}\).
We define \(\fragmentoc{i}{j}\) and \(\fragmentoo{i}{j}\) similarly.

\subparagraph*{Strings.} 
An \emph{alphabet} $\Sigma$ is a finite set of symbols. A \emph{string} $X \in \Sigma^n$ of length $n$ is written as $X = X\position{1} X\position{2} \cdots X\position{n}$, where each $X\position{i}$ denotes the $i$-th character of $X$, for $i \in \{1, \dots, n\}$. The length of string $X$ is denoted as $|X|$.
Given indices $1 \le i \le j \le |X| + 1$, we define the \emph{fragment} $X\fragmentco{i}{j} := X\position{i} \cdots X\position{j - 1}$. A \emph{prefix} of $X$ is any fragment starting at position $1$, and a \emph{suffix} is any fragment ending at position $|X|$.
A string $Y$ of length $m \in \fragment{0}{n}$ is a \emph{substring} of $X$ if there exists an index $i$ such that $Y = X\fragmentco{i}{i + m}$.
For two strings $A$ and $B$, their \emph{concatenation} is denoted by $A \circ B$ or simply $AB$. The notation $A^k$ represents the string formed by concatenating $k$ copies of $A$. The \emph{reverse} of a string $X$ is written as $X^R := X\position{n} \cdots X\position{1}$.

Finally, for strings $A$ and $B$, we denote by $\ed(A, B)$ the \emph{(string) edit distance} (in short, \ED) between $A$ and $B$, defined as the minimum number of character insertions, deletions and substitutions to transform $A$ into $B$.

\subparagraph*{Tree Edit Distance (\TED).} We consider ordered, node-labeled trees from an alphabet $\Sigma$. Given a node $v$ in a tree, we use $\ell(v)$ to denote its label, and we use $\sub(v)$ to denote the subtree rooted at $v$.

\defproblem
{Weighted Tree Edit Distance}
{Two trees $\bT, \bT'$ labeled from an alphabet $\Sigma$ and a cost function $\delta:\Sigma \cup \{\varepsilon\} \times \Sigma \cup \{\varepsilon\} \mapsto \mathbb{R}$.}
{$\ted(\bT, \bT')$ defined as the minimum cost of transforming $\bT$ into $\bT'$ by performing a sequence of edit operations, which consist of the following three types: 
\begin{itemize}
        \item Changing the label of a node from $\ell$ to $\ell'$, $\ell \neq \ell'$ at cost $\delta(\ell, \ell')$.
        \item Deleting a node $v$ and attaching its children (if there are any) to the parent of $v$ in their original order (if $v$ is the root, then we obtain an ordered forest) at cost $\delta(\ell(v),\varepsilon)$.
        \item For an existing node $v'$ inserting a new node $v$ as a new leaf child in some position, or selecting consecutive children $w_1, \dots, w_k$ of $v'$, inserting a new node $v$ as the new parent of $w_1, \dots, w_k$, and placing $v$ as a child of $v'$ at cost $\delta(\epsilon, \ell(v))$.
\end{itemize}}

When all edit operations cost 1, we say the tree edit distance is \emph{unweighted}.
For our paper, it is more convenient to work with a different definition of \TED, which is more similar to the definition used in \cite{M22} and \cite{NPSVXY25}, and is defined through the notion of \emph{tree-alignment}.

\begin{definition} \label{def:tree-alignment}
   For two trees $\bT,\bT'$,
   we say the sequence $\mA = \{(v_i, v_i')\}_{i=1}^k \in \bT \times \bT'$ 
   is a \emph{tree-alignment of $\bT$ onto $\bT'$} if for all $1 \leq i < j \leq k$:
   \begin{itemize}
       \item $v_i$ is an ancestor of $v_j$ in $\bT$ iff $v_i'$ is an ancestor of $v_j'$ in $\bT'$,
       
       \item if neither $v_i$ nor $v_j$ is the ancestor of the other, then $v_i$ comes before $v_j$ in the pre-order traversal of $\bT$
       iff $v_i'$ comes before $v_j'$ in the pre-order traversal of $\bT$. \qedhere
    \end{itemize}
\end{definition}

For a tree-alignment $\mA = \{(v_i, v_i')\}_{i=1}^k \in \bT \times \bT'$ and $i \in \fragment{1}{k}$, we say $\mA$ \emph{aligns $v_i$ with $v_i'$}.
If further $\ell(v_i) = \ell(v_i')$, then we say $\mA$ \emph{matches} $v_i$ with $v_i'$,
otherwise we say $\mA$ \emph{substitutes} $v_i$ with $v_i'$.
The \emph{cost} of a tree-alignment $\mA$, denoted by $\cost(\mA)$,
is defined as 
\[
    \cost(\mA) \coloneqq \sum_{\substack{(v, v') \in \mA \\ \ell(v) \neq \ell(v')}} \delta(\ell(v),\ell(v')) + \sum_{v \in \bT \mid \nexists_{v'} (v,v') \in \mA} \delta(\ell(v), \varepsilon) + \sum_{v' \in \bT' \mid \nexists_{v} (v,v') \in \mA} \delta(\varepsilon, \ell(v')).
\]
Finding the minimum cost of a tree-alignment is an equivalent definition of \TED \cite{M22}\footnote{To be precise, in \cite{M22} and \cite{NPSVXY25}, the complementary problem is considered, i.e., finding the maximum value of $\sum_{v \in \bT} \delta(\ell(v), \varepsilon) + \sum_{v' \in \bT'} \delta(\varepsilon, \ell(v')) - \cost(\mA)$, taken over all tree-alignments $\mA$. This is evidently equivalent to minimizing $\cost(\mA)$, in which case we recover the tree edit distance.}.

In our paper, we also view String Edit Distance through the lens of \emph{string-alignment}. In this case, we can think of each string as a forest consisting of $n$ single-node trees. Then \cref{def:tree-alignment} recovers exactly their string-alignment.

\subparagraph{Path Gadgets.}

To describe more concisely our lower bound instances for \TED, we define a basic gadget \( \bP(S) \), which, given a string \( S = s_1 \ldots s_d \), constructs a path of length \( d \) with nodes labeled \( s_1, \ldots, s_d \) from top to bottom.
Moreover, given a path \( \bP \) and a node \( v \in \bP \), we say that we \emph{right-attach} a string \( S = s_1, \ldots, s_d \) to \( v \) when we attach \( d \) nodes as right children of \( v \), labeled \( s_1, \ldots, s_d \) from left to right.  
We define \emph{left-attach}ing analogously, for attaching nodes to the left of \( v \).

\subparagraph{RNA Folding.}

The RNA Folding Problem is defined over strings on the alphabet \( \Sigma \cup \Sigma' \),  where each character in \( \Sigma \) has a correspondingly paired character in \( \Sigma' \).  
For \( \sigma \in \Sigma \), we denote its paired character in \( \Sigma' \) by \( \sigma' \). Moreover, for a string \( S \) over \( \Sigma \cup \Sigma' \),  
we define \( p(S) \) to be the string obtained by replacing each character with its paired character.

Two index pairs \( (i, j) \) and \( (i', j') \), where \( i < j \) and \( i' < j' \), are said to \emph{cross} if either $i < i' < j < j'$ or $i' < i < j' < j$ holds. Now, given a string \( S \in (\Sigma \cup \Sigma')^n \), a set of index pairs  
\( F = \{(i, j) \mid 1 \le i < j \le n\} \) is a \emph{folding of \( S \)} if:
\begin{enumerate*}[(i)]
    \item for all distinct \( (i, j), (i', j') \in F \), the pairs do not cross; and
    \item for all \( (i, j) \in F \), we have \( S[i] = p(S[j]) \).
\end{enumerate*}
This allows us to define the RNA Folding Problem, along with a weighted variant.

\defproblem
{RNA Folding}
{A string $S \in \{\Sigma \cup \Sigma'\}^n$.}
{$\score(S) \coloneqq \max_{F \text{ folding of \( S \)}} |F|$.}

\defproblem
{Weighted RNA Folding}
{A string $S \in \{\Sigma \cup \Sigma'\}^n$ and a weight function $\w : \Sigma \cup \Sigma' \rightarrow \fragment{1}{M}$ where $w(\sigma) = w(p(\sigma))$ for all $\sigma \in \Sigma \cup \Sigma'$.}
{$\score_\w(S) \coloneqq \max_{F \text{ folding of \( S \)}} \sum_{(i,j) \in F}\w(S\position{i})$.}
The weighted version is convenient because it can be reduced to the unweighted problem with only an overhead of \( M \) on the length of the new string, where \( M \) is the largest weight. This allows us to work with the weighted formulation, as long as we keep $M$ reasonably small. 

\begin{lemma}[\cite{ABVW15}] \label{lem:weighted_rna_folding}
    Let $S \in \{\Sigma \cup \Sigma'\}^n$ be a string and $\w : \Sigma \rightarrow \fragment{1}{M}$ a weight function.
    
    Then, $\score_{\w}(S) = \score(S')$ for the string $S' \coloneqq s_1^{\w(s_1)} \cdots s_n^{\w(s_n)}$.
    \lipicsEnd
\end{lemma}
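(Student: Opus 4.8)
The plan is to prove the two inequalities $\score(S') \geq \score_\w(S)$ and $\score(S') \leq \score_\w(S)$ separately. Throughout, write $w_i \coloneqq \w(s_i)$ and let $I_i$ denote the $i$-th block of $S'$, i.e. the maximal run $s_i^{w_i}$; since $\w(\sigma) = \w(p(\sigma))$ for all $\sigma$, whenever $s_i = p(s_j)$ the blocks $I_i$ and $I_j$ have the same length $w_i = w_j$. For $\score(S') \geq \score_\w(S)$ I would take an optimal weighted folding $F$ of $S$ and \emph{blow it up}: for every arc $(i,j) \in F$ (so $i < j$ and $S[i] = p(S[j])$, hence $w_i = w_j$) insert the $w_i$ arcs pairing the $t$-th copy of $I_i$ with the $(w_i + 1 - t)$-th copy of $I_j$, for $t = 1, \dots, w_i$. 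As $F$ is a non-crossing matching, the intervals spanned by distinct arcs of $F$ are nested or disjoint; since the blocks of $S'$ occur in the same left-to-right order as the positions of $S$, the bundles produced for distinct arcs are again nested or disjoint, and the $w_i$ arcs inside a single bundle are mutually nested. Hence the union is a valid folding of $S'$, of size $\sum_{(i,j) \in F} w_i = \score_\w(S)$.

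For the reverse inequality I would show that \emph{some} optimal folding $F'$ of $S'$ is \emph{block-respecting}: for each block $I_i$, either no copy of $I_i$ is matched, or all $w_i$ copies of $I_i$ are matched, paired in the nested fashion above with the $w_i$ copies of a single block $I_j$ (necessarily $s_i = p(s_j)$). Granting this, we are done: such an $F'$ is exactly the blow-up of the non-crossing set $F$ of compatible position pairs $(i,j)$ of $S$ with $I_i$ matched to $I_j$ by $F'$, so $\score(S') = |F'| = \sum_{(i,j) \in F} w_i \leq \score_\w(S)$. Thus all the content lies in reducing an arbitrary optimal folding to a block-respecting one.

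The main obstacle is exactly this \textbf{uncrossing step}. The basic local move I would rely on is: take an \emph{innermost} arc $(a,b)$ of the current folding, meaning every position strictly between $a$ and $b$ is unmatched (such an arc exists whenever the folding is nonempty, by descending into nested arcs). Writing $a \in I_i$ and $b \in I_j$, all copies of $I_i$ after $a$ and all copies of $I_j$ before $b$ are then unmatched and lie strictly between $a$ and $b$, so one may add arcs nested inside $(a,b)$ that pair these copies; this strictly enlarges the folding unless $a$ is already the last copy of $I_i$ or $b$ the first copy of $I_j$. Combining moves of this flavour with exchange arguments that reroute the arcs incident to a fixed block so that they all target a single block — using that in a non-crossing folding any two arcs on four distinct endpoints are nested or disjoint — should let me drive an optimal folding to a block-respecting one without decreasing its size. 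The delicate parts are ensuring termination and checking that no rerouting introduces a crossing or destroys an arc; I expect to package this by fixing an optimal folding that additionally maximizes a suitable potential (e.g. the number of arcs already joining corresponding copies of two blocks) and showing that any violation of block-respectingness admits a size-preserving modification strictly increasing the potential, a contradiction.
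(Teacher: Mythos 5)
The paper does not actually prove this lemma (it is imported verbatim from \cite{ABVW15}), so there is no in-paper argument to compare against; judging your proposal on its own terms, the easy inequality $\score(S') \geq \score_\w(S)$ via the blow-up is complete and correct, but the reverse inequality is not proved. You correctly reduce it to the claim that some optimal folding of $S'$ is block-respecting, yet the proof of that claim is only a plan: ``combining moves of this flavour with exchange arguments \dots should let me drive an optimal folding to a block-respecting one'' and ``I expect to package this by fixing \dots a suitable potential.'' That exchange argument \emph{is} the entire content of the hard direction, and it is genuinely delicate: in an optimal folding a single block $I_i$ may legitimately distribute its $w_i$ matched copies among several different partner blocks (e.g.\ for $S=\tta\,\tta'\,\tta$ the folding that pairs half of the middle block leftwards and half rightwards is already optimal), so the innermost-arc move you describe only constrains innermost arcs and does not by itself force block-respectingness; the rerouting step that consolidates all arcs leaving $I_i$ onto one partner block, without losing arcs elsewhere, is exactly what remains unproven. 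As written, this is a genuine gap.

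For what it is worth, the reverse inequality admits a short global argument that avoids the surgery entirely. Given any folding $F'$ of $S'$, form the block-level multigraph $G$ whose vertices are the blocks and where $(i,j)$ has multiplicity $m_{ij}$ equal to the number of arcs of $F'$ joining $I_i$ to $I_j$. Two edges of $G$ on four distinct blocks cannot cross, since witnessing arcs of $F'$ would cross; hence \emph{every} matching in $G$ is a non-crossing matching of $S$. Each connected component of $G$ involves a single complementary pair $\{\sigma,\sigma'\}$ and is bipartite with all degrees at most $w := \w(\sigma)$, so the fractional matching $x_e = m_e/w$ has value $T/w$ where $T=\sum_e m_e$, and integrality of the bipartite matching polytope yields an integral matching $M$ with $|M| \geq T/w$. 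Taking the union over components gives a folding $F$ of $S$ with $\sum_{(i,j)\in F} \w(s_i) \geq \sum_{i<j} m_{ij} = |F'|$, which is the desired bound (and, combined with your blow-up, also proves your block-respecting normalization as a corollary). I recommend either completing your potential-function argument in full or replacing it with a counting argument of this type.
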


\subparagraph{Dyck Edit Distance.}

The Dyck Edit Distance Problem asks for the minimum number of edits required to transform a given string into a well-balanced string of parentheses. It can be thought of as a variant of RNA Folding, with three key differences: 
\begin{enumerate*}[(1)]
    \item it is formulated as a minimization problem rather than a maximization problem;
    \item it allows also for substitution; and
    \item a symbol $\sigma$ may be matched with its corresponding closing symbol $\sigma'$ only if $\sigma$ appears before $\sigma'$ in the string, but not the other way around.
\end{enumerate*}

\defproblem
{Dyck Edit Distance.}
{A string $S \in \{\Sigma \cup \Sigma'\}^n$.}
{$\dyck(S) \coloneqq \min_{S' \in \mL_{\dyck}(\Sigma)} \ed(S, S')$, where $\mL_{\dyck}(\Sigma)$ is the language defined by the grammar with the rules $\bS \rightarrow \bS\bS$, $\bS \rightarrow \varepsilon$, and $\bS \rightarrow \sigma \bS \sigma'$ for all $\sigma \in \Sigma$.}

Regarding Dyck Edit Distance, no additional setup is required: for the purpose of proving lower bounds, one can directly reduce RNA Folding into Dyck Edit Distance.

\begin{lemma}[\cite{C15}] \label{lem:dyck_to_rnafolding}
    Let $S = s_1 \cdots s_n$ be a string over $\Sigma \cup \Sigma'$.
    Let $\bar{\Sigma} \coloneqq \{\#\} \cup \{\zero_{\sigma},  \zero_{\sigma}', \one_{\sigma},  \one_{\sigma}' \}_{\sigma \in \Sigma}$,
    and define the function $\phi : \Sigma \rightarrow {\bar{\Sigma}^{6}}$ such that 
    $\phi(\sigma) = \zero_{\sigma} \ \# \ \one_{\sigma}' \ \zero_{\sigma} \ \# \ \one_{\sigma}'$ and $\phi(\sigma') = \one_{\sigma} \ \one_{\sigma}  \ \zero_{\sigma}' \  \zero_{\sigma}'$, 
    for $\sigma \in \Sigma$.
    
    Then, for $\Phi(S) \coloneqq \phi(s_1) \cdots \phi(s_n)$,
    we have $\dyck(\Phi(S)) = 3|S| - 2\score(S)$\footnote{In \cite{C15}, the author specifically considers the case $|\Sigma| = 2$, but its proof extends straightforwardly to alphabets of arbitrary size $|\Sigma| > 2$ by symmetry, as long as distinct letters are used for each symbol in the embedding.}.
    \lipicsEnd
\end{lemma}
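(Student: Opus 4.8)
The plan is to prove the two inequalities $\dyck(\Phi(S)) \le 3|S| - 2\score(S)$ and $\dyck(\Phi(S)) \ge 3|S| - 2\score(S)$ separately, both by passing to a purely combinatorial description of Dyck edit distance. Concretely, I would first record the standard fact that, for any bracket string $T$ of even length, $\dyck(T) = \min_{M}\sum_{(i,j)\in M} c_T(i,j)$, where $M$ ranges over perfect non-crossing matchings of the positions of $T$ and $c_T(i,j)\in\{0,1,2\}$ equals $0$ when $T[i]$ is an opening symbol whose matching closing symbol is $T[j]$, equals $1$ when a single substitution turns $(T[i],T[j])$ into such a pair (in particular whenever $T[i]$ is an opening symbol or $T[j]$ a closing symbol but they are not already a matched pair), and equals $2$ otherwise; an optimal edit sequence can always be brought into the form ``edit the two endpoints of each $M$-pair accordingly'' (this is precisely the classical interval DP for Dyck edit distance, and no insertions are ever needed). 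Since $|\Phi(S)|$ is even, this reformulation applies, and because the complement of $\#$ never occurs in $\Phi(S)$ every pair containing a $\#$ costs at least $1$; the whole computation then becomes bookkeeping over the constant-size blocks $\phi(S[1]),\dots,\phi(S[|S|])$.

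For the upper bound I would take an optimal folding $F$ of $S$, so $|F|=\score(S)$, and assemble a perfect non-crossing matching $M$ of $\Phi(S)$ block by block. For a position $t$ not covered by $F$ I would exhibit an explicit internal perfect non-crossing matching of $\phi(S[t])$ of total cost exactly $3$. For a folded pair $(t,t')\in F$ with $t<t'$ — where, by the complementarity condition of a folding, exactly one of $S[t],S[t']$ lies in $\Sigma$ and the other in $\Sigma'$ — I would nest the block of the later position inside the block of the earlier one and use that the gadgets were designed so that two of their brackets now appear opening-before-closing and of matching type (the $\zero$-type brackets in one orientation, the $\one$-type brackets in the other), yielding two cost-$0$ pairs, while the remaining positions of the two blocks admit a non-crossing completion of cost exactly $2$; thus each folded pair contributes $4$. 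Concatenating blocks (legitimate for non-crossingness, since each block is internally matched) produces a matching of total cost $3(|S|-2\score(S)) + 4\,\score(S) = 3|S|-2\score(S)$, giving $\dyck(\Phi(S)) \le 3|S|-2\score(S)$.

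For the lower bound I would start from a cost-optimal perfect non-crossing matching $M^{\ast}$ of $\Phi(S)$ and normalize it without increasing the cost: I would argue that each block may be assumed to have only $\Oh(1)$ brackets matched outside it, and — crucially — that a cost-$0$ pair joining two distinct blocks $\phi(S[t])$ and $\phi(S[t'])$ can arise only when $\{S[t],S[t']\}$ is a complementary pair occurring in the orientation for which the $\zero$- or $\one$-type brackets line up, these being the only legal cross-block bracket pairs the alphabet permits. Letting $F$ consist of the position pairs joined by such cross-block cost-$0$ edges, one checks that $F$ is non-crossing (inherited from $M^{\ast}$) and is a genuine folding of $S$. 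A local charge then finishes it: a block untouched by $F$ forces at least $3$ units of cost, and two blocks joined by $F$ force at least $4$ (at most two cost-$0$ cross edges, and at least $2$ further cost among their remaining brackets), so $\cost(M^{\ast}) \ge 3(|S|-2|F|) + 4|F| = 3|S|-2|F| \ge 3|S|-2\score(S)$.

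The main obstacle I anticipate is the normalization step in the lower bound: excluding ``exotic'' optimal matchings in which a block is split between two partners, brackets are matched across three or more blocks, or cost-$0$ chains thread through several blocks, and verifying that none of these undercuts the bound. I would handle this with a per-block potential/charging scheme in the spirit of the static RNA-folding lower bound of \cite{ABVW15}: assign each block a lower bound on the number of edits it is responsible for, and show this quantity drops below $3$ only through participation in a genuine complementary cross-linked pair, each such drop being worth exactly one unit of folding. The one remaining subtlety is that a folded pair of $S$ is orientation-free whereas a block pair is not; but the gadget is built symmetrically so that both orientations of a complementary pair save exactly $2$, which is why the identity comes out clean.
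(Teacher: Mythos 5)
You should first note that the paper contains no proof of this lemma at all: it is imported from \cite{C15} with only a footnote about alphabet size, so there is no in-paper argument to compare against and your attempt must stand on its own. The overall shape you chose (the folklore reformulation of $\dyck$ as a minimum-cost non-crossing perfect matching with pair costs in $\{0,1,2\}$, block-by-block accounting, cross-block cost-$0$ pairs corresponding to folded pairs, and a charging argument for the lower bound) is the right kind of argument for such a statement.

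The concrete bookkeeping, however, does not hold for the gadget as printed, and in fact the printed identity is false, so no proof of it can succeed. The block $\phi(\sigma') = \one_{\sigma}\,\one_{\sigma}\,\zero_{\sigma}'\,\zero_{\sigma}'$ has only four characters (despite the declared codomain $\bar{\Sigma}^{6}$), and its standalone Dyck cost is $2$, not $3$: two substitutions give $\one_{\sigma}\,\one_{\sigma}'\,\zero_{\sigma}\,\zero_{\sigma}'$, which is balanced, while no single edit suffices. Hence already for $S = \sigma'$ one has $\dyck(\Phi(S)) = 2 \neq 3 = 3|S| - 2\score(S)$. Likewise, your claim that for a folded pair the six positions left over after the two cost-$0$ cross pairs can be completed at cost exactly $2$ is impossible: for $S=\sigma\sigma'$ those positions carry $\#,\one_{\sigma}',\#,\one_{\sigma}',\one_{\sigma},\one_{\sigma}$, among which no opening symbol precedes its own closing symbol, so any completion costs at least $3$; the folded pair contributes $3$, not $4$, and indeed $\dyck(\Phi(\sigma\sigma')) = 3 \neq 4$. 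Carried out correctly, your block analysis yields $\dyck(\Phi(S)) = 3a + 2b - 2\score(S)$, where $a$ and $b$ count the $\Sigma$- and $\Sigma'$-characters of $S$ --- an identity of the same affine form, which is all that the paper's reduction in \cref{thm:rnafolding} actually needs, but not the printed statement; the mismatch (note also the length-$4$ image versus the claimed $\bar{\Sigma}^{6}$) is evidently a transcription issue in the restatement of \cite{C15}, and a careful execution of your own ``exactly $3$ per block, exactly $4$ per folded pair'' step would have exposed it. Separately, even for a corrected gadget the lower-bound normalization (ruling out exotic optimal matchings that split blocks or thread cost-$0$ pairs through several blocks) is where the real work lies, and in your proposal it is only gestured at via an unspecified potential/charging scheme.
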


\subparagraph{Clique Gadgets.}
Central to the lower bound construction of \cite{ABVW15}, and also to the dynamic lower bounds presented here for Dyck Edit Distance, RNA Folding, and unweighted \TED, are the gadget constructions from \cite{ABVW15} that encode the neighborhoods of nodes and set of nodes. 
To describe such gadgets formally, given a graph $\bG = (V,E)$ and a node $v \in V$, we use $N(v)$ to denote the set of neighbors of $v$. For a subset $V' \subseteq V$, we define the neighbor set of $V'$ to be $N(V') = \bigcap_{v' \in V'}N(v')$.

\begin{lemma}[Claim 4 in \cite{ABVW15}]\label{lem:clique_gadged}
Let $\bG = (V, E)$ be a graph with $\abs{V} = n$, and let $k \in \mathbb{Z}_{+}$.
Then, there exist two string embeddings $\CLG : V^k \rightarrow \Sigma^{\lambda}$ and $\CNG : V^k  \rightarrow \Sigma^{\lambda}$ of length\footnote{In \cite{ABVW15}, the embeddings have length \emph{at most} $\lambda$. We pad each one with two fresh symbols so that every embedding now has length \emph{exactly} $\lambda$, without affecting the score.} $\lambda = \Oh(nk \log n)$ over an alphabet of size $|\Sigma| = \Oh(1)$, and a constant $C = C(n, k)$ such that for any $X,Y \in V^k$:
\begin{align*}
    & \score(\CLG(X) \circ p(\CNG(Y))^R) = C && \text{if $X \subseteq N(Y)$,}\\
    & \score(\CLG(X) \circ p(\CNG(Y))^R) < C && \text{otherwise.}
\end{align*} 
\lipicsEnd
\end{lemma}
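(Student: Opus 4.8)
The statement is Claim~4 of \cite{ABVW15} together with the cosmetic padding described in the accompanying footnote, so I would prove it in two parts: (i) recall the \cite{ABVW15} construction, which already produces embeddings $\CLG,\CNG$ of length \emph{at most} $\lambda$ over an $\Oh(1)$-size alphabet with exactly the claimed dichotomy, and (ii) pad each image to length \emph{exactly} $\lambda$ without changing any folding value. For (i), the construction is hierarchical. Fixing the enumeration $V=\fragment{1}{n}$, each vertex $v$ is represented by a constant-alphabet encoding of (a suitable variant of) its neighborhood indicator over $\fragment{1}{n}$: the $w$-th coordinate becomes a constant-size sub-gadget, and the index $w$ (together with the identity of $v$) is written in binary, which is precisely what turns the naive length $\Oh(nk)$ over an $\Oh(n)$-size alphabet into length $\Oh(nk\log n)$ over an alphabet of size $\Oh(1)$. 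The left image $\CLG(X)$ is then the concatenation of the encodings of $x_1,\dots,x_k$ wrapped inside a fixed pattern of complementary brackets, and $\CNG(Y)$ is the analogous object for $y_1,\dots,y_k$, designed so that $p(\CNG(Y))^R$ structurally complements $\CLG(X)$. The brackets force any folding of $\CLG(X)\circ p(\CNG(Y))^R$ to pair up the vertex blocks in the intended order, and inside a matched pair of blocks the sub-gadgets for $x_i$ and $y_j$ admit exactly one designated ``bonus'' base pair when $(x_i,y_j)\in E$, on top of a fixed ``backbone'' of base pairs whose size is independent of $X,Y$. We set $C=C(n,k)$ to be the backbone size plus the number of bonus pairs available when all $k^2$ required adjacencies $(x_i,y_j)\in E$ hold.

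The technical heart is the score analysis. The easy direction: if $X\subseteq N(Y)$, then all $k^2$ required adjacencies hold, so the canonical folding realizing the backbone together with all $k^2$ bonus pairs has value exactly $C$. The hard direction: if some required adjacency $(x_i,y_j)$ is missing, then \emph{every} folding of $\CLG(X)\circ p(\CNG(Y))^R$ has value strictly below $C$. This is where the real work lies, and I would follow the uncrossing/exchange argument of \cite{ABVW15}: one shows that any folding can be transformed, without decreasing its size, into a \emph{canonical} one that respects the bracket and block structure; once canonical, a folding's value is exactly the backbone size plus one bonus per present adjacency among the pairs $(x_i,y_j)$, hence $<C$ whenever an edge is absent. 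The only point needing a small amount of care beyond \cite{ABVW15} is that the argument must be re-checked for an arbitrary constant alphabet rather than the particular small one they use, but this is purely a matter of using distinct fresh letters for distinct roles.

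For the padding in (ii): introduce a fresh ``filler'' symbol $f_L$ to be appended to $\CLG$-images and a different fresh filler $f_R$ for $\CNG$-images, choosing $f_L,f_R$ and their paired symbols to be new letters not occurring in any $\CLG(X)$ or $\CNG(Y)$; append copies of $f_L$ (resp.\ $f_R$) until each image has length exactly $\lambda=\Oh(nk\log n)$, the maximum length over all $X,Y$. In $\CLG(X)\circ p(\CNG(Y))^R$ the only new characters are occurrences of $f_L$ and of $p(f_R)$, whose respective complements $p(f_L)$ and $f_R$ are absent from the string, so no filler occurrence can lie on a base pair and $\score(\CLG(X)\circ p(\CNG(Y))^R)$ is unchanged; the alphabet grows by $\Oh(1)$ and so stays $\Oh(1)$. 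The main obstacle throughout is the hard direction of the score analysis — the uncrossing lemma forcing optimal foldings to be canonical — which is exactly the core of \cite{ABVW15}; the length/alphabet bookkeeping and the padding argument are routine.
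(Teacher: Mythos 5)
The paper does not prove this lemma at all: it is imported verbatim from \cite{ABVW15} (Claim~4), and the only original content is the footnote about padding to exact length $\lambda$. Your proposal matches this — you defer the core uncrossing/canonical-folding argument to \cite{ABVW15} exactly as the paper does, and your padding argument (append fresh filler symbols whose complements never occur in $\CLG(X)\circ p(\CNG(Y))^R$, so they cannot participate in any base pair and the score is unchanged) is correct and is precisely what the footnote intends.
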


Thus, these gadgets allow us to determine whether two $k$-cliques are fully connected to each other.
\section{Lower Bounds for Dynamic Unweighted \TED}
\label{sec:unweighted}

In this section, we prove \cref{thm:unweightedted}.
As a first step, we construct gadgets that serve the same purpose as those in \cite{ABVW15}, but are adapted to work under edit distance.

\begin{lemma}\label{cor:clique_gadged}
Let $\bG = (V, E)$ be a graph, and let $k \in \mathbb{Z}_{+}$.
Then, there exist two string embeddings $\CLG : V^k \rightarrow \Sigma^{\lambda_1}$ and $\CNG : V^k  \rightarrow \Sigma^{\lambda_2}$ of lengths $\lambda_1, \lambda_2 = \Oh(nk^3 \log n)$ over a alphabet of size $|\Sigma| = \Oh(1)$, and a constant $C = C(n, k)$ such that for any $X,Y \in V^k$:
\begin{align*}
    & \ed(\CLG(X), \CNG(Y)) = C && \text{if $X \subseteq N(Y)$,}\\
    & \ed(\CLG(X), \CNG(Y)) > C && \text{otherwise.}
\end{align*}
\end{lemma}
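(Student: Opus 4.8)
The goal is to transport \cref{lem:clique_gadged}, which is stated for the \emph{RNA Folding} score of a concatenation $\CLG(X)\circ p(\CNG(Y))^R$, into a statement about the \emph{edit distance} of two separate strings. The natural bridge is the classical identity relating edit distance and longest common subsequence-type quantities: if we build strings so that the only ``useful'' alignment is a non-crossing matching of paired symbols, then minimizing edit distance is the same (up to an additive constant depending only on the lengths) as maximizing the number of matched pairs, which is exactly an RNA folding score. Concretely, I would define $\CLG(X)$ (for the new lemma) to be a lightly modified version of the old $\CLG(X)$, and $\CNG(Y)$ to be built from $p(\CNG(Y))^R$ with its symbols replaced by their ``partners'' so that a character of $\CLG(X)$ matches a character of $\CNG(Y)$ in the edit-distance sense precisely when the corresponding characters of $\CLG(X)$ and $p(\CNG(Y))^R$ form a legal RNA pair. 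Then $\ed(\CLG(X),\CNG(Y)) = \lambda_1 + \lambda_2 - 2\,\lcs(\CLG(X),\CNG(Y))$ if substitutions can be assumed never helpful, and $\lcs(\CLG(X),\CNG(Y))$ equals $\score(\CLG(X)\circ p(\CNG(Y))^R)$ by design.

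**Key steps, in order.** (1) Recall/argue the normalization lemma that for strings over a suitably ``rich'' alphabet (or after a symbol-doubling trick) an optimal edit sequence never performs a substitution, so $\ed(A,B) = |A| + |B| - 2\lcs(A,B)$; alternatively argue directly that the RNA gadgets of \cite{ABVW15} already have the property that mismatched characters can never be ``substituted'' profitably because each symbol class is used in a controlled way. (2) Describe the explicit encoding: take the \cite{ABVW15} alphabet $\Sigma$, introduce for each $\sigma$ a fresh copy, and set $\CLG(X)$ = old-$\CLG(X)$ with each $\sigma$ mapped to one copy, $\CNG(Y)$ = (old-$p(\CNG(Y))^R$) with each $p(\sigma)$ mapped to the matching copy, so that $\CLG(X)[i]$ and $\CNG(Y)[j]$ are ``equal'' iff $\CLG(X)[i]$ and $p(\CNG(Y))^R[j]$ are an RNA pair. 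Note this at most doubles the alphabet, keeping $|\Sigma| = \Oh(1)$. (3) Verify the length bound: the old gadgets have length $\Oh(nk\log n)$, and whatever local padding/doubling we do to kill substitutions and to realize the pairing inflates this by at most a $\poly(k)$ factor, landing at $\Oh(nk^3\log n)$ — I'd be somewhat loose here and just make sure the exponent of $k$ is no worse than $3$. (4) Set $C := \lambda_1 + \lambda_2 - 2C_{\mathrm{RNA}}$ where $C_{\mathrm{RNA}}$ is the constant from \cref{lem:clique_gadged}; then $X \subseteq N(Y)$ gives $\lcs = C_{\mathrm{RNA}}$ hence $\ed = C$, and otherwise $\lcs < C_{\mathrm{RNA}}$ hence $\ed > C$, using that $\lcs$ and $\score$ are integer-valued so strict inequality is preserved.

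**Main obstacle.** The delicate point is step (1): ruling out substitutions. Over a small alphabet it is generally \emph{not} true that $\ed(A,B) = |A|+|B|-2\lcs(A,B)$, so I cannot invoke that identity for free. The standard fix is to interleave a fresh ``spacer'' symbol (not the neutral $\#$ already used, but one that never appears in the other string at the relevant positions) between consecutive characters, or to replace each character by a short block that forces any substitution to be dominated by a deletion+insertion; one must check this is compatible with the non-crossing structure that makes the RNA argument work, i.e.\ that the forced block structure does not create new ``cheap'' alignments that short-circuit the clique test. I expect this to be the bulk of the proof: carefully defining the local encoding and then proving the two-way correspondence between optimal edit-distance alignments of $(\CLG(X),\CNG(Y))$ and optimal foldings of $\CLG(X)\circ p(\CNG(Y))^R$, so that the score gap of $\cite{ABVW15}$ translates into an edit-distance gap of exactly $1$ (or more). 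Everything else — alphabet size, the $\Oh(nk^3\log n)$ length accounting, and setting the constant $C$ — is routine bookkeeping.
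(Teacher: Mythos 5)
Your route is genuinely different from the paper's, but it has a gap that is more fundamental than the substitution issue you flag. Your plan hinges on the claimed identity $\lcs(\CLG(X),\CNG(Y)) = \score(\CLG(X)\circ p(\CNG(Y))^R)$ holding ``by design.'' It does not: the RNA Folding score of a concatenation $A\circ B$ counts not only the non-crossing \emph{cross} pairs between $A$ and $B$ (which indeed correspond to a common subsequence of $A$ and the partner-reverse of $B$) but also pairs folded \emph{internally} within $A$ and within $B$. The gadgets of \cref{lem:clique_gadged} rely essentially on such internal pairings -- the constant $C$ there is attained by foldings in which the non-selected portions of the list/node gadgets absorb each other -- and the crux of the \cite{ABVW15} analysis is precisely the trade-off between internal and cross pairs. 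So a relabeling of the alphabet cannot make the score a pure LCS; you would have to redesign the gadgets so that internal pairs are impossible or contribute a quantity independent of $X,Y$, and at that point you are constructing new edit-distance gadgets from scratch rather than black-boxing \cref{lem:clique_gadged}. Your step (4) also inherits this problem: with internal pairs present, $X\subseteq N(Y)$ need not force $\lcs = C_{\mathrm{RNA}}$, so neither direction of the equivalence follows.

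That is in fact what the paper does. It builds fresh gadgets directly for edit distance: a node gadget $\NG(v)$ of length $(2n-1)(\alpha+1)$ and a list gadget $\LG(u)$ of length $n(\alpha+1)$ whose \emph{length discrepancy} forces a baseline cost $C'=(n-1)(2\alpha+1)$, with equality attained exactly when the binary code $\bar v$ can be perfectly matched inside $\LG(u)$, i.e.\ when $v\in N(u)$. It then concatenates $k^2$ such blocks separated by long runs $\#^\ell$ with $\ell=\lceil k^2(C'+\alpha)/2\rceil$ (whence the $k^3$ in the length bound) and shows the separators must be matched block-to-block, so the total edit distance is the sum of the $k^2$ pairwise distances. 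Notably, the paper never eliminates substitutions; it lower-bounds the cost by counting positions of $\LG(u)$ that are not matched perfectly, so your step (1) -- which you correctly identify as problematic over a constant alphabet -- is sidestepped rather than solved. As written, your proposal does not yield a proof without the substantial additional work of re-deriving the folding/alignment decomposition, which is exactly the content of the paper's argument.
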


\begin{proof}
    In this proof, we use the alphabet $\Sigma \coloneqq \{\zero, \one, \two, \three, \$, \#\}$.
    We set $\alpha = \Oh(\log n)$ to be the number of bits needed to encode values of $V = \fragment{1}{n}$ in binary. Moreover, for a number $x \in V$, we denote with $\bar{x} \in \{\zero, \one\}^{\alpha}$ its binary encoding.

    For nodes $v,u \in V = \fragment{1}{n}$, 
    we define the \emph{node gadget} $\NG(v)$
    and \emph{list gadget} $\LG(v)$ as
    \begin{center}
        $\NG(v) \coloneqq \left( \bigcirc_{i=1}^{n-1} \two^{\alpha} \ \$ \right) \ \bar{v} \ \$ \ \left( \bigcirc_{i=1}^{n-1} \two^{\alpha} \ \$ \right)$
        \quad and \quad  
        $\LG(u) \coloneqq \left( \bigcirc_{w=1}^{n} \ g_u(w) \ \$ \right)$,
    \end{center}
    where $g_u(w) = \bar{w}$ if $w \in N(u)$ and $g_u(w) = \three^{\alpha}$, otherwise.

    \begin{claim}\label{claim:clique_gadget:1}
        Let $C' \coloneqq (n-1)(2\alpha+1)$. Then, for nodes $u,v \in V$, we have:
        \begin{align*}
            & \ed(\NG(v), \LG(u)) = C' && \text{if $v \in N(u)$,}\\
            & \ed(\NG(v), \LG(u)) \in \fragment{C'+1}{C'+\alpha} && \text{otherwise.} 
        \end{align*}
    \end{claim}
    \begin{claimproof}
        We first prove that $\ed(\NG(v), \LG(u)) \leq C' + \alpha$ (and if $v \in N(u)$, then $\ed(\NG(v), \LG(u)) \leq C'$).
        To this end, construct the following (string) alignment $\mA_u$:
        \begin{enumerate}[(a)]
        \item Substitute $g_u(v)$ with $\bar{v}$ (if $v \in N(u)$ then match exactly as $g_u(v) = \bar{v}$);
        \label{it:ed_gadget:a}
        \item Align the $(\alpha+1)(v-1)$ characters coming before $g_u(v)$ and $\bar{v}$ to each other;
        \label{it:ed_gadget:b}
        \item Align the $(\alpha+1)(n -v)$ characters coming after $g_u(v)$ and $\bar{v}$ to each other; and
        \label{it:ed_gadget:c}
        \item delete the remaining characters.
        \label{it:ed_gadget:d}
        \end{enumerate}

         We briefly argue that $\cost(\mA_u) \leq C' +\alpha = (n-1)(2\alpha+1) + \alpha$:
        in \ref{it:ed_gadget:a}, we incur cost at most $\alpha$ (this becomes $0$ if $v \in N(u)$), in \ref{it:ed_gadget:b} and \ref{it:ed_gadget:c}
        we incur cost $\alpha (v - 1)$ and $\alpha (n - v)$, respectively,
        as out of the $(\alpha+1)(v-1)$ and $(\alpha+1)(n -v)$ characters, there are $v-1$ and $n-v$
        perfectly matched $\$$. Finally, in \ref{it:ed_gadget:d} we incur cost $(\alpha+1)(n-1)$,
        as the remaining characters include a length-$(\alpha+1)(n-v)$ prefix and length-$(\alpha +1)(v-1)$ suffix
        of $\NG(u)$.
        
        To conclude the proof, we consider an optimal alignment $\mA$,
        and show that it has cost at least $C'$.
        Furthermore we show that whenever $\mA$ has cost exactly $C'$, then $v \in N(u)$.
        
        To this end, let $m$ be the number of positions $i$ in $\LG(u)$ such that $\LG(u)\position{i}$ is not matched perfectly by $\mA$. Note that $m \geq \alpha(n-1)$, as out of the $n\alpha$ many non-$\$$ characters in $\LG(u)$, no more than $\alpha$ many $\zero/\one$ characters can be matched exactly.
        Moreover, from $\abs{\NG(v)} \geq \abs{\LG(u)}$ follows that the cost of $\mA$ is at least
        \[
            (\abs{\NG(v)} - \abs{\LG(u)}) + m \geq (\alpha+1)(2n-1) - (\alpha+1)n + \alpha(n-1) = (n-1)(2\alpha+1) = C'.
        \]
        Now, if $\mA$ has cost exactly $C'$, this last inequality must be tight and $m = \alpha(n-1)$.
        In particular, this means that $\mA$ matches perfectly all $\$$ characters and $\alpha$ many $\zero/\one$
        characters of $\LG(u)$.
        Observe that these $\alpha$ many $\zero/\one$ characters must come from the same $g_u(w)$, for some $w \in V$.
        (If not, then at least one $\$$ would be deleted, contradicting the fact that all $\$$ of $\LG(v)$ are perfectly matched.)
        The only place where $g_u(w)$ can be matched perfectly to is $\bar{v}$, as it is the only part of $\NG(v)$ having $\zero/\one$ characters. Since $g_u(w)$ and $\bar{v}$ are matched perfectly, we have $g_u(w) = \bar{v}$, which can only happen when $w = v$ and $w \in N(u)$, i.e., $v \in N(u)$.
    \end{claimproof}

    Next, for subsets $X,Y \subseteq V^k$, we define the \emph{clique node gadget} $\CNG(Y)$
    and \emph{clique list gadget} $\CLG(X)$ as 
    \begin{center}
        $\CNG(Y) \coloneqq \bigcirc_{v \in Y} {\left( \ \NG(v) \ \#^{\ell} \ \right)}^k$
        \quad and \quad  
        $\CLG(X) \coloneqq {\left( \bigcirc_{v \in X} \ \LG(v) \ \#^{\ell} \right)}^k$,
    \end{center}
    where $\ell \coloneqq \ceil{k^2 \cdot (C'+\alpha)/2}$.
    Thus, $\lambda_1, \lambda_2$ satisfy $\lambda_1, \lambda_2 = \Oh(k \cdot \ell) = \Oh(nk^3 \log n)$.

    For $i \in \fragment{1}{k^2}$, let $B_{i}$ and $B_i'$ be the $i$-th substrings of the form $\NG(v)$
    and $\LG(v)$ appearing in $\CNG(Y)$ and $\CLG(X)$, respectively. Moreover, let $H_i$ and $H_i'$
    be the substrings $\#^{\ell}$ right after $B_i$ and $B_i'$.
    Lastly, let $\mA$ be an optimal (string) alignment that aligns $\CLG(X)$ onto $\CNG(Y)$
    and maximizes the number of $i,j$ such that $H_i'\position{j}$ is matched to $H_i\position{j}$.

    \begin{claim}\label{claim:clique_gadget:2}
        There are no $i,j \in \fragment{1}{k^2}$ such that $i \neq j$ and $\mA$ aligns any character of $B_i$ and $B'_j$ to each other.
    \end{claim}
    \begin{claimproof}
        For the sake of contradiction, assume there exist indices $i \neq j$ such that a character from $B_i$ matches a character from $B'_j$.
        Observe that $B_i$ is preceded by $(i-1)\ell$ and followed by $(k^2-i-1)\ell$ occurrences of $\#$, while $B_j'$ is preceded by $(j-1)\ell$ and followed by $(k^2-j-1)\ell$ occurrences of $\#$.
        Since $i \neq j$, alignment $\mA$ necessarily incurs a cost of at least $\ell$ both before and after the matched character between $B_i$ and $B'_j$. Hence, $\cost(\mA) \geq 2\ell$.

        Now consider an alternative alignment $\mA'$ which matches every $H_x$ perfectly with $H'_x$, and aligns each $B_x$ optimally with $B'_x$ for all $x \in \fragment{1}{k^2}$.
        By \cref{claim:clique_gadget:2}, the cost of such an alignment satisfies $\cost(\mA') \leq k^2 \cdot (C' + \alpha) \leq \cost(\mA)$. This contradicts the assumed optimality of $\mA$ (and if $\cost(\mA') = \cost(\mA)$, then $\mA'$ is strictly better in terms of the number of pairs $(x,y)$ where $H'_x\position{y}$ is matched to $H_x\position{y}$).
    \end{claimproof}

    \begin{claim}\label{claim:clique_gadget:3}
        For all $i \in \fragment{1}{k^2}$, $H_i$ and $H_i'$ are perfectly matched by $\mA$.
    \end{claim}
    \begin{claimproof}
        For each $i \in \fragment{1}{k^2}$, let $c_i$ denote the number of characters in $H_i$ that are deleted or substituted in the alignment $\mA$, and define $c_i'$ analogously for $H_i'$. Note that $\sum_i c_i = \sum_i c_i'$, since they are the only substrings that contain the character $\#$.

        Now, consider modifying $\mA$ to $\mA'$ by perfectly matching each $H_i$ to $H_i'$, and deleting any characters that were previously involved in substitutions with characters in $H_i$. By \cref{claim:clique_gadget:2}, we have that $\mA'$ is still a valid alignment.
        
        Let $s$ denote the number of substitutions that are removed in this process. Then the change in cost is $\cost(\mA') - \cost(\mA) = s - \sum_i c_i - \sum_i c_i'$. Since $s \leq \sum_i c_i + \sum_i c_i'$, this change is not positive, so also $\mA'$ must be optimal.
        By our choice of $\mA$, the number of pairs $(i, j)$ such that $B_i[j]$ is matched to $B_i'[j]$ cannot increase. Therefore, if such number remains unchanged then $\sum_i c_i = 0$, which implies that $\mA$ already matches each $H_i$ perfectly with $H_i'$.
    \end{claimproof}
    Now observe that \cref{claim:clique_gadget:3} implies
    \[
        \ed(\CLG(X), \CNG(Y)) = \sum_i \ed(B_i, B_i') = \sum_{v \in Y, u \in X} \ed(\CG(v), \LG(u)).
    \]
    Since $X \subseteq N(Y)$ (or equivalently $Y \subseteq N(X)$) holds if and only if $v \in N(u)$ for all $u \in X$, $v \in Y$, it follows from \cref{claim:clique_gadget:1} that we can set $C \coloneqq C' k^2$ to conclude the proof.
\end{proof}

Next, we present the trees in our lower-bound construction.
As part of our construction, we use the label set $\Sigma$ from \cref{cor:clique_gadged} and introduce four additional labels: $\S$, $\$$, and $\#_\ell$ for $\ell \in \{\ttL, \ttR\}$. We assume these new labels are disjoint from $\Sigma$.
Further, in our lower-bound instance we use the gadgets $\CLG, \CNG$ from \cref{cor:clique_gadged} of lengths $\lambda_1$ and $\lambda_2$, respectively.
We set $\lambda \coloneqq \max(\lambda_1, \lambda_2)$.

\begin{definition}\label{def:instance}
    Let $\bG = (V, E)$ be a graph, let $k \in \mathbb{Z}_{+}$,
    and let $\mX, \mY$, and $\mZ$ all denote the set of $k$-cliques in $\bG$. Suppose $|\mX| = |\mY| = |\mZ| = N$. For $Z \in \mZ$ define the two trees $\bT_\mX(Z)$ and $\bT_\mY'(Z)$ as follows:
    \begin{enumerate}
    \item $\bT_\mX(Z)$ is obtained by attaching the following nodes
    to $\bP(\$ \circ (\ \bigcirc_{X \in \mX} \ \CLG(X) \ ) \circ \S^{16\lambda N + \lambda + 1})$:
    \label{def:instance:1}
    \begin{enumerate}[(a)]
        \item We left attach $\#_\ttL^{5\lambda N}$ to $\$$,
        \label{def:instance:1a}
        \item For each $X \in \mX$, we right attach $\#_\ttR^{4\lambda} \circ \CNG(X)$ to the last node of $\CLG(X)$ on the path, and \label{def:instance:1b}
        \item We left attach $\CNG(Z) \circ \#_\ttL^{5\lambda N}$ 
        to the last node belonging to any $\CLG(X)$ for $X \in \mX$ (so to the spine node right before the first node with label $\S$).
        \label{def:instance:1c}
    \end{enumerate}
    \item $\bT_\mY'(Z)$ is obtained by attaching the following nodes
    to $\bP(\$ \circ (\ \bigcirc_{Y \in \mY} \ \CNG(Y) \ ) \circ \S^{16\lambda N + \lambda + 1})$:
    \label{def:instance:2}
    \begin{enumerate}[(a)]
        \item We right attach $\#_\ttR^{5\lambda N}$ to $\$$, 
        \label{def:instance:2a}
        \item For each $Y \in \mY$, we left attach $\CLG(Y) \circ \#_\ttL^{5\lambda}$ to the last node of $\CNG(Y)$ on the path, and
        \label{def:instance:2b}
        \item We right attach $\#_\ttR^{5\lambda N} \circ \CLG(Z)$ 
        to the last node belonging to any $\CNG(Y)$ for $Y \in \mY$ (so to the spine node right before the first node with label $\S$).
        \label{def:instance:2c} \qedhere
    \end{enumerate}
    \end{enumerate}
\end{definition}

Refer to \cref{fig:unweighted_instance} for a visualization of \( \bT_\mX(Z) \) and \( \bT_\mY'(Z) \).
We remark that in \( \bT_\mX(Z) \) and \( \bT_\mY'(Z) \), each gadget $\mathrm{G}(W)$ for \( \mathrm{G} \in \{\CLG, \CNG\} \) and \( W \in \mX \cup \mY \cup \{Z\} \) appears exactly once in each tree. Therefore, by a slight abuse of notation, we will also use \( \mathrm{G}(W) \) to refer to the set of nodes comprising that gadget.

\begin{figure*}[htbp]
   \centering
   \begin{tikzpicture}[]

\begin{scope}[scale=0.85, every node/.append style={transform shape}]
    \node at (0, 0) {$\$$};
    \draw[thick, |-|] (0, -0.2) -- node[fill=white, rotate=-90] {$\CLG(X_1)$} (0, -3);
    \draw[thick, |-|] (0, -3) -- node[fill=white, rotate=-90] {$\CLG(X_2)$} (0, -6);
    \draw[thick, dashed] (0, -6) -- (0, -7);
    \draw[thick, |-|] (0, -7) -- node[fill=white, rotate=-90] {$\CLG(X_{i})$} (0, -10);
    \draw[thick, dashed] (0, -10) -- (0, -11);
    \draw[thick, |-|] (0, -11) -- node[fill=white, rotate=-90] {$\CLG(X_{N})$} (0, -14);
    \draw[thick, |-|] (0, -14) -- node[fill=white, rotate=-90] {$\S^{16\lambda N + \lambda + 1}$} (0, -18);

    \draw[thick, |-|] (-4,-0.5) -- node[fill=white] {$\#_{\ttL}^{5\lambda N}$} (-0.5,-0.5);
    \draw[fill, gray, fill opacity=0.2, opacity=0.2] (0,0) -- (-4,-0.5) -- (-0.5,-0.5) -- (0,0);
    
    \draw[thick, |-|] (0.5, -3.5) -- node[fill=white] {$\#_{\ttR}^{4\lambda}$} (2, -3.5);
    \draw[thick, |-|] (2,-3.5) -- node[fill=white] {$\CNG(X_1)$} (4,-3.5);
    \draw[fill, gray, fill opacity=0.2, opacity=0.2] (0, -3) -- (0.5, -3.5) -- (4,-3.5) -- (0, -3);
    
    \draw[thick, |-|] (0.5, -6.5) -- node[fill=white] {$\#_{\ttR}^{4\lambda}$} (2, -6.5);
    \draw[thick, |-|] (2,-6.5) -- node[fill=white] {$\CNG(X_2)$} (4,-6.5);
    \draw[fill, gray, fill opacity=0.2, opacity=0.2] (0, -6) -- (0.5, -6.5) -- (4,-6.5) -- (0, -6);
    
    \draw[thick, |-|] (0.5, -10.5) -- node[fill=white] {$\#_{\ttR}^{4\lambda}$} (2, -10.5);
    \draw[thick, |-|] (2,-10.5) -- node[fill=white] {$\CNG(X_i)$} (4,-10.5);
    \draw[fill, gray, fill opacity=0.2, opacity=0.2] (0, -10) -- (0.5, -10.5) -- (4,-10.5) -- (0, -10);
    
    \draw[thick, |-|] (0.5, -14.5) -- node[fill=white] {$\#_{\ttR}^{4\lambda}$} (2, -14.5);
    \draw[thick, |-|] (2,-14.5) -- node[fill=white] {$\CNG(X_N)$} (4,-14.5);
    \draw[fill, gray, fill opacity=0.2, opacity=0.2] (0, -14) -- (0.5, -14.5) -- (4,-14.5) -- (0, -14);
    
    \draw[thick, |-|] (-4,-14.5) -- node[fill=white] {$\CNG(Z)$} (-2,-14.5);
    \draw[thick, |-|] (-2,-14.5) -- node[fill=white] {$\#_{\ttL}^{5\lambda N}$} (-0.5,-14.5);
    \draw[fill, gray, fill opacity=0.2, opacity=0.2] (0,-14) -- (-4,-14.5) -- (-0.5,-14.5) -- (0,-14);
\end{scope}

\begin{scope}[shift={(8, 0)}, scale=0.85, every node/.append style={transform shape}]
    \node at (0, 0) {$\$$};
    \draw[thick, |-|] (0, -0.2) -- node[fill=white, rotate=-90] {$\CNG(Y_1)$} (0, -3);
    \draw[thick, |-|] (0, -3) -- node[fill=white, rotate=-90] {$\CNG(Y_2)$} (0, -6);
    \draw[thick, dashed] (0, -6) -- (0, -7);
    \draw[thick, |-|] (0, -7) -- node[fill=white, rotate=-90] {$\CNG(Y_{i})$} (0, -10);
    \draw[thick, dashed] (0, -10) -- (0, -11);
    \draw[thick, |-|] (0, -11) -- node[fill=white, rotate=-90] {$\CNG(Y_{N})$} (0, -14);
    \draw[thick, |-|] (0, -14) -- node[fill=white, rotate=-90] {$\S^{16\lambda N + \lambda + 1}$} (0, -18);

    \draw[thick, |-|] (4,-0.5) -- node[fill=white] {$\#_{\ttR}^{5\lambda N}$} (0.5,-0.5);
    \draw[fill, gray, fill opacity=0.2, opacity=0.2] (0,0) -- (4,-0.5) -- (0.5,-0.5) -- (0,0);
    
    \draw[thick, |-|] (-0.5, -3.5) -- node[fill=white] {$\#_{\ttL}^{4\lambda}$} (-2, -3.5);
    \draw[thick, |-|] (-2,-3.5) -- node[fill=white] {$\CLG(Y_1)$} (-4,-3.5);
    \draw[fill, gray, fill opacity=0.2, opacity=0.2] (0, -3) -- (-0.5, -3.5) -- (-4,-3.5) -- (0, -3);
    
    \draw[thick, |-|] (-0.5, -6.5) -- node[fill=white] {$\#_{\ttL}^{4\lambda}$} (-2, -6.5);
    \draw[thick, |-|] (-2,-6.5) -- node[fill=white] {$\CLG(Y_2)$} (-4,-6.5);
    \draw[fill, gray, fill opacity=0.2, opacity=0.2] (0, -6) -- (-0.5, -6.5) -- (-4,-6.5) -- (0, -6);
    
    \draw[thick, |-|] (-0.5, -10.5) -- node[fill=white] {$\#_{\ttL}^{4\lambda}$} (-2, -10.5);
    \draw[thick, |-|] (-2,-10.5) -- node[fill=white] {$\CLG(Y_i)$} (-4,-10.5);
    \draw[fill, gray, fill opacity=0.2, opacity=0.2] (0, -10) -- (-0.5, -10.5) -- (-4,-10.5) -- (0, -10);
    
    \draw[thick, |-|] (-0.5, -14.5) -- node[fill=white] {$\#_{\ttL}^{4\lambda}$} (-2, -14.5);
    \draw[thick, |-|] (-2,-14.5) -- node[fill=white] {$\CLG(Y_N)$} (-4,-14.5);
    \draw[fill, gray, fill opacity=0.2, opacity=0.2] (0, -14) -- (-0.5, -14.5) -- (-4,-14.5) -- (0, -14);
    
    \draw[thick, |-|] (4,-14.5) -- node[fill=white] {$\CLG(Z)$} (2,-14.5);
    \draw[thick, |-|] (2,-14.5) -- node[fill=white] {$\#_{\ttR}^{5\lambda N}$} (0.5,-14.5);
    \draw[fill, gray, fill opacity=0.2, opacity=0.2] (0,-14) -- (4,-14.5) -- (0.5,-14.5) -- (0,-14);
\end{scope}

\end{tikzpicture}
   \caption{
       The trees $\bT_\mX(Z)$ and $\bT_\mY'(Z)$ depicted on the left and right, respectively.
    }
   \label{fig:unweighted_instance}
\end{figure*}

\begin{theorem}\label{thm:unweighted_embedding}
    We have  $\ted(\bT_\mX(Z), \bT_\mY'(Z))$ equals to
    \[
        \min_{Y \in \mY,X \in \mX} \big( \ \ed(\CLG(X), \CNG(Y)) + \ed(\CNG(X), \CLG(Z)) + \ed(\CLG(Y), \CNG(Z)) \ \big) + D,
    \]
    for $D = 12\lambda N - \lambda_2 N - \lambda_1 N - \lambda_2 - \lambda_1$.
\end{theorem}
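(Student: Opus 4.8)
The plan is to prove the claimed identity as two matching inequalities: a ``$\le$'' direction by exhibiting, for every pair $(X,Y)\in\mX\times\mY$, an explicit tree-alignment of $\bT_\mX(Z)$ onto $\bT_\mY'(Z)$ of cost exactly $\ed(\CLG(X),\CNG(Y))+\ed(\CNG(X),\CLG(Z))+\ed(\CLG(Y),\CNG(Z))+D$, and a ``$\ge$'' direction by dissecting an optimal alignment.

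\textbf{The upper bound.} Fixing $X\in\mX$ and $Y\in\mY$, I would assemble the alignment from four ingredients. First, both trees carry the identical long spine-suffix $\S^{16\lambda N+\lambda+1}$; align these node for node at zero cost, which puts the two spines ``in register''. Second, since $\mX$, $\mY$, and $\mZ$ all enumerate the same set of $k$-cliques, every gadget appearing inside $\bT_\mX(Z)$ has a syntactically identical copy of the opposite type inside $\bT_\mY'(Z)$; match each such copy-pair for free, except for the three ``active'' pieces described next. Third, align $\CLG(X)$ (which lies on the spine of $\bT_\mX(Z)$) with $\CNG(Y)$ (on the spine of $\bT_\mY'(Z)$), align the right pendant $\CNG(X)$ of $\bT_\mX(Z)$ with the bottom pendant $\CLG(Z)$ of $\bT_\mY'(Z)$, and align the left pendant $\CLG(Y)$ of $\bT_\mY'(Z)$ with the bottom pendant $\CNG(Z)$ of $\bT_\mX(Z)$; each of these six pieces is linearly arranged (a path, or a run of consecutive leaf siblings), so an optimal string alignment between the corresponding strings lifts verbatim to a tree-alignment of equal cost, and these three lifts contribute exactly the three $\ed$-terms. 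Fourth, delete and insert everything that remains --- the $\#_\ttL$- and $\#_\ttR$-blocks and the few gadget copies whose twin got re-used as an active piece. I would then check that this pairing satisfies the ancestor/pre-order conditions of \cref{def:tree-alignment}; the point is that $\#_\ttL$-material sits strictly left of, and $\#_\ttR$-material strictly right of, the registered spine, so no left-of-spine node is ever aligned against a right-of-spine node. Finally, the residual delete/insert cost from the fourth step is the same constant $D=12\lambda N-\lambda_2N-\lambda_1N-\lambda_2-\lambda_1$ for all $X,Y$ --- this is pure bookkeeping, and it is what the block lengths $4\lambda$, $5\lambda$, $5\lambda N$ and the $\S$-suffix length were tuned to achieve. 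Summing and minimizing over $(X,Y)$ gives ``$\le$''.

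\textbf{The lower bound.} Let $\mA$ be an optimal tree-alignment. The first step is an exchange argument: $\mA$ must match a large part of the two $\S$-suffixes to each other, since leaving $\S$-nodes unmatched to $\S$-nodes costs, in the $\S$-region alone, strictly more than the value $\min_{X,Y}(\cdots)+D\le 12\lambda N+O(\lambda)$ attained by the alignments above, and any $\mA$ violating this can be locally rewritten to do better; this fixes the relative shift of the two spines. Once the spines are registered, the pre-order condition of \cref{def:tree-alignment} forbids pairing a left-of-spine node of one tree with a right-of-spine node of the other, so --- up to a bounded number of substitutions --- $\#_\ttL$-material meets only $\#_\ttL$-material and $\#_\ttR$-material meets only $\#_\ttR$-material. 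The second step, which is the heart of the argument, combines this side constraint with the block lengths and the crude bound that any two gadget strings have edit distance $O(\lambda)$ (their lengths being $\lambda_1,\lambda_2=O(\lambda)$ by \cref{cor:clique_gadged}) to show: $\mA$ can afford to non-trivially align at most one spine-$\CLG$ of $\bT_\mX(Z)$ against at most one spine-$\CNG$ of $\bT_\mY'(Z)$ --- call them $\CLG(X^\ast)$ and $\CNG(Y^\ast)$ --- and, given this, the only way to avoid an additional super-$D$ charge on the $\#$-blocks is to align the pendant $\CNG(X^\ast)$ with $\CLG(Z)$ and the pendant $\CLG(Y^\ast)$ with $\CNG(Z)$, with every other gadget copy either matched to its identical twin or deleted/inserted. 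The third step then records that the contribution of each of these three aligned pairs is at least the corresponding edit distance, and the deleted/inserted remainder costs exactly $D$, so $\cost(\mA)\ge\ed(\CLG(X^\ast),\CNG(Y^\ast))+\ed(\CNG(X^\ast),\CLG(Z))+\ed(\CLG(Y^\ast),\CNG(Z))+D\ge\min_{X,Y}(\cdots)+D$.

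\textbf{The main obstacle.} The hard part is the second step of the lower bound: excluding every ``cheating'' alignment --- one that smears several gadgets against each other, shifts the spines by a mismatched amount, or spends on a gadget against a $\#$-block. This is a finicky case analysis, and it is precisely what the asymmetric placement of $\#_\ttL$ versus $\#_\ttR$ and the carefully chosen multiplicities ($4\lambda$, $5\lambda$, $5\lambda N$, all far larger than $\lambda$ and hence than any single gadget edit distance) are engineered to control: every deviation from the intended routing is forced to destroy an entire $\#$-block, whose length already exceeds the slack that $D$ leaves. By contrast, once the intended alignment is written out, the ``$\le$'' direction is a routine computation.
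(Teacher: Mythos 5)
Your lower-bound outline follows the paper's strategy in spirit (register the spines via the $\S$-suffix, use the side separation forced by \cref{def:tree-alignment}, and let the $\#$-block lengths pin down a single active gadget index), but your upper-bound construction — which both directions of your argument ultimately lean on for the value $D$ — is broken. The step ``every gadget appearing inside $\bT_\mX(Z)$ has a syntactically identical copy of the opposite type inside $\bT_\mY'(Z)$; match each such copy-pair for free'' is not a legal tree-alignment. The identical copy of a spine gadget of one tree sits as a \emph{pendant} in the other: $\CLG(X)$ lies on the spine of $\bT_\mX(Z)$ but is left-attached in $\bT_\mY'(Z)$, and $\CNG(X)$ is right-attached in $\bT_\mX(Z)$ but lies on the spine of $\bT_\mY'(Z)$. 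A spine node is an ancestor of every $\S$-node below it while a pendant leaf is not, so once you match the $\S$-suffixes node for node (your first ingredient), the ancestor condition of \cref{def:tree-alignment} forbids aligning any spine node of one tree with any attached node of the other. These ``free'' matches therefore cannot coexist with the registration of the spines, and dropping that registration costs $2(16\lambda N+\lambda+1)$ for the $\S$-nodes alone, which is far worse.

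This is not a repairable bookkeeping slip: the value $D\approx 10\lambda N$ is achieved precisely because the intended alignment \emph{pays in full for every inactive gadget} — the inactive $\CNG(X_{a'})$ are substituted character-for-character against the $\#_\ttR^{5\lambda N}$ blocks of \cref{def:instance}(2a)/(2c), the inactive $\CLG(Y_{b'})$ against the $\#_\ttL$ blocks, the inactive spine gadgets are deleted outright — while large portions of the $\#$-blocks are \emph{matched to each other} rather than deleted. Under your accounting (inactive gadgets free, all $\#$-blocks deleted) the residual would be roughly $29\lambda N$ worth of $\#$-deletions minus nothing, which is neither $D$ nor attainable. You need to replace your second and fourth ingredients with this substitution-against-$\#$ routing and redo the cost computation; only then does the matching lower bound (whose hard case analysis you have deferred) have the correct target to hit.
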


\begin{proof}
    Throughout this proof, we abbreviate \( \bT_\mX(Z) \) and \( \bT_\mY'(Z) \) as \( \bT \) and \( \bT' \), respectively. Additionally, let \( X_1, \ldots, X_N \) denote the sets in \( \mX \) such that \( \CLG(X_i) \) appears on the spine above \( \CLG(X_{i+1}) \) in \( \bT \), and let $\bL$ (resp. $\bR$) be the nodes on the left (resp. right) of the spine of $\bT$.
    We define \( Y_1, \ldots, Y_N\) and \(\bL',\bR'\) analogously for \( \mY \) and \( \bT' \).

    Central in this proof is the alignment \( \mA_{a,b} \) of \( \bT \) onto \( \bT' \), structured for \( a, b \in \fragment{1}{N} \) as follows:
    \begin{enumerate}[(i)]
             \item The roots and the last $16\lambda N + \lambda + 1$ nodes of the spines of $\bT,\bT'$ are perfectly matched. \label{it:c_unweighted_embedding_1}
            \item The three pairs of gadgets $\CLG(X_a),\CNG(Y_b)$ and $\CNG(X_a),\CLG(Z)$ and $\CLG(Y_b),\CNG(Z)$ are aligned as the corresponding strings would be under string \ED. \label{it:c_unweighted_embedding_2}
            \item Match or substitute the nodes of $\bR \setminus \CNG(X_a)$ with nodes labeled $\#_{\ttR}$ as follows.
            The ones above the gadget $\CLG(X_a)$ with the nodes of \ref{def:instance:2}\ref{def:instance:2a}, while those below with the nodes of \ref{def:instance:2}\ref{def:instance:2c}. Note that $|\bR| = 5\lambda N$, so all nodes of $\bR \setminus \CNG(X_a)$ can be matched or substituted as described, as in \ref{def:instance:2}\ref{def:instance:2a} and \ref{def:instance:2}\ref{def:instance:2c} we add enough to do so.
            \label{it:c_unweighted_embedding_3}
            \item Similarly, match or substitute the nodes of $\bL' \setminus \CLG(Y_b)$ with nodes labeled $\#_{\ttL}$ like so:
            The ones above the gadget $\CLG(Y_b)$ with the nodes of \ref{def:instance:1}\ref{def:instance:1a}, while those below with the nodes of \ref{def:instance:1}\ref{def:instance:1c}.
            \label{it:c_unweighted_embedding_4}
            \item All remaining nodes are deleted.
            \label{it:c_unweighted_embedding_5}
        \end{enumerate}
        
    \begin{claim}
        The alignment $\mA_{a,b}$ has cost 
        \[
           \ed(\CLG(X_a), \CNG(Y_b)) + \ed(\CNG(X_a), \CLG(Z)) + \ed(\CLG(Y_b), \CNG(Z)) + D.
        \]
    \end{claim}
    \begin{claimproof}
        We show that the cost claimed above can be charged to the the construction steps of $\mA_{a,b}$ as:
        \begin{align*}
            &\underbrace{0}_{\text{\ref{it:c_unweighted_embedding_1}}} 
            + \underbrace{\ed(\CLG(X_a), \CNG(Y_b)) + \ed(\CNG(X_a), \CLG(Z)) + \ed(\CLG(Y_b), \CNG(Z))}_{\text{\ref{it:c_unweighted_embedding_2}}} \\
            &+ \underbrace{\lambda_1(N-1) + \lambda_2(N-1)}_{\text{\ref{it:c_unweighted_embedding_3} and \ref{it:c_unweighted_embedding_4}}}
            + \underbrace{\lambda_2(N-1) + \lambda_1(N-1) + 6\lambda N - \lambda_1 N + \lambda_1 + 6\lambda N - \lambda_2 N + \lambda_2}_{\text{\ref{it:c_unweighted_embedding_5}}}.
        \end{align*}
        The cost of steps \ref{it:c_unweighted_embedding_1} and \ref{it:c_unweighted_embedding_2} does not need any proof.
        
        For step \ref{it:c_unweighted_embedding_3}, note that in $\bR \setminus \CNG(X_a)$ there are exactly $\lambda_1 (N - 1)$ nodes without the label $\#_\ttR$ (that is, all $\CLG(X_{a'})$ for $a \neq a'$). These nodes are substituted and incur a cost of $\lambda_1 (N - 1)$. The other nodes in $\bR \setminus \CNG(X_a)$ are matched exactly, and incur no cost.
        Following a similar reasoning, in \ref{it:c_unweighted_embedding_4} we incur cost $\lambda_2 (N - 1)$.
    
        The deleted nodes of \ref{it:c_unweighted_embedding_5} amount to:
        \begin{itemize}
            \item All $\CLG(X_{a'}), \CNG(Y_{b'})$ for $a \neq a'$ and $b \neq b'$ get deleted, incurring $\lambda_2(N-1) + \lambda_1(N-1)$ in cost.
            \item Out of the overall $10\lambda N$ nodes labeled $\#_{\ttR}$ from $\bR'$, only $4\lambda N + \lambda_1(N-1)$ are matched or substituted in \ref{it:c_unweighted_embedding_3}, so the remaining $10\lambda N - (4\lambda N + \lambda_1(N-1)) = 6\lambda N - \lambda_1 N + \lambda$ get deleted, incurring $6\lambda N - \lambda_1 N + \lambda_1$ in cost.
            \item Similar argument holds for the nodes labeled $\#_{\ttL}$ from $\bL'$, incurring $6\lambda N - \lambda_2 N + \lambda_2$ in cost. \claimqedhere
        \end{itemize}
    \end{claimproof}
        
    In order to prove \cref{thm:unweighted_embedding}, it suffices to show that any optimal alignment \( \mA \) transforming \( \bT \) into \( \bT' \) has cost at least as high as that of some alignment \( \mA_{a,b} \). To do so, we will demonstrate that \( \mA \) satisfies all construction steps, \ref{it:c_unweighted_embedding_1} through \ref{it:c_unweighted_embedding_5}, for some choice of \( a, b \in \fragment{1}{N} \).

    \begin{claim}\label{clm:unweighted_embedding:1} 
        For an optimal alignment $\mA$, the nodes on the spine of $\bT$ must be matched or substituted by $\mA$ with nodes on the spine of $\bT'$, and similarly, the nodes of $\bR$ (resp. $\bL$) must be matched or substituted by $\mA$ with the nodes of $\bR'$ (resp. $\bL'$).
    \end{claim}
    \begin{claimproof}
        It suffices to argue that $\mA$ matches at least a pair of nodes labeled $\S$ with each other, then by the rules of tree alignments, the structure specified in the claim follows. To this end, suppose for the sake of contradiction that every node labeled $\S$ is deleted or substituted, each incurring a cost of at least 1 and resulting in a total cost of at least $2(16\lambda N + \lambda + 1)$. By matching the roots and all nodes labeled $\S$ with each other and deleting the remaining nodes, we incur cost $4 \cdot 5\lambda N + \lambda_2 + \lambda_1 + 2\cdot 4 \lambda N + 2 \lambda_2 N + 2\lambda_1 N < 2(16\lambda N + \lambda + 1)$, which contradicts the optimality of $\mA$.
    \end{claimproof}

    By \Cref{clm:unweighted_embedding:1}, we may assume that $\mA$ satisfy the three following:
    \begin{enumerate}[(a)]
        \item Let $v,v' \in \bT$ be two spine nodes such that $v$ is an ancestor of $v'$, and such that $v$ and $v'$ but none of $(\sub(v) \setminus \{v\}) \setminus \sub(v')$, are matched or substituted by $\mA$. Then, none of the spine nodes between $v$ and $v'$ share the same label as $v$.
        \item Let $v,v'\in \bL$ be two nodes such that $v$ occurs in the pre-order of $\bT$ before $v'$,
        and such that $v$ and $v'$ but none of the nodes in the pre-order between $v$ and $v'$, are matched or substituted by $\mA$. Then, none of the nodes in the pre-order between $v$ and $v'$ share the same label as $v$.
        \item Similarly, let $v,v' \in \bR$ be nodes such that $v$ occurs in the post-order of $\bT$ after $v'$,
        and such that $v$ and $v'$ but none of the nodes in the post-order between $v$ and $v'$, are matched or substituted by $\mA$. Then, none of the nodes in the post-order between $v$ and $v'$ share the same label as $v$. 
    \end{enumerate}

    Whenever $\mA$ has this form, we say $\mA$ \emph{favors matching downwards}.
    Note, if $\mA$ does not satisfy this structural form, then it can be modified (iteratively) so that it does, without changing its cost.
    
    \begin{claim}\label{clm:unweighted_embedding:2} 
        The roots labeled $\$$, as well as the last $16\lambda N + \lambda + 1$ spine nodes labeled $\S$, in $\bT$ and $\bT'$ are matched with each other by $\mA$.\
        That is, construction step \ref{it:c_unweighted_embedding_1} is satisfied.
    \end{claim}
    \begin{claimproof}
        Clearly, the optimal alignment must match the two roots with each other.

        Suppose, for contradiction, that for some \( i \in \fragment{1}{16\lambda N + \lambda + 1} \), the \( i \)-th last node on the spine of \( \bT \) and the $i$-th last node on the spine of \( \bT' \) are not matched with each other. Let \( i \) be the smallest such index. In the situation where either (1) both are deleted; or (2) exactly one is substituted with some other node whose label is not $\S$, we can further reduce the cost of $\mA$ by matching these two nodes together, contradicting the optimality of \( \mA \).
        Thus, at most one of the two nodes is substituted with another node labeled \( \S \), but this contradicts our earlier assumption that \( \mA \) favors matching downwards.
    \end{claimproof}

    Next, consider the largest $i \in \fragment{1}{N}$ such that a node right-attached to $\CLG(X_{i-1})$ is matched with a node labeled \( \#_{\ttR} \) from \ref{def:instance:2}\ref{def:instance:2a} (if $i = 1$ then such a node does not exist).
    Moreover, consider the smallest $i' \in \fragment{1}{N}$ such that a node right-attached to $\CLG(X_{i'})$ is matched with a node labeled \( \#_{\ttR} \) from \ref{def:instance:2}\ref{def:instance:2c}. Note that $i'$ is well-defined.
    Indeed, in $\mA$ the right-attached nodes to $\CLG(X_N)$ with label $\#_{\ttR}$ must match with nodes from \ref{def:instance:2}\ref{def:instance:2c} because $\mA$ favors matching downwards. Symmetrically, we define $j,j'$ indexing $Y_1, \ldots, Y_N$.

    In the remainder of the proof, we show that \( \mA \) satisfies \ref{it:c_unweighted_embedding_2}, \ref{it:c_unweighted_embedding_3}, \ref{it:c_unweighted_embedding_4}, and \ref{it:c_unweighted_embedding_5}, for \( a = i \) and \( b = j \).
    \begin{claim}\label{clm:unweighted_embedding:3} 
        The indices $i,i'$ satisfy \( i' \geq i \).
    \end{claim}
    \begin{claimproof}
        First, note that since \( \mA \) is a valid tree alignment, we must have \( i' \geq i - 1 \). To eliminate the remaining case \( i' = i - 1 \), for the sake of contradiction, suppose \( i' = i - 1 \), which we can only have if \( i > 1 \).

        Observe that the condition \( i > 1 \) implies that among the \( 4\lambda N + \lambda_1 N \) nodes in \( \bR \), there exists a node \( v \in \bR \) that is not matched with a node labeled \( \#_{\ttR} \) from \ref{def:instance:2}\ref{def:instance:2c}, but instead is matched with a node \( w \in \bR' \) from \ref{def:instance:2}\ref{def:instance:2a}. Let \( w \) be the first such matched node encountered in the pre-order traversal of \( \bT' \).

        As a result, there must also exist a node \( u \in \bR' \) labeled \( \#_{\ttR} \) from \ref{def:instance:2}\ref{def:instance:2c} that is neither matched nor substituted. By the assumption that \( \mA \) favors downward matches, we may assume \( u \) is the last node visited in the pre-order traversal of \( \bT' \).
        
        Finally, observe that the assumption \( i' = i - 1 \) implies that \( \CLG(Z) \) must be deleted. Since no nodes between \( w \) and \( u \) are matched or substituted, this contradicts the assumption that \( \mA \) favors matching downward.
    \end{claimproof}

    \begin{claim}\label{clm:unweighted_embedding:4} 
        The indices $i,i'$ satisfy \( i = i' \).
    \end{claim}
    \begin{claimproof}
        Suppose, for contradiction, that \( i' > i \). Then all of the \( 4\lambda \) nodes labeled \( \#_{\ttR} \), which are right-attached to \( \CLG(X_{i'-1}) \), are deleted. This implies that also at least \( 4\lambda \) nodes labeled \( \#_{\ttR} \) from \ref{def:instance:2}\ref{def:instance:2c} must be deleted.
        By the assumption that \( \mA \) favors downward matching, we may assume these deletions occur among the last \( 4\lambda \) nodes visited in the pre-order traversal of \( \bT' \).

        We now modify \( \mA \) as follows: we match these \( 4\lambda \) last-visited nodes with the \( 4\lambda \) right-attached \( \#_{\ttR} \)-labeled nodes of \( \CLG(X_{i'-1}) \), and instead delete all nodes from \( \CLG(X_{i'-1}) \), \( \CNG(X_{i'-1}) \), as well as any nodes previously matched to the deleted gadgets (at most \( 2\lambda_2 + 2\lambda_1 \) nodes in total). This modification reduces the overall cost of $\mA$ by at least $8\lambda - 2\lambda_2 + 2\lambda_1 \geq 4\lambda > 0$, contradicting its optimality.
    \end{claimproof}

    It is now not difficult to see, by \cref{clm:unweighted_embedding:1} and \cref{clm:unweighted_embedding:4}, that all nodes in \( \bR \setminus \CNG(X_i) \) are matched to nodes labeled \( \#_{\ttR} \) from \ref{def:instance:2}\ref{def:instance:2a} and \ref{def:instance:2}\ref{def:instance:2c}, respectively, as specified in step \ref{it:c_unweighted_embedding_3}. By symmetry, we have \( j = j' \), and construction step \ref{it:c_unweighted_embedding_4} is satisfied as well.
    Moreover, by the optimality of \( \mA \), the three pairs of gadgets \( \CLG(X_i), \CNG(Y_j) \), \( \CNG(X_i), \CLG(Z) \), and \( \CLG(Y_j), \CNG(Z) \) must be aligned in the same way their corresponding strings would be under standard string $\ED$, thus satisfying step \ref{it:c_unweighted_embedding_2}. All remaining nodes must be deleted, satisfying step \ref{it:c_unweighted_embedding_5}.
\end{proof}

This allows us to conclude this section with the dynamic lower bound.

\unweightedted

\begin{proof}
    Let $k$ be a sufficiently large constant to be determined later. We show that an algorithm for unweighted \DTED, as described in the statement, can be used to construct a combinatorial algorithm for detecting a $3k$-clique with running time $\Oh(n^{3k - \varepsilon'})$ for some $\varepsilon' > 0$.

    Given an instance $\bG$ of $3k$-Clique Detection, the algorithm proceeds as follows. We first discover the set of all $k$-cliques in $\bG$, which we denote with $\mS$.
    The algorithm performs $\Oh(n^k)$ rounds, one for each $k$-clique $Z \in \mS$. In round $Z$, we dynamically maintain the two trees $\bT_{\mX}(Z)$ and $\bT_{\mY}'(Z)$ where $\mX = \mY = \mS$ as defined in \cref{def:instance}.
    Note that between any two consecutive rounds $Z$ and $Z'$, the corresponding trees $\bT_{\mX}(Z), \bT_{\mX}(Z)$ and $\bT_{\mY}'(Z), \bT_{\mY}'(Z')$  differ only in the gadgets $\CNG(Z)$, and $\CLG(Z)$, respectively. 
    Both gadgets have length $\lambda = \Oh(nk \log n)$, and can be transformed into one another using $\Oh(\lambda)$ updates. After constructing the instance for each round $Z$, we query the distance $d_Z \coloneqq \ted(\bT_{\mX}(Z), \bT_{\mY}(Z))$ and keep track of the minimum value $d \coloneqq \min_Z d_Z$ across all rounds. If $d - D \leq 3C$ for $C,D$ as defined in \cref{cor:clique_gadged} and \cref{thm:unweighted_embedding}, respectively, then we return \yes. Otherwise, we return \no.
    
    By \cref{thm:unweighted_embedding}, we have $d_Z - D \leq 3C$ if and only if there exist $k$-cliques $X, Y \in \mS$ such that $X \cup Y \cup Z$ forms a $3k$-clique. Since we iterate over all $Z \in \mS$, the algorithm checks for the existence of a $3k$-clique in $\bG$.

    The total size of the unweighted instance maintained throughout is $N = \Oh(n^{k+1} \log n)$, treating $k$ as a constant. This is also the number of updates and queries performed.
    Therefore, if the total running time can be bounded by $\Oh(p(N) + N \cdot (u(N) + q(N))) = \Oh(N^{3(1 - \epsilon)})$, then for a large enough $k$ and small enough $\epsilon' > 0$, this evaluates to
    $\Oh(n^{3(k+1)(1 - \epsilon)} \log^{3(1 - \epsilon)} n) = \Oh(n^{3k - \varepsilon'})$, contradicting \cref{conj:cliquedetection}.
\end{proof}
\section{Lower Bounds for Dynamic Weighted \TED}
\label{sec:weighted}

Our lower bound construction builds upon Bringmann et al. \cite{BGMW20}, where they reduce Min-Weight 3-Clique to \TED. Their construction of the \TED instance is in the style of the tree-alignment formulation as in \cref{def:tree-alignment}. Moreover, they do the following tweaks on the cost function: Denoting the old cost function with $\delta_{old}$, they define 
\begin{align*}
    \delta_{new}(\ell(v), \varepsilon) &\coloneqq 0\\
    \delta_{new}(\varepsilon, \ell(v)) &\coloneqq 0\\
    \delta_{new}(\ell(v),\ell(v')) &\coloneqq \delta_{old}(\ell(v),\ell(v')) - \delta_{old}(\ell(v),\varepsilon) - \delta_{old}(\varepsilon, \ell(v'))
\end{align*}
for all $v, v'$. Note that this definition does not change the optimal alignment. The advantage that the new $\delta$ offers is that now we can ignore nodes that are not aligned by $\mA$ (deleting them has 0 cost), and $\cost(\mA)$ consists solely of the aligned pairs. 

Given a Min-Weight 3-Clique instance $\bG = (V,E,w)$, to make the weight function complete, we additionally define $w(i,j) = \infty$  whenever $(i,j) \notin E$, and $w(i,i) = \infty$ for all node $i$.

\begin{definition}[\TED Instance \cite{BGMW20}]\label{def:karl_instance}
Given an instance $\bG = (V, E, w)$ for Min-Weight-3-Clique with $w$ augmented as above, we define an instance $(\bT_1, \bT_2, \Sigma, \delta)$ for \TED as follows.
\begin{itemize}
    \item $\Sigma = \{a_1, \dots, a_n, b_1, \dots, b_{n+1}, a_1', \dots, a_n', b_1', \dots, b_{n+1}', c_1, \dots, c_n, d_1, \dots, d_{n+1}, c_1', \dots, c_n', d_1', \dots, d_{n+1}'\}$.
    \item $\bT_1$ is obtained from right-attaching the following nodes to $\bP(a_1 \cdots a_n b_1 \cdots b_{n+1})$: For every $i \in [n]$, right-attach a leaf node $a_i'$ to $a_i$, and right-attach $b_i'$ to $b_i$.
    \item $\bT_2$ is obtained from left-attaching the following nodes to $\bP(c_1 \cdots c_n d_1 \cdots d_{n+1})$: For every $i \in [n]$, left-attach a leaf node $c_i'$ to $c_i$, and left-attach $d'_i$ to $d_i$.
    \item The cost $\delta(u, v)$ of matching nodes $u$ and $v$ is defined as follows, with $M$ being a sufficiently large constant:
    \begin{enumerate}
    \item $\delta(b'_k, d'_k) = -M^2-2M \cdot k$ for every $k \in \fragment{1}{n}$
    \label{it:karl_instance:1}
    \item $\delta(b_{k+1}, c'_j) = -M^2+M \cdot k + M \cdot j + w(k,j)$ for every $k \in \fragment{1}{n}, j \in \fragment{1}{n}$
    \label{it:karl_instance:2}    
    \item $\delta(a'_i, d_{k+1}) = -M^2 + M \cdot k + M \cdot i + w(i,k)$ for every $i \in \fragment{1}{n}, k \in \fragment{1}{n}$
    \label{it:karl_instance:3}
    \item $\delta(a_i, c_j) = -2M + w(i,j) - w(i-1, j-1)$ for every $i \in \fragment{2}{n}, j \in \fragment{2}{n}$
    \label{it:karl_instance:4}
    \item $\delta(a_i, c_1) = -M(i+1) + w(i,1)$ for every $i \in \fragment{1}{n}$
    \label{it:karl_instance:5}
    \item $\delta(a_1, c_j) = -M(j+1) + w(1,j)$ for every $i \in \fragment{1}{n}$
    \label{it:karl_instance:6}
    \item All other costs are set to $\infty$. \qedhere
    \label{it:karl_instance:7}
\end{enumerate}
\end{itemize}
\end{definition}

Hence, we can view each of $\bT_1$ and $\bT_2$ as having a ``top'' part (the $a$-nodes and $c$-nodes), and a ``bottom'' part (the $b$-nodes and $d$-nodes).  Note that this instance has $\Oh(n)$ nodes and $\Oh(n)$ alphabet size. Bringmann et al. show that the minimum weight of a triangle can be extracted from the cost of an optimal tree alignment.
\begin{definition}\label{def:Aijk}
    Given a \TED instance $(\bT_1, \bT_2, \Sigma, \delta)$ as in \cref{def:karl_instance}, consider a tree-alignment consisting of the following matchings, where $i, j, k \in \fragment{1}{n}$.
    \begin{enumerate}[(a)]
    \item A matching $(b_k', d_k')$ 
    \item A matching $(b_{k+1}, c_j')$
    \item A matching $(a'_i, d_{k+1})$ 
    \item Consecutive matchings $(a_i,c_j), (a_{i-1}, c_{j-1}), \dots, (a_{i-j+2}, c_2)$ if $i \geqslant j$; otherwise $(a_i,c_j), (a_{i-1}, c_{j-1}), \dots, (a_{2}, c_{j-i+2})$ 
    \item A matching $(a_{i-j+1},c_1)$ if $i \geqslant j$; otherwise $(a_{1},c_{j-i+1})$ 
    \end{enumerate}

We call a tree-alignment with the above structure $\mA_{i,j,k}$. Some simple calculations show that $\mA_{i,j,k}$ has cost $-3M^2 + w(i,k) + w(k,j)+ w(i,j)$.
\end{definition}

\begin{theorem} [\cite{BGMW20}] \label{thm:ted_align_cost}
Given a \TED instance $(\bT_1, \bT_2, \Sigma, \delta)$ as in \cref{def:karl_instance}, any optimal tree-alignment is of the form $\mA_{i,j,k}$ for some $i,j,k \in \fragment{1}{n}$. Therefore, an optimal tree-alignment selects the $i,j,k$ to achieve the minimum cost 
\[
-3M^2+\min_{i,j,k} w(i,k) + w(k,j)+ w(i,j)
\] And the minimum weight of a triangle can be extracted from the alignment cost. Moreover, each tree alignment not having the form $\mA_{i,j,k}$ has cost at least $-3M^2+M$, for large enough $M$. \lipicsEnd
\end{theorem}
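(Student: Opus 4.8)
The plan is to pin down the shape of every optimal tree-alignment by isolating, inside its cost, the coefficient of $M$. Because the $\delta_{new}$ convention makes unaligned nodes free, a cost is just the sum of $\delta$-values over aligned pairs, each of which is one of the six finite types \ref{it:karl_instance:1}--\ref{it:karl_instance:6} (every other pair costs $\infty$). First I would record $\cost(\mA_{i,j,k})$: adding the three ``heavy'' matches \ref{it:karl_instance:1}--\ref{it:karl_instance:3} and telescoping the consecutive diagonal matchings $(a_i,c_j),(a_{i-1},c_{j-1}),\dots$ against the trailing $(a_\bullet,c_1)$ or $(a_1,c_\bullet)$ match, all linear-in-$M$ terms cancel and one is left with $\cost(\mA_{i,j,k})=-3M^2+w(i,k)+w(k,j)+w(i,j)$; in particular, if $\bG$ has a triangle, the optimum is at most $-3M^2+3W$, where $W$ bounds the finite weights. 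I would fix $M$ to be a sufficiently large polynomial in $n$ and $W$, so that every weight term appearing below is much smaller than $M$.

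The first structural step is to show that an optimal alignment uses exactly one match of each heavy type \ref{it:karl_instance:1},\ref{it:karl_instance:2},\ref{it:karl_instance:3} (the only types carrying an $M^2$ term), so its $M^2$-coefficient is exactly $-3$. At most one: a type-\ref{it:karl_instance:1} match must be $(b_k',d_k')$ with equal indices, and two such $(b_k',d_k')$, $(b_{k'}',d_{k'}')$ with $k<k'$ are incompatible because the pre-orders of $\bT_1$ and $\bT_2$ list the leaves $b'_\bullet$ and $d'_\bullet$ in opposite index orders, violating \cref{def:tree-alignment}; for types \ref{it:karl_instance:2} and \ref{it:karl_instance:3} the obstruction is even more direct, since two such matches would make a leaf $c'_\bullet$ (resp.\ $a'_\bullet$) an ancestor of another node. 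At least three: an alignment with at most two heavy matches has cost at least $-2M^2-\Oh(Mn)$ --- the cheap matches \ref{it:karl_instance:4}--\ref{it:karl_instance:6} each have $M$-coefficient at least $-(n+1)$, and there are only $\Oh(n)$ of them (at most $n$ of type \ref{it:karl_instance:4}, at most one each of types \ref{it:karl_instance:5} and \ref{it:karl_instance:6}) --- which exceeds $-3M^2+M$ for large $M$. Hence for an optimum the heavy matches are exactly $(b_p',d_p')$, $(b_{q+1},c_s')$, $(a_r',d_{t+1})$.

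Next I would extract the remaining \cref{def:tree-alignment} constraints. Pairwise comparisons among the three heavy matches (using again that the relevant nodes are leaves, so they cannot be ancestors) give $p\le q$ and $p\le t$. Comparing an arbitrary cheap match $(a_i,c_j)$ with $(b_{q+1},c_s')$ and with $(a_r',d_{t+1})$ forces $i\le r$ and $j\le s$, and since all cheap matches sit on the two spines they form a chain strictly increasing in both coordinates, with at most one ``boundary'' match $(a_\bullet,c_1)$ or $(a_1,c_\bullet)$, which must come first. Writing the cost as $-3M^2+(\text{coeff of }M)\cdot M+(\text{weights})$, with the weight part of magnitude $\Oh(nW)$ (thus below $M$), one computes
\[
\text{coeff of } M \;=\; (q-p)+(t-p)+(r+s)+C_{ac},
\]
where $C_{ac}$ contributes $-2$ per type-\ref{it:karl_instance:4} match, $-(i+1)$ for a type-\ref{it:karl_instance:5} match $(a_i,c_1)$, and $-(j+1)$ for a type-\ref{it:karl_instance:6} match $(a_1,c_j)$. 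A short case split on whether the chain starts with a type-\ref{it:karl_instance:5} match, a type-\ref{it:karl_instance:6} match, or no boundary match, using $i\le r$, $j\le s$ and that consecutive steps raise each coordinate by at least $1$, yields $C_{ac}\ge-(r+s)$, with equality exactly when the chain is the unit-step staircase running from the boundary up to $(a_r,c_s)$. Since $q-p,t-p\ge0$, the coefficient of $M$ is a nonnegative integer, hence $0$ at an optimum; and it vanishes only when $p=q=t$ and the cheap chain is that staircase --- precisely the shape $\mA_{i,j,k}$ with $k=p$, $i=r$, $j=s$. Every other alignment therefore has $M$-coefficient at least $1$, i.e.\ cost at least $-3M^2+M$ once the $\Oh(nW)$ weight slack has been absorbed into the choice of $M$; and feeding the staircase weights back through the telescoping shows that such an optimum costs $-3M^2+w(i,k)+w(k,j)+w(i,j)$, so minimizing over $i,j,k$ recovers $-3M^2+(\text{min triangle weight})$. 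The degenerate case where $\bG$ has no triangle is subsumed by the same lower bound.

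The step I expect to be the main obstacle is the estimate $C_{ac}\ge-(r+s)$ together with its equality characterization: one must verify that the cheap chain cannot harvest more than $r+s$ units of $M$-credit --- for which the inherited bounds $i\le r$, $j\le s$ are exactly tight --- and that saturation forces the chain to be the unique unit-step staircase anchored at $(a_r,c_s)$. Running a close second is the bookkeeping of the non-crossing/ancestor conditions of \cref{def:tree-alignment}, which requires writing out the pre-order traversals of $\bT_1$ and $\bT_2$ explicitly to decide which pairs among the $a$-, $b$-, $c$-, $d$-nodes and their primed leaves may legally be aligned.
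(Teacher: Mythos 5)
Your proposal is correct. Note first that the paper does not actually prove this statement: it is imported verbatim from \cite{BGMW20} (hence the citation and the end-of-proof marker in the theorem header), and the text following it only offers a paragraph of intuition. Your sketch is a faithful, self-contained reconstruction of that proof, and it matches the intuition the paper gives: the $-M^2$ terms force exactly one match of each of types \ref{it:karl_instance:1}--\ref{it:karl_instance:3} (your observation that the right-attached $b'_\bullet$ leaves and the left-attached $d'_\bullet$ leaves appear in \emph{opposite} index orders in the two pre-orders is exactly the right obstruction for type \ref{it:karl_instance:1}, and the leaf-cannot-be-an-ancestor argument handles types \ref{it:karl_instance:2}, \ref{it:karl_instance:3}); the ancestor/pre-order constraints of \cref{def:tree-alignment} then give $p\le q$, $p\le t$, $i\le r$, $j\le s$; and the $M$-coefficient accounting, with your bound $C_{ac}\ge -(r+s)$ saturated only by the unit-step staircase anchored at $(a_r,c_s)$, forces $p=q=t$ and the shape $\mA_{r,s,p}$. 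I checked the key inequality in your ``main obstacle'' step: with $m$ chain matches starting from a boundary match $(a_{i_1},c_1)$ one has $m-1\le r-i_1$ and $m-1\le s-1$, giving $C_{ac}=-(i_1+1)-2(m-1)\ge -(r+s)$ with equality exactly for the consecutive staircase ending at $(a_r,c_s)$; the no-boundary case loses at least $2$ and so cannot be optimal. The only caveat is cosmetic: as you note, an alignment with $M$-coefficient exactly $1$ can carry a negative $\Oh(nW)$ weight sum (the non-telescoping type-\ref{it:karl_instance:4} differences), so the literal bound $-3M^2+M$ holds only after rescaling $M$ or reading the constant loosely; this does not affect the statement's use in \cref{lem:dted_align_cost}, which only needs non-$\mA_{i,j,k}$ alignments to be strictly worse than the best $\mA_{i,j,k}$ for $M$ large.
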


Let us give some intuition on why these matchings encode a triangle. Note that weight \eqref{it:karl_instance:1}, \eqref{it:karl_instance:2}, and \eqref{it:karl_instance:3} have a ``$-M^2$'' summand in them, so in order to minimize the total matching cost, the optimal alignment will for sure choose a matching of the form $(b'_{k_1}, d'_{k_1})$, a matching of the form $(b_{k_2+1}, c'_j)$, and a matching of the form $(a'_i, d_{k_3+1})$ for some $k_1, k_2, k_3, i, j$. Moreover, there can only be \emph{one} matching of each type in order to not violate the rules of tree alignment. In fact, one can show that $k_1 = k_2 =k_3$ achieves the lowest cost. We can think of matchings $(k_2,j)$ and $(k_3,i)$ as representing edges $(k,j)$ and $(k,i)$ for some $k$ in the original graph. Finally, to bridge $(i,j)$, the spine-to-spine matchings form a telescoping sum, and the costs eventually cancels out to be $w(i,j)$. 

Now, let us introduce our \DTED instance. See Figure~\ref{fig:caterpillar_no_matchings} for an illustration.
\begin{definition}[\DTED Instance] \label{def:dynamic_TED}
    Given an instance $\bG = (V, E, w)$ for Min-Weight 4-Clique, and a node $x \in \fragment{1}{n}$,
    build a $\TED$ instance $(\bT_1^x, \bT_2^x, \Sigma^x, \delta^x)$ from $(\bT_1, \bT_2, \Sigma, \delta)$
    from \cref{def:karl_instance} as follows:
    \begin{itemize}
        \item $\Sigma^x = \Sigma \cup \{a_1'', \ldots, a_n'', c_1'', \ldots, c_n'', b_1'', \ldots, b_n'', \hat{b}_{x}, \hat{d}_{x}, \hat{c}_{x}, \bot\}$.
        \item For every $i \in \fragment{1}{n}$, attach $a_i''$ as the child of $a_i'$, $c_i''$ as the children of $c_i'$, and $b_i''$ as the first child of $b_i$ (hence the left sibling of $b_i'$).
        \item Attach $\hat{b}_{x}$ as the child of $b_{n+1}$, $\hat{d}_{x}$ as the child of $d_{n+1}$, and add node $\bot$ between $c_n$ and $d_1$. Attach $\hat{c}_{x}$ as a left child of $\bot$.
        \item We define $\delta^x(\cdot, \cdot)$ out of $\delta$. That is, we keep costs \eqref{it:karl_instance:1}, \eqref{it:karl_instance:2}, \eqref{it:karl_instance:3}, \eqref{it:karl_instance:4}, \eqref{it:karl_instance:5}, and \eqref{it:karl_instance:6} from \cref{def:karl_instance}, and additionally define the following costs:
        \begin{enumerate}
        \setcounter{enumi}{6}
        \item $\delta(\hat{b}_{x}, c''_j) = -M + w(x,j)$ for every $j \in \fragment{1}{n}$.
        \item $\delta(\hat{d}_{x}, a''_i) = -M + w(x,i)$ for every $i \in \fragment{1}{n}$.
        \item $\delta(\hat{c}_{x}, b''_k) = -M + w(x,k)$ for every $k \in \fragment{1}{n}$.
        \item All other costs are set to $\infty$. \qedhere
        \end{enumerate}
    \end{itemize}
\end{definition}
    
Note that the alphabet size and number of nodes of our dynamic instance is still $\Oh(n)$. We call the part of the instance from \cref{def:karl_instance} the \emph{underlying tree}.

\begin{lemma}\label{lem:dted_align_cost}
    For a large enough $M$, an optimal tree-alignment in the above \DTED instance has the structure of $\mA_{i,j,k}$ on the underlying trees from Definition~\ref{def:karl_instance}, and three additional matchings $(\hat{b}_{x}, c_j''), (\hat{d}_{x}, a_i''), (\hat{c}_{x}, b_k'')$. Thus, an optimal alignment will choose $i, j, k$ that gives the minimum cost 
    \[
-3M^2-3M+\min_{i,j,k}{w(i,j) + w(i,k) + w(j,k) + w(x,i) + w(x,j) + w(x,k)}
\]
\end{lemma}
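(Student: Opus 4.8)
The plan is to establish the two matching bounds: exhibit an alignment of the stated cost, and show no alignment beats it and that every optimal one has the stated shape. Throughout we use the cost convention of \cite{BGMW20} under which deletions and insertions are free, so $\cost$ counts only matched/substituted pairs; in particular the new spine node $\bot$ (infinite cost against every label) is deleted for free, and by \cref{def:dynamic_TED} each of $a_i''$, $b_i''$, $c_i''$ can only be matched with $\hat d_x$, $\hat c_x$, $\hat b_x$, respectively, while $\hat b_x$, $\hat d_x$, $\hat c_x$ can only be matched with a $c''$-, $a''$-, $b''$-node.

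For the upper bound, fix $i,j,k \in \fragment{1}{n}$, take the alignment $\mA_{i,j,k}$ of the underlying trees from \cref{def:Aijk}, and add the three pairs $(\hat b_x, c_j'')$, $(a_i'', \hat d_x)$, $(b_k'', \hat c_x)$. One checks this is still a valid tree-alignment: $\hat b_x$ (child of $b_{n+1}$) and $c_j''$ (child of $c_j'$) hang below the already-aligned pair $(b_{k+1},c_j')$; $a_i''$ (child of $a_i'$) and $\hat d_x$ (child of $d_{n+1}$) hang below $(a_i',d_{k+1})$; and $b_k''$, which by \cref{def:dynamic_TED} sits immediately left of $b_k'$ and right of the subtree of $b_{k+1}$, is compatible against $\hat c_x$ (the child of $\bot$, which lies above every $d$-node) with each pair of $\mA_{i,j,k}$. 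The remaining cases are routine pre-order comparisons. By \cref{def:Aijk} and costs (7)--(9), the cost of this alignment is $\cost(\mA_{i,j,k}) - 3M + w(x,i)+w(x,j)+w(x,k) = -3M^2 - 3M + w(i,j)+w(i,k)+w(j,k)+w(x,i)+w(x,j)+w(x,k)$; minimizing over $i,j,k$ bounds the optimal cost from above.

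For the lower bound, let $\mA$ be optimal and write $\mA = \mA_0 \cup \mA_1$, where $\mA_0$ consists of the pairs of underlying-tree nodes and $\mA_1$ of the pairs touching a new node; by the opening remark $\mA_1 \subseteq \{(\hat b_x, c''_{j_0}), (a''_{i_0}, \hat d_x), (b''_{k_0}, \hat c_x)\}$. Since $\mA_0$ is a valid tree-alignment of the underlying trees, \cref{thm:ted_align_cost} gives $\cost(\mA_0) \ge -3M^2 + M$ unless $\mA_0 = \mA_{i,j,k}$ for some $i,j,k$, and each pair missing from $\mA_1$ forgoes one $-M$ incentive; in either bad case $\cost(\mA) \ge -3M^2 - 2M$, which exceeds the constructed upper bound $-3M^2 - 3M + 6W$ (with $W$ the maximum edge weight) once $M > 6W$, contradicting optimality. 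Hence $\mA_0 = \mA_{i,j,k}$ and $\mA_1$ contains all three pairs. It remains to pin $i_0=i$, $j_0=j$, $k_0=k$. Comparing $(\hat b_x, c''_{j_0})$ with $(b_{k+1},c_j')\in\mA_{i,j,k}$: as $b_{k+1}$ is an ancestor of $\hat b_x$, the alignment forces $c_j'$ to be an ancestor of $c''_{j_0}$, so $c''_{j_0} = c_j''$, i.e. $j_0=j$; symmetrically $(a''_{i_0}, \hat d_x)$ against $(a_i',d_{k+1})$ forces $a''_{i_0}=a_i''$, i.e. $i_0=i$. For $k$ we use two pairs: $(b''_{k_0},\hat c_x)$ against $(b_{k+1},c_j')$ rules out $k_0\ge k+1$ (that would make $b_{k+1}$ an ancestor of $b''_{k_0}$ while $c_j'$ is not an ancestor of $\hat c_x$), and against $(b_k',d_k')$ rules out $k_0\le k-1$ (in pre-order $\hat c_x$, hanging off $\bot$, precedes $d_k'$, so $b''_{k_0}$ must precede $b_k'$, which by the placement of $b_k''$ fails for $k_0<k$); hence $k_0=k$. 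Substituting into $\cost(\mA) = \cost(\mA_{i,j,k}) - 3M + w(x,i)+w(x,j)+w(x,k)$ and minimizing over $i,j,k$ matches the upper bound, proving the lemma.

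The genuinely delicate step is $k_0 = k$. The matchings of $\hat b_x$ and $\hat d_x$ are each pinned by a single already-matched node ($b_{k+1}$, resp.\ $d_{k+1}$) lying directly above them, because $\hat b_x,\hat d_x$ sit at the very bottom of their trees; but $\hat c_x$ hangs off $\bot$, which is placed \emph{above} all $d$-nodes, so no matched pair dominates it from above, and one has to combine an ``above'' constraint (from $b_{k+1}$) with a pre-order constraint (from the sibling leaf $b_k'$). This is exactly why \cref{def:dynamic_TED} inserts $b_k''$ between the subtree of $b_{k+1}$ and $b_k'$; verifying this placement, and the companion pre-order checks in the upper-bound construction, is the part most likely to conceal a subtlety.
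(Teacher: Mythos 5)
Your proof is correct and follows essentially the same route as the paper's: the upper bound by exhibiting $\mA_{i,j,k}$ together with the three hat-matchings, and the lower bound by splitting an optimal alignment into its restriction to the underlying trees (handled by \cref{thm:ted_align_cost}) plus the at most three finite-cost pairs involving new nodes, each carrying a $-M$ incentive. The only difference is one of emphasis: you spell out the index-pinning $(i_0,j_0,k_0)=(i,j,k)$ via explicit ancestor and pre-order constraints --- correctly identifying that $b_k''$ must sit between the subtree of $b_{k+1}$ and $b_k'$ for the $\hat c_x$ argument to go through --- whereas the paper compresses this step into the assertion that, by the definition of a tree alignment, these are the only possible matchings for $\hat b_x$, $\hat d_x$, $\hat c_x$.
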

\begin{proof}
Let $\OPT$ denote the optimal alignment cost. First, note that the above alignment structure is a valid tree alignment. Thus $\OPT \leqslant -3M^2-3M+\min_{i,j,k}{w(i,j) + w(i,k) + w(j,k) + w(x,i) + w(x,j) + w(x,k)}$. 

It only remains to show that every alignment has at least this cost. For $M$ large enough, in an optimal alignment, either we choose all three matchings $(b_{k_1}', d_{k_1}')$, $(b_{k_2+1}, c_j')$, and $(a_i', d_{k_3+1})$, or the total matching cost up to lower order terms is at least $-2M^2$. We can argue that $k_1=k_2=k_3=k$ for some $k$. This is because if $k_1 < k_2$, then we can either increase $k_1$ or decrease $k_2$ to further lower the cost, and likewise for $k_1$ and $k_3$. Now that we have fixed these three matchings, an optimal alignment will also match $\hat{b}_{x}$ with $c_j''$, $\hat{d}_{x}$ with $a_i''$, and $\hat{c}_{x}$ with $b_k''$, since adding these matchings decreases the total cost by another $3M$, up to lower order terms. Note that by the definition of a tree alignment, these are the \emph{only} possible matchings for $\hat{b}_{x}, \hat{d}_{x}$, and $\hat{c}_{x}$. Now, by our definition of $\delta$, the only possible additional alignments are the spine-to-spine matchings. Note that our matchings can be partitioned into two groups: the matchings containing $\hat{d}_x, \hat{c}_x, \hat{d}_x$ and the remaining matchings of the underlying trees from \cref{def:karl_instance}. If the spine-to-spine matchings do not have the form $(a_i, c_j),\dots ,(a_{i-j+1}, c_1)$, then by \cref{thm:ted_align_cost}, we have that the tree-alignment on the underlying tree has cost at least $-3M^2+M$. Hence the tree-alignment on the $\DTED$ instance has cost at least $-3M^2-2M$, which is not optimal for $M$ large enough.
\end{proof}

This allows us to conclude this (sub)section with the dynamic lower bound.

\weightedted

\begin{proof}
    Given a Min-Weight 4-Clique instance $\bG = (V, E, \w)$ with $n$ nodes, we construct our algorithm to perform $n$ rounds, one for each node $x \in V$, while maintaining the \DTED instance $(\bT_1^x, \bT_2^x, \Sigma^x, \delta^x)$ throughout the rounds. Note that between consecutive rounds $x$ and $y$, we only change the label of $\hat{b}_x, \hat{c}_x, \hat{d}_x$ into $\hat{b}_y, \hat{c}_y, \hat{d}_y$. 
    In each round $x$, we query the tree edit distance between $\bT_1^x$ and $\bT_2^x$, and keep track of the minimum value across all rounds (adding an offset of $3M^3 + 3M$ to each). 
    
    By \cref{lem:dted_align_cost}, the result of round $x$ allows us to identify nodes $i, j, k$ such that ${i, j, k, x}$ form a min-weight 4-clique containing $x$. After we iterate over all $x \in V$, we indeed return the min-weight 4-clique. The size and alphabet size of our \DTED instance is $\Oh(n)$, and there are $\Oh(n)$ updates and queries. Thus if there exists an algorithm for \DTED running in overall running time $\Oh(p(N) + N(u(N) + q(N))) = \Oh(N^{4-\epsilon})$, we would solve Min-Weight 4-Clique in time $\Oh(n^{4-\epsilon})$. So for a Min-Weight 4-Clique instance $\bG = (V, E, \w)$ satisfying the weight condition as in \cref{conj:weightedclique}, it would contradict \cref{conj:weightedclique}.
\end{proof}

\subparagraph*{Lower Bounds for Incremental/Decremental Dynamic \TED.}

An easy modification of our construction allows us to also establish conditional lower bounds for incremental (resp., decremental) \DTED, where the updates are restricted to node insertions (resp., deletions). Let us briefly sketch the reduction to incremental \DTED here. The lower bound for decremental \DTED can be shown similarly.

First we assign some order to the $n$ nodes $x_1, \dots, x_n$ in the Min-Weight 4-Clique instance. Then we define $\delta_{incr}$ to contain all costs in $\delta$ from \cref{def:karl_instance}, and additionally define 
\begin{enumerate}
\setcounter{enumi}{6}
\item $\delta_{incr}(\hat{b}_{x_\ell}, c_j'') = -M\cdot \ell + w(x_\ell,j)$ for every $\ell,j \in \fragment{1}{n}$
\item $\delta_{incr}(\hat{d}_{x_\ell}, a_i'') = -M\cdot \ell + w(x_\ell,i)$ for every $\ell,i \in \fragment{1}{n}$
\item $\delta_{incr}(\hat{c}_{x_\ell}, b_k'') = -M\cdot \ell + w(x_\ell,k)$ for every $\ell,k \in \fragment{1}{n}$
\item All other costs are set to $\infty$
\end{enumerate}
Starting from an empty graph, we make $\Oh(n)$ insertions to arrive at the \DTED instance $\bT_1^{x_1}, \bT_2^{x_2}$ as in \cref{def:dynamic_TED}. We make a query to our black-box incremental \DTED algorithm to get a min-weight 4-clique containing $x_1$. This concludes the first round. For round $\ell$, $2 \leq \ell \leq n$, we attach $\hat{b}_{x_\ell}$ as the child of $\hat{b}_{x_{\ell-1}}$, $\hat{c}_{x_\ell}$ as the child of $\hat{c}_{x_{\ell-1}}$, and $\hat{d}_{x_\ell}$ as the child of $\hat{d}_{x_{\ell-1}}$, and query for a min-weight 4-clique containing $x_\ell$. Taking the minimum weight over all $n$ rounds gives us a desired min-weight 4-clique. 

To see the correctness, note that once $\hat{b}_{x_\ell}, \hat{c}_{x_\ell}, \hat{d}_{x_\ell}$ are inserted, they gain priority over all previously inserted $\hat{b}, \hat{c}, \hat{d}$-nodes to be matched. So, essentially, we are considering each node $x \in V$ independently as in the previous reduction to \DTED. As there are $\Oh(n)$ updates and queries in this reduction, we can conclude that unless \cref{conj:weightedclique} fails, for any $\varepsilon > 0$, there is no algorithm for incremental \DTED of size $\Oh(N)$ and alphabet size $\Oh(N)$ that satisfies $p(N) + N (u(N) + q(N)) = \Oh(N^{4-\varepsilon})$.

\begin{figure*}[htbp]
     \centering
     \begin{subfigure}[b]{1\textwidth}
         \centering
         \scalebox{0.7}{\begin{tikzpicture}[x=0.75pt,y=0.75pt,yscale=-1,xscale=1]

\draw[dotted, rounded corners] (-8,18) rectangle node[left=10pt] {\eqref{it:karl_instance:4}, \eqref{it:karl_instance:5} \eqref{it:karl_instance:6}} (8,214);
\draw[dotted, rounded corners] (-8,317) rectangle node[left=10pt] {\eqref{it:karl_instance:2}} (8,512);
\draw[dotted, rounded corners] (38,317) rectangle node[below right=10pt] {\eqref{it:karl_instance:1}} (54,512);
\draw[dotted, rounded corners] (15,42) rectangle node[above right=10pt]{\eqref{it:karl_instance:3}} (31,236);

\draw    (0,24) -- (0,506) ;
\foreach \y in {24, 47, 70, 93, 116, 139, 162, 185, 208} {
        \draw[fill=black] (0, \y) circle [radius=3.5];
    }

\node at (-15, 24) {$a_1$};
\node at (-15, 47) {$a_2$};
\node at (-15, 208) {$a_n$};

\foreach \y in {47, 70, 93, 116, 139, 162, 185, 208, 231} {
        \draw[fill=black] (23, \y) circle [radius=3.5];
    }
\node at (42, 50) {$a'_1$};
\node at (42, 76) {$a'_2$};
\node at (42, 237) {$a'_n$};


\draw (0,24) -- (23, 47);
\draw (0, 47) -- (23, 70);
\draw (0, 70) -- (23, 93);
\draw (0, 93) -- (23, 116);
\draw (0, 116) -- (23, 139);
\draw (0, 139) -- (23, 162);
\draw (0, 162) -- (23, 185);
\draw (0, 185) -- (23, 208);
\draw (0, 208) -- (23, 231);

\foreach \y in {300, 323, 346, 369, 392, 415, 438, 461, 484, 506} {
        \draw[fill=black] (0, \y) circle [radius=3.5];
    }

\node at (-15, 300) {$b_1$};
\node at (-15, 323) {$b_2$};

\node at (62, 326) {$b_1'$};
\node at (62, 353) {$b_2'$};
\node at (62, 510) {$b_n'$};




\foreach \y in {323, 346, 369, 392, 415, 438, 461, 484, 506} {
        \draw[fill=black] (46, \y) circle [radius=3.5];
    }

\draw (0, 300)  -- (46, 323);
\draw (0, 323)  -- (46, 346);
\draw (0, 346)  -- (46, 369);
\draw (0, 369)  -- (46, 392);
\draw (0, 392)  -- (46, 415);
\draw (0, 415)  -- (46, 438);
\draw (0, 438)  -- (46, 461);
\draw (0, 461)  -- (46, 484);
\draw (0, 484)  -- (46, 506);


\draw[dotted, rounded corners] (392,20) rectangle node[right=10pt]{\eqref{it:karl_instance:4}, \eqref{it:karl_instance:5} \eqref{it:karl_instance:6}} (408,212);
\draw[dotted, rounded corners] (369,42) rectangle node[below left=10pt] {\eqref{it:karl_instance:2}} (385,236);
\draw[dotted, rounded corners] (369,317) rectangle node[above left=10pt]{\eqref{it:karl_instance:1}} (385,512);
\draw[dotted, rounded corners] (392,317) rectangle node[right=10pt]{\eqref{it:karl_instance:3}} (408,512);

\draw    (400,24) -- (400,506) ;
\foreach \y in {24, 47, 70, 93, 116, 139, 162, 185, 208} {
        \draw[fill=black] (400, \y) circle [radius=3.5];
    }
\node at (415, 24) {$c_1$};
\node at (415, 47) {$c_2$};
\node at (415, 208) {$c_n$};

\foreach \y in {47, 70, 93, 116, 139, 162, 185, 208, 231} {
        \draw[fill=black] (377, \y) circle [radius=3.5];
    }
\node at (357, 50) {$c'_1$};
\node at (357, 73) {$c'_2$};
\node at (357, 234) {$c'_n$};

\draw (400, 24) -- (377, 47);
\draw (400, 47) -- (377, 70);
\draw (400, 70) -- (377, 93);
\draw (400, 93) -- (377, 116);
\draw (400, 116) -- (377, 139);
\draw (400, 139) -- (377, 162);
\draw (400, 162) -- (377, 185);
\draw (400, 185) -- (377, 208);
\draw (400, 208) -- (377, 231);


\foreach \y in {300, 323, 346, 369, 392, 415, 438, 461, 484, 506} {
        \draw[fill=black] (400, \y) circle [radius=3.5];
    }
\node at (415, 300) {$d_1$};
\node at (415, 323) {$d_2$};
\node at (419, 506) {$d_{n+1}$};

\node at (357, 326) {$d_1'$};
\node at (357, 349) {$d_2'$};
\node at (357, 510) {$d_n'$};

\foreach \y in {323, 346, 369, 392, 415, 438, 461, 484, 506} {
        \draw[fill=black] (377, \y) circle [radius=3.5];
    }


\draw (400, 300) -- (377, 323);
\draw (400, 323) -- (377, 346);
\draw (400, 346) -- (377, 369);
\draw (400, 369) -- (377, 392);
\draw (400, 392) -- (377, 415);
\draw (400, 415) -- (377, 438);
\draw (400, 438) -- (377, 461);
\draw (400, 461) -- (377, 484);
\draw (400, 484) -- (377, 506);


\draw[fill=red, draw=red] (46, 461) circle [radius=3.5];
\draw[fill=red, draw=red] (377, 461) circle [radius=3.5];
\draw[-, red, ultra thick] (46, 461) to (377, 461);

\draw[fill=red, draw=red] (0, 461) circle [radius=3.5];
\draw[fill=red, draw=red] (377, 93) circle [radius=3.5];
\draw[-, red, ultra thick] (0, 461) to[bend left = 10] (377, 93);

\draw[fill=red, draw=red] (400, 461) circle [radius=3.5];
\draw[fill=red, draw=red] (23, 208) circle [radius=3.5];
\draw[-, red, ultra thick] (23, 208) to[bend right = 10] (400, 461);

\draw[fill=red, draw=red] (400, 24) circle [radius=3.5];
\draw[fill=red, draw=red] (400, 47) circle [radius=3.5];
\draw[fill=red, draw=red] (400, 70) circle [radius=3.5];

\draw[fill=red, draw=red] (0, 139) circle [radius=3.5];
\draw[fill=red, draw=red] (0, 162) circle [radius=3.5];
\draw[fill=red, draw=red] (0, 185) circle [radius=3.5];

\draw[-, red, ultra thick] (400, 24) -- (0, 139);
\draw[-, red, ultra thick] (400, 47) -- (0, 162);
\draw[-, red, ultra thick] (400, 70) -- (0, 185);

\node at (-15, 506) {$b_{n+1}$};

\end{tikzpicture}}
         \caption{The \TED Instance from \cite{BGMW20}. The numeric labels indicate the ranges of a certain type of matchings.}
         \label{fig:weighted:a}
     \end{subfigure}
     \newline
     \hfill
     
     \begin{subfigure}[b]{1\textwidth}
         \centering
         \scalebox{0.7}{\begin{tikzpicture}[x=0.75pt,y=0.75pt,yscale=-1,xscale=1]

\def\opac{0.3};

\begin{scope}[color=gray, opacity=\opac]

    \foreach \y in {24, 47, 70, 93, 116, 139, 162, 185, 208} {
            \node[fill=gray, circle, minimum size=7pt, inner sep=0pt] (a\y) at (0, \y) {};
        }
    \foreach \y in {47, 70, 93, 116, 139, 162, 185, 208, 231} {
            \node[fill=gray, circle, minimum size=7pt, inner sep=0pt] (ap\y) at (23, \y) {};
        }

    \draw (a24) -- (ap47);
    \draw (a47) -- (ap70);
    \draw (a70) -- (ap93);
    \draw (a93) -- (ap116);
    \draw (a116) -- (ap139);
    \draw (a139) -- (ap162);
    \draw (a162) -- (ap185);
    \draw (a185) -- (ap208);
    \draw (a208) -- (ap231);

    \foreach \y in {300, 323, 346, 369, 392, 415, 438, 461, 484, 506} {
            \node[fill=gray, circle, minimum size=7pt, inner sep=0pt] (b\y) at (0, \y) {};
        }
  
    
    \foreach \y in {323, 346, 369, 392, 415, 438, 461, 484, 506} {
            \node[fill=gray, circle, minimum size=7pt, inner sep=0pt] (bp\y) at (46, \y) {};
        }

    \draw (a24) -- (a47) -- (a70) -- (a93) -- (a116) -- (a139) -- (a162) -- (a185) -- (a208) -- (b300) -- (b323) -- (b346) -- (b369) -- (b392) -- (b415) -- (b438) -- (b461) -- (b484) -- (b506);
    
    \draw (b300)  -- (bp323);
    \draw (b323)  -- (bp346);
    \draw (b346)  -- (bp369);
    \draw (b369)  -- (bp392);
    \draw (b392)  -- (bp415);
    \draw (b415)  -- (bp438);
    \draw (b438)  -- (bp461);
    \draw (b461)  -- (bp484);
    \draw (b484)  -- (bp506);

    \foreach \y in {24, 47, 70, 93, 116, 139, 162, 185, 208} {
            \node[fill=gray, circle, minimum size=7pt, inner sep=0pt] (c\y) at (400, \y) {};
        }
    
    \foreach \y in {47, 70, 93, 116, 139, 162, 185, 208, 231} {
            \node[fill=gray, circle, minimum size=7pt, inner sep=0pt] (cp\y) at (377, \y) {};
        }

    \draw (c24) -- (cp47);
    \draw (c47) -- (cp70);
    \draw (c70) -- (cp93);
    \draw (c93) -- (cp116);
    \draw (c116) -- (cp139);
    \draw (c139) -- (cp162);
    \draw (c162) -- (cp185);
    \draw (c185) -- (cp208);
    \draw (c208) -- (cp231);

    \foreach \y in {300, 323, 346, 369, 392, 415, 438, 461, 484, 506} {
            \node[fill=gray, circle, minimum size=7pt, inner sep=0pt] (d\y) at (400, \y) {};
        }
    \foreach \y in {323, 346, 369, 392, 415, 438, 461, 484, 506} {
            \node[fill=gray, circle, minimum size=7pt, inner sep=0pt] (dp\y) at (377, \y) {};
        }

    \draw (d300)  -- (dp323);
    \draw (d323)  -- (dp346);
    \draw (d346)  -- (dp369);
    \draw (d369)  -- (dp392);
    \draw (d392)  -- (dp415);
    \draw (d415)  -- (dp438);
    \draw (d438)  -- (dp461);
    \draw (d461)  -- (dp484);
    \draw (d484)  -- (dp506);

     \draw (c24) -- (c47) -- (c70) -- (c93) -- (c116) -- (c139) -- (c162) -- (c185) -- (c208) -- (d300) -- (d323) -- (d346) -- (d369) -- (d392) -- (d415) -- (d438) -- (d461) -- (d484) -- (d506);
    

    \node[fill=red, circle, minimum size=7pt, inner sep=0pt] (r1) at (46, 461) {};
    \node[fill=red, circle, minimum size=7pt, inner sep=0pt] (r2) at (377, 461) {};
    \draw[-, red, ultra thick, opacity=\opac] (r1) to (r2);

    \node[fill=red, circle, minimum size=7pt, inner sep=0pt] (r3) at (0, 461) {};
    \node[fill=red, circle, minimum size=7pt, inner sep=0pt] (r4) at (377, 93) {};
    \draw[-, red, ultra thick, opacity=\opac] (r3) to[bend left = 10] (r4);

    \node[fill=red, circle, minimum size=7pt, inner sep=0pt] (r5) at (400, 461) {};
    \node[fill=red, circle, minimum size=7pt, inner sep=0pt] (r6) at (23, 208) {};
    \draw[-, red, ultra thick, opacity=\opac] (r5) to[bend left = 10] (r6);

    \node[fill=red, circle, minimum size=7pt, inner sep=0pt] (s1) at (400, 24) {};
    \node[fill=red, circle, minimum size=7pt, inner sep=0pt] (s2) at (400, 47) {};
    \node[fill=red, circle, minimum size=7pt, inner sep=0pt] (s3) at (400, 70) {};
    
    \node[fill=red, circle, minimum size=7pt, inner sep=0pt] (s4) at (0, 139) {};
    \node[fill=red, circle, minimum size=7pt, inner sep=0pt] (s5) at (0, 162) {};
    \node[fill=red, circle, minimum size=7pt, inner sep=0pt] (s6) at (0, 185) {};

    \draw[-, red, ultra thick, opacity=\opac] (s1) -- (s4);
    \draw[-, red, ultra thick, opacity=\opac] (s2) -- (s5);
    \draw[-, red, ultra thick, opacity=\opac] (s3) -- (s6);
\end{scope}

\draw (d506) -- (423, 529);
\draw[fill=red, draw=red] (423, 529) circle [radius=4] node[below] {$\hat{d}_x$};

\foreach \y in {70, 93, 116, 139, 162, 185, 208, 231, 254} {
        \draw[fill=black] (354, \y) circle [radius=3.5];
    }

\draw (354, 70) node[left=5pt] {$c_1''$} -- (cp47);
\draw (354, 93) node[left=5pt] {$c_2''$} -- (cp70);
\draw (354, 116) -- (cp93);
\draw (354, 139) -- (cp116);
\draw (354, 162) -- (cp139);
\draw (354, 185) -- (cp162);
\draw (354, 208) -- (cp185);
\draw (354, 231) -- (cp208);
\draw (354, 254) node[left=5pt] {$c_n''$} -- (cp231);


\draw (b300)  -- (23, 323) node[below right] {$b_1''$};
\draw (b323)  -- (23, 346) node[below right] {$b_2''$};
\draw (b346)  -- (23, 369);
\draw (b369)  -- (23, 392);
\draw (b392)  -- (23, 415);
\draw (b415)  -- (23, 438);
\draw (b438)  -- (23, 461);
\draw (b461)  -- (23, 484);
\draw (b484)  -- (23, 506) node[below right] {$b_n''$};

\draw (b506) -- (-23, 529);
\draw[fill=red, draw=red] (-23, 529) circle [radius=4];
\node at (-23, 542) {$\hat{b}_x$};

2nd layer of bottom nodes

\foreach \y in {323, 346, 369, 392, 415, 438, 461, 484, 506} {
        \draw[fill=black] (23, \y) circle [radius=3.5];
    }

 \draw[black] (400, 260) -- (377, 283);
\draw[fill=white, black] (400, 260) circle [radius=3.5] node[above left] {$\bot$};
\draw[fill=red, draw=red] (377, 283) circle [radius=4];
\node[black] at (377, 296) {$\hat{c}_x$};

\foreach \y in {70, 93, 116, 139, 162, 185, 208, 231, 254} {
        \draw[fill=black] (46, \y) circle [radius=3.5];
    }

\draw (46, 70) node[right=5pt] {$a_1''$} -- (ap47);
\draw (46, 93) node[right=5pt] {$a_2''$} -- (ap70);
\draw (46, 116) -- (ap93);
\draw (46, 139) -- (ap116);
\draw (46, 162) -- (ap139);
\draw (46, 185) -- (ap162);
\draw (46, 208) -- (ap185);
\draw (46, 231) -- (ap208);
\draw (46, 254) node[right=5pt] {$a_n''$} -- (ap231);
    

\draw[fill=red, draw=red] (23, 461) circle [radius=3.5];
\draw[-, red, ultra thick] (23, 461) to[bend left = 10] (377, 283);

\draw[fill=red, draw=red] (46, 231) circle [radius=3.5];
\draw[-, red, ultra thick] (46, 231) to[bend right = 40] (423, 529);

\draw[fill=red, draw=red] (354, 116) circle [radius=3.5];
\draw[-, red, ultra thick] (-23, 529) to[bend right = 29] (354, 116);

\end{tikzpicture}}
         \caption{Extension of the \TED instance.}
         \label{fig:weighted:b}
     \end{subfigure}
     
    \caption{
        The figure illustrates how to extend the \TED instance from \cite{BGMW20} to obtain a dynamic bound. Nodes newly introduced in \cref{fig:weighted:b} are shown in full color, while those from \cref{fig:weighted:a} are rendered translucent.
     }
     \label{fig:caterpillar_no_matchings}
     \label{fig:with_matchings}
\end{figure*}
\section{Lower Bounds for Dynamic RNA Folding and Dyck Edit Distance}
\label{sec:rnafolding}

 We build upon the static lower bound instance from \cite{ABVW15}, summarized as follows.

\begin{lemma}[\cite{ABVW15}]\label{lem:rna_static_string}
Let $\bG = (V, E)$ be a graph on $n = |V|$ nodes, and let $k \in \mathbb{Z}_{+}$. Then, there exist embeddings
$\CG_{\ttA},\CG_{\ttB},\CG_{\ttC} : V^k \rightarrow {\{\Sigma \cup \Sigma'\}}^{\leq \ell}$,
with $\ell = \Oh(n)$ over a constant-size alphabet $\Sigma$ and a weight function $\w : \Sigma \rightarrow [M]$ with $M = \Oh(k^4 n \log n)$, such that for sets $\mX, \mY, \mZ \subseteq V^k$ of size $N$, the string
\begin{align*}
    S \coloneqq \ & \#_{\ttA}^{2N} \ \big(\bigcirc_{X \in \mX} \ \#_{\ttA}' \ \CG_{\ttA}(X) \ \#_{\ttA}'\  \big) \ \#_{\ttA}^{2N} \circ \\
    & \#_{\ttB}^{2N} \big(\ \bigcirc_{Y \in \mY} \ \#_{\ttB}'\  \CG_{\ttB}(Y) \ \#_{\ttB}' \ \big) \ \#_{\ttB}^{2N} \circ \\
    & \#_{\ttC}^{2N} \big(\bigcirc_{Z \in \mZ} \ \#_{\ttC}' \ \CG_{\ttC}(Z)  \ \#_{\ttC}' \ \big) \ \#_{\ttC}^{2N}
\end{align*}
satisfies the two following.
\begin{enumerate}[(i)]
    \item There exists a positive constant $C$ such that $\score_{\w}(S) = C$ if there are $X \in \mX$, $Y \in \mY$, and $Z \in \mZ$ such that $X \cup Y \cup Z$ forms a $3k$-clique in $\bG$. Otherwise, $\score_{\w}(S) < C$. \label{it:rna_static_string:i}
    \item For every optimal folding \( F \) of \( S \), there exists a set \( X \in \mX \) such that if \( \CG_{\ttA}(X) \) appears within the substring \( S\fragmentco{x}{x+\lambda} \), then for all pairs \( (i, j) \in F \) with \( \{S\position{i}, S\position{j}\} = \{\#_{\ttA}, \#_{\ttA}'\} \), it holds that either \( i < j < x \) or \( x + \lambda \leq i < j \). 
    Similarly, there exist sets \( Y \in \mY \) and \( Z \in \mZ \) such that the same condition holds with respect to the delimiters \( \#_{\ttB}, \#_{\ttB}' \) and \( \#_{\ttC}, \#_{\ttC}' \), respectively. If $\score_{\w}(S) = C$, 
    then this conditions holds for all $X,Y,Z$ forming a $3k$-clique.
    \label{it:rna_static_string:ii}
    \lipicsEnd
\end{enumerate}
\end{lemma}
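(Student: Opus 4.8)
Part (i) together with the existence of the gadgets $\CG_{\ttA},\CG_{\ttB},\CG_{\ttC}$ is the RNA-folding hardness construction of \cite{ABVW15}; the plan is to recall that construction and additionally verify property (ii), which is what the dynamic reduction needs. Concretely, a $3k$-clique in $\bG$ is the same as three $k$-cliques $X\in\mX$, $Y\in\mY$, $Z\in\mZ$ that are pairwise fully connected, that is $X\subseteq N(Y)$, $Y\subseteq N(Z)$, and $X\subseteq N(Z)$. Each such test is realized by \cref{lem:clique_gadged}, which gives strings with $\score(\CLG(U)\circ p(\CNG(W))^R)=C'$ iff $U\subseteq N(W)$ and strictly less otherwise (we write $C'$ for its constant; a weighted variant keeps the length $\Oh(n)$ by encoding vertices through weights up to $M=\Oh(k^4 n\log n)$). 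Since foldings are non-crossing matchings, the three tests can be nested inside the three-block layout of $S$: I would let $\CG_{\ttA}(X)$ be a $C$-facing half followed by a $B$-facing half, $\CG_{\ttB}(Y)$ an $A$-facing half followed by a $C$-facing half, and $\CG_{\ttC}(Z)$ a $B$-facing half followed by an $A$-facing half, each half a $\CLG$- or $p(\CNG)^R$-type block, so that the $X$--$Z$ test runs outermost (left part of block $A$ against right part of block $C$) and encloses the $X$--$Y$ and $Y$--$Z$ tests.

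\textbf{Selection and part (i).} The delimiter runs $\#_{\ttA}^{2N}\cdots\#_{\ttA}^{2N}$ bracketing the $\mX$-block and the $\#_{\ttA}'$ markers flanking each $\CG_{\ttA}(X)$, and likewise for $\ttB$ and $\ttC$, enforce a \emph{selection}. I would set the padding count $2N$ and the weights of the $\#$-symbols large enough that any folding that fails to match essentially all of $\#_{\ttA},\#_{\ttB},\#_{\ttC}$ inside their own block loses by more than $3C'$; matching them up is possible only when all but one gadget per block is sealed off by its own $\#'$-pair, leaving a single selected $X$, a single $Y$, and a single $Z$ whose gadget windows are free to pair across block boundaries. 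Given the selection, the score decomposes as a fixed delimiter contribution $C_{\#}$ plus the three gadget tests between the selected blocks, each at most $C'$ with equality iff the relevant pair of cliques is connected; hence $\score_\w(S)\le C_{\#}+3C'$, and setting $C\coloneqq C_{\#}+3C'$ gives equality precisely when $\bG$ has a $3k$-clique, which is part (i).

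\textbf{Part (ii).} Fix an optimal folding $F$, its selected triple, and the window $S\fragmentco{x}{x+\lambda}$ occupied by the selected $\CG_{\ttA}(X)$. I would show by an uncrossing argument that no pair of $F$ matching a $\#_{\ttA}$ with a $\#_{\ttA}'$ straddles this window: such a pair would lie inside the arc of cross-block pairs realizing the $X$--$Y$ (or $X$--$Z$) test and would displace part of it, but swapping it for the displaced test pairs cannot decrease the score once the weight of a $\#$-symbol is at most that of a test pair. Thus every such pair has $i<j<x$ or $x+\lambda\le i<j$, which is exactly the stated condition, and the arguments for $\#_{\ttB}$ and $\#_{\ttC}$ are symmetric. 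Finally, if $\score_\w(S)=C$ then every pairwise-connected triple yields an optimal folding of value $C$, so the condition holds for all $X,Y,Z$ forming a $3k$-clique.

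\textbf{Main obstacle.} The delicate part is the optimality direction of (i) and the uncrossing step for (ii): one must rule out foldings that partially align two different gadgets of the same block, route $\#$-pairs across a block boundary, or trade a gadget test for a local gain. This comes down to bookkeeping which pair-sets can coexist under the non-crossing constraint and to fixing the relative magnitudes of the $\#$-weights, the padding count $2N$, and the clique-gadget weights (up to $M$) so that the honest folding strictly wins; the length bound $\ell=\Oh(n)$, the constant alphabet size, and the exact value of $C$ then follow by routine counting.
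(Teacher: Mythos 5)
The paper does not prove this lemma at all: it is imported verbatim from \cite{ABVW15} (the \lipicsEnd{} tombstone on the statement marks it as cited without proof), with part~(ii) being a structural property extracted from that paper's argument rather than something re-derived here. So there is no in-paper proof to compare against, and the honest benchmark for your attempt is whether it would stand on its own as a reconstruction of the \cite{ABVW15} construction.

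Your outline identifies the right architecture --- one selection mechanism per block enforced by the $\#_{\ttA}/\#_{\ttA}'$ (resp.\ $\ttB$, $\ttC$) delimiters, three pairwise connectivity tests realized by \cref{lem:clique_gadged}-style gadget halves nested so that the $X$--$Z$ test encloses the $X$--$Y$ and $Y$--$Z$ tests, and weights calibrated so that the delimiter contribution dominates --- and this is indeed how \cite{ABVW15} proceeds. But as written it is a proof plan, not a proof. The entire technical content of the lemma sits in the steps you defer: (a) the claim that any folding failing to match "essentially all" of the $\#$-symbols loses by more than $3C'$, which requires fixing the $\#$-weights relative to $M$ and the padding count $2N$ and then a genuine case analysis over which pair-sets can coexist under the non-crossing constraint (partial alignment of two gadgets in the same block, $\#$-pairs routed across block boundaries, etc.); and (b) for part~(ii), the uncrossing argument presupposes that the optimal folding already has a "selected triple", i.e.\ the selection structure of (a) --- so the quantifier "for every optimal folding $F$" cannot be discharged until (a) is fully proved, and your swap argument ("cannot decrease the score once the weight of a $\#$-symbol is at most that of a test pair") is stated without verifying that the required weight inequality is compatible with the inequalities needed in (a). These are not routine bookkeeping: they are the proof. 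Since the paper's authors deliberately chose to cite rather than reprove this, the appropriate resolution for your write-up is either to do the same or to carry out (a) and (b) in full; the current text does neither.
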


We proceed to modify this instance to obtain a dynamic lower bound.

\begin{lemma}\label{lem:rna_dynamic_string}
    Consider the string $S$ and the weight function $\w$ from \cref{lem:rna_static_string}:
    \begin{itemize}
        \item For $W \in V^k$, modify the string $S$ to get the string
        \begin{align*}
            S_W \coloneqq \ & \CLG_{\ttA}(W) \ \#_{\ttA}^{2N} \ \big(\bigcirc_{X \in \mX} \  \#_{\ttA}' \ p(\CNG_{\ttA}(X))^{R} \ \CG_{\ttA}(X) \ \#_{\ttA}'\  \big) \ \#_{\ttA}^{2N} \circ \\
            & \CLG_{\ttB}(W) \ \#_{\ttB}^{2N} \big(\ \bigcirc_{Y \in \mY} \  \#_{\ttB}'\   p(\CNG_{\ttB}(Y))^{R}\ \CG_{\ttB}(Y) \ \#_{\ttB}' \ \big) \ \#_{\ttB}^{2N} \circ \\
            & \CLG_{\ttC}(W) \ \#_{\ttC}^{2N} \big(\bigcirc_{Z \in \mZ} \ \#_{\ttC}' \  p(\CNG_{\ttC}(Z))^{R} \ \CG_{\ttC}(Z)  \ \#_{\ttC}' \ \big) \ \#_{\ttC}^{2N},
        \end{align*}
        where \( \CLG_{\ttA} \) and \( \CNG_{\ttA} \) are the strings \( \CLG \) and \( \CNG \) from \cref{lem:clique_gadged}, defined over a new alphabet \( \Sigma_{\ttA} \) disjoint from \( \Sigma \).  
        Similarly, \( \CLG_{\ttB}, \CNG_{\ttB} \) and \( \CLG_{\ttC}, \CNG_{\ttC} \) are defined over new alphabets \( \Sigma_{\ttB} \) and \( \Sigma_{\ttC} \), respectively, each disjoint from one another and from \( \Sigma \cup \Sigma_{\ttA} \).
        \item Modify the weight function $\w$ from \cref{lem:rna_static_string} to get $\w' : \Sigma \cup \Sigma_{\ttA} \cup \Sigma_{\ttB} \cup \Sigma_{\ttC} \rightarrow \fragment{1}{4\lambda M}$
        defined as
        \[
            \w'(\sigma) = 
            \begin{cases}
                4\lambda \w(\sigma) & \sigma \in \Sigma\\
                1 & \sigma \in \Sigma_{\ttA} \cup \Sigma_{\ttB} \cup \Sigma_{\ttC}
            \end{cases}
        \]
        where $\lambda$ is the length of $\CLG, \CNG$.
    \end{itemize}
    Then, there is a constant $D$ such that $\score_{\w'}(S_W) = D$
    if there are $X \in \mX, Y \in \mY, Z \in \mZ$ such that $X \cup Y \cup Z \cup W$ is a $4k$-clique, and $\score_{\w'}(S_W) < D$, otherwise.
\end{lemma}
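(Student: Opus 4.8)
Set $D := 4\lambda C + 3C'$, where $C$ is the constant of \cref{lem:rna_static_string} and $C'$ the constant of \cref{lem:clique_gadged}; note $D$ does not depend on $W$. The plan rests on the observation that the alphabets $\Sigma, \Sigma_\ttA, \Sigma_\ttB, \Sigma_\ttC$ are pairwise disjoint, so every pair of a folding $F$ of $S_W$ is either a \emph{$\Sigma$-pair} (both endpoints carry symbols of $\Sigma$, the delimiters $\#_\bullet,\#_\bullet'$ included) or a \emph{$\bullet$-pair} for a unique $\bullet\in\{\ttA,\ttB,\ttC\}$. Writing $F_\Sigma$ and $F_\bullet$ for the corresponding sub-foldings, the $\w'$-weight of $F$ equals $4\lambda$ times the $\w$-weight of $F_\Sigma$ plus $|F_\ttA|+|F_\ttB|+|F_\ttC|$, and since the only \emph{unprimed} $\Sigma_\bullet$-symbols of $S_W$ lie in $\CLG_\bullet(W)$ we get $|F_\bullet|\le\lambda$ for each $\bullet$. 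Finally, deleting from $S_W$ every position carrying a symbol of $\Sigma_\ttA\cup\Sigma_\ttB\cup\Sigma_\ttC$ leaves exactly the string $S$ of \cref{lem:rna_static_string}, so the restriction of $F_\Sigma$ to the surviving positions is a folding of $S$ (restriction preserves non-crossing), hence $\w$-weight of $F_\Sigma$ is at most $\score_\w(S)\le C$.

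Next I would argue that an \emph{optimal} folding $F$ of $S_W$ must have $\w$-weight of $F_\Sigma$ equal to $\score_\w(S)$: an optimal folding of $S$, viewed inside $S_W$, already attains $\w'$-weight $4\lambda\,\score_\w(S)$, whereas if the $\w$-weight of $F_\Sigma$ were smaller then the $\w'$-weight of $F$ would be at most $4\lambda(\score_\w(S)-1)+3\lambda<4\lambda\,\score_\w(S)$. Thus $F_\Sigma$ is an optimal folding of $S$, so \cref{lem:rna_static_string}\eqref{it:rna_static_string:ii} supplies selected sets $X^*\in\mX,Y^*\in\mY,Z^*\in\mZ$ such that no $\{\#_\ttA,\#_\ttA'\}$-pair of $F$ straddles $\CG_\ttA(X^*)$, and likewise for $\ttB,\ttC$. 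The geometric heart is then the \emph{sealing claim}: in such an optimal folding every non-selected block $\CG_\ttA(X)$, \emph{together with the reversed block $p(\CNG_\ttA(X))^R$ spliced to its left}, sits strictly inside a single $\{\#_\ttA,\#_\ttA'\}$-pair of $F$ whose endpoints lie strictly between the last symbol of $\CLG_\ttA(W)$ and $\CG_\ttA(X^*)$ (if $X$ precedes $X^*$) or strictly past $\CG_\ttA(X^*)$ (if $X$ follows it); this is the structural feature the \cite{ABVW} instance is designed to force for optimal foldings, and the splice opens no leak because its symbols lie in $\Sigma_\ttA'$. Granting this, the only $\Sigma_\ttA'$-symbols $\CLG_\ttA(W)$ can reach without crossing a pair of $F$ are those of $p(\CNG_\ttA(X^*))^R$; since two blocks of $\Sigma_\ttA'$-symbols can never fold together, $F_\ttA$ consists entirely of pairs between $\CLG_\ttA(W)$ and $p(\CNG_\ttA(X^*))^R$, and such a set is non-crossing iff the corresponding set is non-crossing in the direct concatenation $\CLG_\ttA(W)\circ p(\CNG_\ttA(X^*))^R$ (only the orders within each block matter). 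Hence $|F_\ttA|\le\score\bigl(\CLG_\ttA(W)\circ p(\CNG_\ttA(X^*))^R\bigr)\le C'$, with equality iff $W\subseteq N(X^*)$ by \cref{lem:clique_gadged}; symmetrically for $\ttB$ with $Y^*$ and $\ttC$ with $Z^*$.

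Combining the two steps yields $\w'$-weight of $F$ at most $4\lambda\,\score_\w(S)+3C'\le D$ for every folding $F$. For the reverse bound, suppose $X^*\cup Y^*\cup Z^*\cup W$ is a $4k$-clique. Then $X^*\cup Y^*\cup Z^*$ is a $3k$-clique, so $\score_\w(S)=C$, and by the last sentence of \cref{lem:rna_static_string}\eqref{it:rna_static_string:ii} there is an optimal folding $F_0$ of $S$ that selects exactly $(X^*,Y^*,Z^*)$ (and, as above, seals every non-selected block). Inside $S_W$, augment $F_0$ with an optimal folding of $\CLG_\ttA(W)$ against $p(\CNG_\ttA(X^*))^R$ (of size $C'$, since $W\subseteq N(X^*)$) and likewise for $\ttB,\ttC$; the new $\ttA$-pairs nest every $\#_\ttA$-pair and sealed block lying between $\CLG_\ttA(W)$ and $\CG_\ttA(X^*)$ and lie entirely before everything past $\CG_\ttA(X^*)$, while the three batches occupy disjoint blocks of $S_W$, so nothing crosses. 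This gives a folding of $\w'$-weight $4\lambda C+3C'=D$, so $\score_{\w'}(S_W)=D$. Conversely, if $\score_{\w'}(S_W)=D$ then any optimal $F$ must satisfy $\w$-weight of $F_\Sigma$ equal to $C$ and $|F_\ttA|=|F_\ttB|=|F_\ttC|=C'$; the former forces the selected triple of $F_\Sigma$ to be a $3k$-clique (an optimal folding of $S$ poking out at a non-clique triple has value $<C$, which is exactly the content underlying \cref{lem:rna_static_string}\eqref{it:rna_static_string:i}), and by the previous paragraph the three equalities force $W\subseteq N(X^*)\cap N(Y^*)\cap N(Z^*)$, so $X^*\cup Y^*\cup Z^*\cup W$ is a $4k$-clique.

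The step needing genuine care is the sealing claim: that in an optimal folding of $S_W$ (equivalently of $S$) each non-selected clique block, \emph{together with the reversed $\CNG$-block we spliced in next to it}, is enclosed by a single delimiter pair whose left endpoint already comes after $\CLG_\bullet(W)$, so $\CLG_\bullet(W)$ is forced to fold against the selected $p(\CNG_\bullet(\cdot^*))^R$ and only that one. Morally this is \cref{lem:rna_static_string}\eqref{it:rna_static_string:ii}, but I need it in this slightly sharper enclosure form and on the modified string; tracking precisely which $\#_\bullet$-symbols pair with which, and checking that the splice cannot let a $\Sigma_\bullet'$-symbol escape its enclosure, is the crux. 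Everything else is bookkeeping, and the scale factor $4\lambda$ is chosen precisely so the $\Sigma$-part (the original \cite{ABVW} instance) is optimized first and the clique gadgets act only as a bounded tie-breaker worth at most $3\lambda<4\lambda$.
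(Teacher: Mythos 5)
Your proposal is correct and follows essentially the same route as the paper: the same constant $D=4\lambda C+3C'$, the same decomposition of a folding by the disjoint alphabets, the same $4\lambda$-scaling argument forcing the $\Sigma$-part to be an optimal folding of $S$, and the same use of \cref{lem:rna_static_string}\ref{it:rna_static_string:ii} to confine the $\Sigma_\bullet$-pairs to $\CLG_\bullet(W)$ versus the selected $p(\CNG_\bullet(\cdot))^R$ block. The ``sealing claim'' you single out as the crux is exactly the step the paper dispatches with ``it is not difficult to see,'' so your treatment is, if anything, slightly more explicit than the original.
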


\begin{proof} 
    We first prove \cref{claim:rna_dynamic_string:1},
    and then we argue how \cref{lem:rna_dynamic_string} follows from it.

    \begin{claim} \label{claim:rna_dynamic_string:1}
        There are $X \in \mX, Y \in \mY, Z \in \mZ$
        for which the following holds:
        \begin{equation}
            \score_{\w'}(S_W) = 4\lambda \cdot \score_{\w}(S) + \textstyle \sum_{(t, T) \in \{(\ttA, X), (\ttB, Y), (\ttC, Z)\}} \score(\CLG_{t}(W) \circ p(\CNG_{t}(T))^R). \label{eq:rna_dynamic_string}
        \end{equation}     
        Moreover, if $\score_{\w}(S) = C$ (where $C$ is the constant from \cref{lem:rna_static_string}), the sets $X,Y,Z$ are such that $X \cup Y \cup Z$ is a $3k$-clique and such to maximize the summation of scores in  \eqref{eq:rna_dynamic_string}.
    \end{claim}
    \begin{claimproof}
        Let \( F_W \) be an optimal folding of \( S_W \) with score $s_W = \sum_{(i,j) \in F_W} \w(S_W[i])$, and let \( s = \score_{\w}(S) \).  
        Note that \( s > 0 \) if \( N > 0 \).  
        Since \( S \) is a subsequence of \( S_W \), we have \( s_W \geq 4\lambda s \).
        Moreover, as the alphabet \( \Sigma \) is disjoint from \( \Sigma_{\ttA}, \Sigma_{\ttB} \), and \( \Sigma_{\ttC} \), it follows that for any pair \( (i, j) \in F_W \), either both positions \( i \) and \( j \) lie in \( S \), or neither does.

        We can therefore define a folding \( F \) of \( S \) by restricting \( F_W \) to those pairs whose endpoints are both in \( S \). We claim that this folding \( F \) has score \( s \), and must therefore be optimal.  
        Assume for contradiction that this is not the case. Then the score \( s_W \) could be at most   $4\lambda(s - 1) + 3\lambda,$
        as the additional parts of \( S_W \) can contribute at most \( 3\lambda \) to the total score.  
        This implies $s_W \leq 4\lambda(s - 1) + 3\lambda = 4\lambda s - \lambda $, 
        contradicting the earlier bound \( s_W \geq 4\lambda s \).

        Since \( F \) is optimal, it satisfies the property stated in \cref{lem:rna_static_string}\ref{it:rna_static_string:ii}. Importantly, this property also holds for \( F_W \): there is \( X \in \mX \) such that if \( \CG_{\ttA}(X) \) appears within the substring \( S_W\fragmentco{x}{x+\lambda} \), then for any pair \( (i, j) \in F_W \) with \( \{S_W\position{i}, S_W\position{j}\} = \{\#_{\ttA}, \#_{\ttA}'\} \), we have either \( i < j < x \) or \( x + \lambda \leq i < j \). The same holds for some \( Y \in \mY \) and \( Z \in \mZ \). 
        
        At this point, it is not difficult to see that any optimal \( F \) would match the remaining characters of \( S_W \) (those not included in \( S \)) by including the matchings between the three strings \( \CLG_{\ttA}(W) \circ p(\CNG_{\ttA}(X))^R \),
        and \( \CLG_{\ttB}(W) \circ p(\CNG_{\ttB}(Y))^R \), and \( \CLG_{\ttC}(W) \circ p(\CNG_{\ttC}(Z))^R \).

        This completes the first part of the statement.  
        For the second part, observe that if $\score_{\w}(S) = C$, then \cref{lem:rna_static_string}\ref{it:rna_static_string:ii} holds for all triples \( X, Y, Z \) that form a \( 3k \)-clique. Therefore, such triples must maximize the second part of the sum \eqref{eq:rna_dynamic_string}, otherwise, we would contradict the optimality of \( F_W \).  
    \end{claimproof}

    Next, set $D = 4\lambda C + 3C'$, where $C, C'$ are the constants from \cref{lem:rna_static_string} and \cref{lem:clique_gadged}, respectively. 

    Now, suppose there exist \( X \in \mX \), \( Y \in \mY \), and \( Z \in \mZ \) such that \( X \cup Y \cup Z \cup W \) forms a \( 4k \)-clique. Since \( X \cup Y \cup Z \) already forms a \( 3k \)-clique, \cref{claim:rna_dynamic_string:1} implies that the first part of the sum \eqref{eq:rna_dynamic_string} contributes \( 4\lambda C \). Additionally, by \cref{lem:clique_gadged}, the second part must contribute \( 3C' \). Therefore, we have \( \score_{\w'}(S_W) = D \).

    On the other hand, suppose \( \score_{\w'}(S_W) = D \).  
    Then we must have \( \score_{\w}(S) = C \); otherwise, we would get  
    $\score_{\w'}(S_W) \leq 4\lambda(C - 1) + 3\lambda = 4\lambda C - \lambda < D$,  
    which is a contradiction.  
    This implies that the sets \( X \cup Y \cup Z \) form a \( 3k \)-clique.  
    Consequently, the second part of the sum in \eqref{eq:rna_dynamic_string} contributes \( D - 4\lambda C = 3C' \), and by \cref{lem:clique_gadged}, this shows that \( X \cup Y \cup Z \cup W \) forms a \( 4k \)-clique.
\end{proof}

This allows us to conclude this section with the dynamic lower bound.

\rnafolding

\begin{proof}
    Let $k$ be a sufficiently large constant, to be determined later. We show that algorithms for RNA Folding and Dyck Edit Distance, as described in the statement, can be used to construct a combinatorial algorithm for detecting a $4k$-clique with running time $\Oh(n^{4k - \varepsilon'})$ for some $\varepsilon' > 0$. We begin with the case of RNA Folding.

    Given an instance $\bG$ of $4k$-Clique Detection, the algorithm proceeds as follows. First, we discover the set of all $k$-cliques, which we denote with $\mS$. The algorithm performs $n^k$ rounds, one for each $k$-clique $W \in \mS$. In round $W$, we dynamically maintain the RNA Folding instance $S_W$ as defined in \cref{lem:rna_dynamic_string}, where $\mX = \mY = \mZ = \mS$. (Technically, we maintain the unweighted version of the instance given in \cref{lem:weighted_rna_folding}, in which all characters have weight one.)
    Note that between any two consecutive rounds $W$ and $W'$, the corresponding strings $S_W$ and $S_{W'}$ differ only in the gadgets $\CLG_{\ttA}(W)$, $\CLG_{\ttB}(W)$, and $\CLG_{\ttC}(W)$, each of length $\Oh(nk \log n)$. These changes can be handled in $\Oh(nk \log n)$ updates. After constructing the instance for each round $W$, we query its score and keep track of the maximum value across all rounds. 

    By \cref{lem:rna_static_string}, the result of round $W$ reveals whether there exist subsets $X, Y, Z \in \mS$ such that $X \cup Y \cup Z \cup W$ forms a $4k$-clique. Since we iterate over all $W \in \mS$, the algorithm checks for the existence of a $4k$-clique in $\bG$.

    The total size of the unweighted instance maintained throughout is $N = \Oh(n^{k+2} \log^2 n)$, treating $k$ as a constant. This is also the number of updates and queries performed.
    Therefore, the total running time is bounded by $\Oh(p(N) + N \cdot (u(N) + q(N))) = \Oh(N^{4(1 - \epsilon)})$.
    For a sufficiently large $k$ and small enough $\epsilon' > 0$, this evaluates to
    $\Oh(n^{4(k+2)(1 - \epsilon)} \log^{8(1 - \epsilon)} n) = \Oh(n^{4k - \varepsilon'})$.

    For Dyck Edit Distance, it suffices to apply \cref{lem:dyck_to_rnafolding} to the RNA Folding instance maintained above. This transformation increases the size of the instance by at most a constant factor of 6. Moreover, each update to the original string corresponds to at most 6 updates in the transformed Dyck Edit Distance instance.
\end{proof}

\bibliographystyle{alphaurl}
\bibliography{main}

\newpage

\appendix
\section{Online RNA folding and Dyck Edit Distance in OMv Time}
\label{sec:appendix}

The starting point for \cref{thm:online} is a recent paper \cite{DG24}, where Dudek and Gawrychowski consider the online version of the CFG Parsing Problem.

\defproblem
{\textsf{CFG Parsing}}
{A context-free grammar $G$, and a string $S$. }
{\yes if $G$ accepts $S$, and \no otherwise.}

In the online setting, the input is revealed one character at a time, and the task is to compute the answer for each current prefix. Building on Valiant’s result that CFG Parsing can be solved in the same time as Boolean matrix multiplication \cite{Val75}, the authors of \cite{DG24} show that the online CFG Parsing Problem can be solved in \OMv time.

Interestingly, CFG Parsing is closely related to both Dyck Edit Distance and RNA Folding. To make this connection more precise, we now introduce two closely related problems.

\defproblem
{\textsf{Language Edit Distance}}
{A context-free grammar $G$, and a string $S$}
{$\min_{\text{$X$ s.t. $G$ generates $X$}} \ed(X, S)$.}

\defproblem
{\textsf{Scored Parsing}}
{A context-free grammar $G$ where each production rule has an associated cost, and string $S$. }
{The minimum total cost that generates $S$ (if there is any).}

Now, the connection is the following: in \cite{BGSV19}, Bringmann et al. show that Language Edit Distance and RNA Folding can be reduced to the Scored Parsing Problem on bounded-difference grammars\footnote{A CFG $G$ is a bounded-difference grammar if, for every nonterminal $X$, the minimum cost of generating a string $S$ from $X$ changes only slightly when appending a terminal to the end of $S$ or inserting one at the beginning; see \cite{BGSV19} for the formal definition.}. They further reduce this problem, under the assumption of constant grammar size, to a constant number of min-plus products\footnote{The min-plus product (or $(\min,+)$-product) of matrices $A$ and $B$ is the matrix $C = A \oplus B$ with entries $C_{i,j} = \min_k\{A_{i,k} + B_{k,j}\}$.} of row-monotone or column-monotone bounded-difference matrices\footnote{In \cite{BGSV19}, the authors did not explicitly mention that the matrices were row-monotone or column-monotone. However, when we view Dyck edit distance as a maximization problem instead of a minimization problem, both problems can be cast as bounded monotone min-plus products.}. Since Dyck Edit Distance is a special instance of Language Edit Distance, it follows from \cite{BGSV19} that both Dyck Edit Distance and RNA Folding can be solved in the same time complexity as the bounded monotone min-plus product.

Recent work has shown that the online bounded monotone min-plus product (where columns of the second matrix are revealed one at a time, as in $\mathsf{OMv}$) is computationally equivalent to $\mathsf{OMv}$ \cite{HP25}. Consequently, extending the techniques of \cite{DG24} to online Scored Parsing yields an online algorithm for Dyck Edit Distance and RNA Folding with $\mathsf{OMv}$ running time.

\subsection{Static Scored Parsing and the Min-Plus Product}
First, we will give a brief overview of how to solve Scored Parsing using the (min,+)-product, as in \cite{BGSV19}, since both RNA Folding and Dyck Edit Distance can be reduced to Scored Parsing. Let $G = (N,T,P,S)$ be a CFG with score function $s : P \mapsto \mathbb{Z}_{\geq 0}$. Let $\mathcal{F}_N$ be the set of all functions mapping nonterminals to $\mathbb{Z}_{\geq 0} \cup \{\infty\}$. We re-define the ``plus'' operation $+ : \mathcal{F}_N \times \mathcal{F}_N \mapsto \mathcal{F}_N$ as follows:
\begin{align*}
    (F_1+F_2)(X) &\coloneqq \min \{s(X \rightarrow YZ) + F_1(Y) + F_2(Z) \mid (X \rightarrow YZ) \in P\}\text{ for }F_1, F_2 \in \mathcal{F}_N, X\in N.
\end{align*}
For a subset $\mathcal{F}$ of $\mathcal{F}_N$, we define the operation $\min \mathcal{F}$ as returning a function that takes the point-wise minimum:
\begin{align*}
    (\min \mathcal{F})(X) &\coloneqq \min \{F_i(X) \mid F_i \in \mathcal{F}\} \text{ for all }X\in N.
\end{align*}
Additionally, we define a special ``null'' function $E \in \mathcal{F}_N$ such that $F + E = E + F = F$ for all $F \in \mathcal{F}_N$ and $\min (\mathcal{F} \cup \{E\}) = \min (\{E\} \cup \mathcal{F}) = \min \mathcal{F}$ for all $\mathcal{F} \subseteq \mathcal{F}_N$.

For two matrices $A,B$ with entries in $\mathcal{F}_N$, we define their min-plus product $A \oplus B$ to be 
\begin{align*}
    (A \oplus B)_{i,j} = \min_k \{A_{i,k} + B_{k,j}\}
\end{align*}
using the above definitions for   ``$\min$'' and ``$+$''. Note that our $\oplus$ operation is not the usual min-plus product between two integer-valued matrices. However, for a constant-size grammar, our redefined min-plus product on functions can be reduced to a constant number of the original min-plus products \cite{BGSV19}. Thus, with a slight abuse of notation, we ignore the calculations under the hood.

Given a string $S \in T^*$ of length $n$, we define an $n+1$ by $n+1$ matrix $A$ whose entries are in $\mathcal{F}_{N}$. We set $A_{i,i+1}(X) = \min \{s(X \rightarrow S[i]) \mid (X \rightarrow S[i]) \in P \} \cup \{\infty\}$ for $i \in \fragment{1}{n}$; $A_{i,i} = E$ for $i \in \fragment{1}{n+1}$; and we set all other entries to be the function that maps every nonterminal to $\infty$.

Then we calculate the transitive closure \( A^+ \) of \( A \), defined as
\[
A^+ = \min \{A^{(i)} \mid i \in \fragment{1}{n+1}\},
    `\]
where \( \min \) is taken component-wise, and
\[
A^{(1)} = A, \quad
A^{(i)} = \min \{A^{(j)} \oplus A^{(i-j)} \mid j \in \fragment{1}{i-1}\}.
\]
Then the function $A^+_{i,j}$ evaluated at $X$ gives us the minimum score to generate $S\fragmentco{i}{j}$ from $X$, which we denote by $\minscore X \xRightarrow{\ast} S\fragmentco{i}{j}$. Hence the answer to Scored Parsing can be read off from $A^+_{1,n+1}(\texttt{S})$, where \texttt{S} conventionally denotes the start symbol. \cite{BGSV19} show that $A^+$ can be calculated in time $\Tminplus$ using a natural adaptation of Valiant's Parser.

\subsection{Online Scored Parsing and Online Min-Plus Product}
In Online Scored Parsing, we are given a CFG $G=(N,T,P,S)$ with score function $s : P \mapsto \mathbb{Z}_{\geq0}$. The string $S \in T^*$ is revealed to us one character at a time. At time step $t$, we need to report $\minscore \texttt{S} \xRightarrow{\ast} S\fragment{1}{t}$ before receiving the next character.

As a short-hand notation, for a string of nonterminals $S$ and two indices $i,j$, we define a function $F_{\fragment{i}{j}} \in \mathcal{F}_N$ such that for every nonterminal $X \in N$, $F_{\fragment{i}{j}}(X) = \minscore X \xRightarrow{\ast} S\fragment{i}{j}$. We define $F_{\fragmentco{i}{j}}$, $F_{\fragmentoc{i}{j}}$, and $F_{\fragmentoo{i}{j}}$ correspondingly. Lastly, we define the \textsf{Min-Plus-OMv} Problem, where we replace the boolean product in \OMv with the min-plus product.
\begin{lemma}\label{lem:online_queries}
    Let $\mI = \fragmentco{p}{p+s}$ be an interval of positions from $S$. Consider at most $s$ queries $Q_{p+s}, \dots, Q_{p+2s-1}$, where in query $Q_t$, we are given the set $\{F_{\fragment{i}{t}} \mid  i \in \fragment{(p+s)}{t}\}$, and we need to compute the set $A_t = \{F_{\fragment{i}{t}} \mid i \in \mI\}$. There is a randomized algorithm answering all queries in total time $\Oh(\TminplusOMv(s))$ and succeeds with high probability.
\end{lemma}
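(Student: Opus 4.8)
The plan is to turn each query into a min-plus matrix--vector product and to pack all $\le s$ products of the interval into \emph{one} \textsc{Min-Plus-OMv} instance plus one online run of Valiant's recursive parser, in the spirit of \cite{DG24}. Throughout I use the matrix notation of the previous subsection: $F_{\fragment{i}{t}}=A^{+}_{i,t+1}$, i.e.\ $F_{\fragment{i}{t}}(X)=\minscore X\xRightarrow{\ast}S\fragment{i}{t}$, so a query $Q_t$ supplies the column $\{A^{+}_{k,t+1}:k\in\fragment{p+s}{t}\}$ and asks for the column $A_t:=\{A^{+}_{i,t+1}:i\in\mI\}$.

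\textbf{Step 1 (boundary decomposition).} Write $c_i:=A^{+}_{i,p+s}$ and $L:=(A^{+}_{i,k})_{i,k\in\mI,\ i<k}$; like $c$, the matrix $L$ is strictly upper triangular and depends only on $S\fragmentco{p}{p+s}$, so it is fixed before the window is processed (precomputing it is a static closure of an $\Oh(s)\times\Oh(s)$ matrix, within our budget). Applying the one-step recursion $A^{+}_{i,j}=\min_{i<k<j}A^{+}_{i,k}\oplus A^{+}_{k,j}$ with $j=t+1$ and splitting the range of $k$ at the boundary $p+s$ gives, for every $i\in\mI$,
\[ A^{+}_{i,t+1}=\min\Big(\textstyle\min_{i<k<p+s}L_{i,k}\oplus (A_t)_k,\ \ c_i\oplus F_{\fragment{p+s}{t}},\ \ \min_{p+s<k<t+1}A^{+}_{i,k}\oplus F_{\fragment{k}{t}}\Big), \]
because for $k\in(i,p+s)$ both endpoints lie in $\mI$ (so $A^{+}_{i,k}=L_{i,k}$ and $A^{+}_{k,t+1}=(A_t)_k$), for $k=p+s$ we have $A^{+}_{p+s,t+1}=F_{\fragment{p+s}{t}}$, and for $k\in(p+s,t+1)$ we have $A^{+}_{i,k}=F_{\fragmentco{i}{k}}=F_{\fragment{i}{k-1}}$ — exactly the answer $A_{k-1}$ returned at the earlier query $Q_{k-1}$ — and $A^{+}_{k,t+1}=F_{\fragment{k}{t}}$, part of $Q_t$'s input. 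This is a triangular fixed point in $A_t$ (the only self-reference runs through the fixed matrix $L$), so
\[ A_t=L^{\ast}\oplus b_t,\qquad (b_t)_i:=\min\Big(c_i\oplus F_{\fragment{p+s}{t}},\ \textstyle\min_{p+s<k<t+1}A^{+}_{i,k}\oplus F_{\fragment{k}{t}}\Big), \]
with $L^{\ast}$ the fixed, precomputable $\oplus$-closure of $L$ (in fact $L^{\ast}=E\oplus L$, since $A^{+}$ is already transitively closed).

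\textbf{Step 2 (packing the work).} The map $b_t\mapsto L^{\ast}\oplus b_t$ is a min-plus matrix--vector product against the \emph{single fixed} matrix $L^{\ast}$, so over the $\le s$ queries it is precisely one \textsc{Min-Plus-OMv} instance of dimension $\Oh(s)$, costing $\Oh(\TminplusOMv(s))$. Inside $b_t$ the term $c_i\oplus F_{\fragment{p+s}{t}}$ costs $\Oh(s)$ per query, and the remaining term $\min_{p+s<k<t+1}A^{+}_{i,k}\oplus F_{\fragment{k}{t}}$ combines \emph{previously returned answers} $A^{+}_{i,k}=A_{k-1}$ with \emph{currently given data} $F_{\fragment{k}{t}}$ (which, over $k,t$ in the window, is the window's own transitive closure). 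Since each new answer $A_t$ becomes a new ``column'' of the answer-matrix it is multiplied against, producing these terms for all $t$ is an \emph{online} transitive-closure computation inside $\fragmentco{p+s}{p+2s}$: run Valiant's recursive parser on this window, splitting it as $W_1W_2$, recursing on $W_1$, then observing that every contribution to an answer in $W_2$ arising from a split inside $W_1$ is a product of a now-fixed matrix (the $W_1$-answers plus the $W_1$-restricted given closure) with an incoming column of $Q_t$'s input — another \textsc{Min-Plus-OMv} stream — and recursing on $W_2$. Summing over the $\Oh(\log s)$ recursion levels and absorbing the logarithm into the superquadratic cost of \textsc{Min-Plus-OMv} keeps this part at $\Oh(\TminplusOMv(s))$ as well. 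Each function-valued $\oplus$ collapses to $\Oh(1)$ ordinary integer min-plus operations since the grammar has constant size (as in \cite{BGSV19}); the online min-plus products are charged to \textsc{Min-Plus-OMv} via the equivalence of \cite{HP25}; and the randomization and high-probability guarantee are inherited from the \textsc{OMv} algorithm used.

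\textbf{Expected main obstacle.} The delicate part is the \emph{online} execution of Valiant's closure on the window so that the total stays $\Oh(\TminplusOMv(s))$ rather than the trivial $\Oh(s^3)$ or an extra $\polylog$: one must verify that every ``gluing'' product is between matrices that are already fully determined once the earlier half of the window has been answered, so it can be served as an \textsc{OMv} query stream — essentially the machinery of \cite{DG24} applied locally. The remaining pieces (the triangular solve $A_t=L^{\ast}\oplus b_t$, the reduction of the function-valued products to $\Oh(1)$ integer min-plus products, and the final time accounting) are routine.
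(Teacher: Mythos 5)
Your proposal is correct and takes essentially the same route as the paper: your Step~1 decomposition $A_t=L^{\ast}\oplus b_t$ is exactly the paper's update $u_t=A^{+}\oplus(M\oplus v_t)$ with $L^{\ast}=A^{+}$ on the window and $b_t=M\oplus v_t$ for the column-growing matrix $M$ of previously returned answers, and your Step~2 divide-and-conquer on the window is the same binary-partition-into-square-\textsf{Min-Plus-OMv}-instances trick (with the same geometric-sum accounting) that the paper uses to handle the growing product. The only cosmetic difference is that you frame the second term as an ``online transitive closure'' of the right half, whereas those closure values $F_{\fragment{k}{t}}$ are simply supplied as query input and need not be computed.
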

Before proving \cref{lem:online_queries}, let us first see how we can use \cref{lem:online_queries} to solve Online Scored Parsing. See \cref{alg} for a pseudo-code.

\begin{algorithm}[th]
    \KwInput{Context-free grammar $G = (N,T,P,S)$, score function $s : P \mapsto \mathbb{Z}_{\geq 0}$, and string $S \in T^*$, where character $S[t]$ is revealed at time step $t$.}
    \KwOutput{$\minscore \texttt{S} \xRightarrow{\ast} S\fragment{1}{t}$ for every time step $t$.}
    
    Let $\mathcal{I} = \{\}$;
    
   \For{$t = 1, 2, 3, \dots$} {
   $Q_t = \{F_{\fragment{t}{t}} \}$;
   
   \For{$\ell = |\mathcal{I}|$ to $1$}{$Q_t = Q_t \cup \mathcal{I}[\ell].\text{query}(Q_t)$;}

   Update $\mathcal{I}$ to be a partition of $S\fragment{1}{t}$ into intervals whose lengths correspond to the binary representation of $t$, with the largest interval at the beginning and the smallest interval at the end;\label{alg:line}

   \textbf{return} $F_{\fragment{1}{t}}(\texttt{S})$.
   }

    \caption{Solving Online Scored Parsing using \cref{lem:online_queries}} 
    \label{alg}
\end{algorithm}

\begin{lemma}
    The Online Scored Parsing Problem on a length-$n$ string can be solved in time $\TminplusOMv(n)$.
\end{lemma}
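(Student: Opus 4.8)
The plan is to verify that \cref{alg} is correct and runs in time $\Oh(\TminplusOMv(n))$; since the heavy lifting is already packaged in \cref{lem:online_queries}, what remains is interval bookkeeping plus an amortization. First I would pin down the running partition $\mathcal{I}$. At the start of step $t$, $\mathcal{I} = (I_1,\dots,I_r)$ is the largest‑block‑first dyadic partition of $\fragment{1}{t-1}$, so $I_\ell = \fragmentco{p_\ell}{p_\ell + s_\ell}$ with $p_1 = 1$, $p_{\ell+1} = p_\ell + s_\ell$, $p_r + s_r = t$, and the lengths $s_\ell$ are the distinct, strictly decreasing powers of two in the binary expansion of $t-1$. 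A one‑line check shows $p_\ell - 1$ is a multiple of $2 s_\ell$ (the blocks before $I_\ell$ are distinct powers of two all $\ge 2 s_\ell$), so block $I_\ell$ first enters $\mathcal{I}$ at the end of step $p_\ell + s_\ell - 1$ — at which moment all of its characters $S[p_\ell],\dots,S[p_\ell + s_\ell - 1]$ have been revealed, so its data structure from \cref{lem:online_queries} can be built — and it stays in $\mathcal{I}$ exactly for $t \in \fragment{p_\ell + s_\ell - 1}{p_\ell + 2 s_\ell - 2}$. Hence it is queried at steps $p_\ell + s_\ell,\dots,p_\ell + 2 s_\ell - 1$, that is, at most $s_\ell$ times, which is exactly what \cref{lem:online_queries} permits.

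Correctness then follows from a loop invariant at step $t$: after the iterations that handle $I_r, I_{r-1},\dots,I_{\ell+1}$, the working set satisfies $Q_t = \{F_{\fragment{i}{t}} : i \in \fragment{p_\ell + s_\ell}{t}\}$. The base case $\ell = r$ holds since $Q_t$ is initialized to the trivial single‑character function $\{F_{\fragment{t}{t}}\}$ and $p_r + s_r = t$. For the inductive step, $\fragment{p_\ell + s_\ell}{t}$ is precisely the index region that a query to block $I_\ell$ expects as input in \cref{lem:online_queries} (with $p = p_\ell$, $s = s_\ell$), and $t$ falls in the admissible window $\fragmentco{p_\ell + s_\ell}{p_\ell + 2 s_\ell}$ because the blocks strictly to the right of $I_\ell$ in the partition of $\fragment{1}{t-1}$ are distinct powers of two all smaller than $s_\ell$, hence of total length at most $s_\ell - 1$, giving $p_\ell + s_\ell \le t \le p_\ell + 2 s_\ell - 1$. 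So the query is legal, returns $\{F_{\fragment{i}{t}} : i \in I_\ell\}$, and unioning it into $Q_t$ produces $\{F_{\fragment{i}{t}} : i \in \fragment{p_\ell}{t}\} = \{F_{\fragment{i}{t}} : i \in \fragment{p_{\ell-1} + s_{\ell-1}}{t}\}$, which is the invariant one level up. After the $I_1$‑iteration we get $Q_t = \{F_{\fragment{i}{t}} : i \in \fragment{1}{t}\}$, so the returned value $F_{\fragment{1}{t}}(\texttt{S}) = \minscore \texttt{S} \xRightarrow{\ast} S\fragment{1}{t}$ is exactly the requested output, and the subsequent partition‑update step leaves $\mathcal{I}$ in the shape assumed at the start of step $t+1$.

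For the running time, group the blocks by size: a block of size $s = 2^a$ begins at a position $p$ with $p-1$ a multiple of $2s$, so over the full run there are only $\Oh(n/s)$ of them, each with its own contiguous lifetime, and by \cref{lem:online_queries} the total cost of all queries against one such block — together with its one‑time preprocessing, which is no more expensive — is $\Oh(\TminplusOMv(s))$. Summing over sizes,
\[
  \sum_{a=0}^{\lfloor \log n \rfloor} \Oh\!\Bigl(\tfrac{n}{2^a}\Bigr)\cdot \Oh\!\bigl(\TminplusOMv(2^a)\bigr) \;=\; \Oh\bigl(\TminplusOMv(n)\bigr),
\]
where the last equality uses that $\TminplusOMv$ is super‑linear enough that the sum is geometric and dominated by its top term — concretely that $s \mapsto \TminplusOMv(s)/s^2$ is nondecreasing up to constants, which holds for the best known bound $\TminplusOMv(s) = s^3/2^{\Omega(\sqrt{\log s})}$ and for any plausible time function. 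I expect the only genuinely fiddly part to be this bookkeeping: aligning the $\fragmentco{p+s}{p+2s}$ validity windows of \cref{lem:online_queries} with the steps at which each dyadic block lives in $\mathcal{I}$, and confirming that the inner loop feeds each block exactly the suffix‑functions it requires. The substantive algorithmic content — reducing a batch of Valiant‑parser‑style updates on a frozen block to Min‑Plus‑OMv — is already isolated in \cref{lem:online_queries} and is used here as a black box.
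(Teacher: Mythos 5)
Your proof is correct and follows essentially the same approach as the paper: maintain the dyadic (binary-representation) partition of the revealed prefix, instantiate \cref{lem:online_queries} on each block at creation, chain the queries right-to-left to extend $F_{\fragment{t}{t}}$ to $F_{\fragment{1}{t}}$, and amortize via $\sum_a (n/2^a)\,\TminplusOMv(2^a) = \Oh(\TminplusOMv(n))$. You merely make explicit the bookkeeping (block lifetimes, query windows, loop invariant) that the paper leaves implicit.
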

\begin{proof}
    In \cref{alg:line} of \cref{alg}, we maintain a partition of the part of $S$ that we have seen so far according to the binary representation of $|S|$, and intervals get created or removed as $|S|$ increases to $n$. Upon the creation of every interval, we initialize the algorithm in \cref{lem:online_queries} on it. Note that an interval of size $s$ will remain unchanged for $s$ queries, as the bit represented by that interval gets reset every $s$ increments. Then the correctness of the algorithm is immediate: At time step $t$, the intervals answer query $Q_t$ from right to left, extending $F_{\fragment{t}{t}}$ all the way to $F_{\fragment{1}{t}}$. 

    To see the desired running time, note that an interval of size $s$ is created $O(n/s)$ times. So the total running time is $\sum_{s=0}^{\log(n)}\frac{n}{2^s}\cdot \TminplusOMv(2^s)$. Suppose $\TminplusOMv(s) = \frac{s^c}{2^{\Omega(\sqrt{\log(s)})}}$ for some $2 \leq c \leq 4$, then we have 
    \begin{align}
        \sum_{s=0}^{\log(n)}\frac{n}{2^s}\cdot\frac{2^{cs}}{2^{\Omega(\sqrt{\log(s)})}} = n\cdot \sum_{s=0}^{\log(n)}2^{(c-1)s-\Omega(\sqrt{\log(s)})} = \frac{n^c}{2^{\Omega(\sqrt{\log(n)})}}.\label{eqn:online_RNA_helper_timebound}
    \end{align}
    
\end{proof}

Now we are ready to prove \cref{lem:online_queries}.
\begin{proof}
    In the preprocessing stage, we prepare the $s+1$ by $s+1$ matrix $A^+$ for $S\fragmentco{p}{p+s}$, which can be done in time $\Tminplus(s)$ \cite{BGSV19}. We index the rows and columns of $A^+$ with $\fragment{p}{p+s}$. Additionally, we maintain a matrix $M$ that grows column by column after every query. We index the rows of $M$ with $\fragment{p}{p+s}$ and the columns of $M$ with $\fragment{p+s}{t}$, where $t$ is the current time step. Each entry $H_{i,j}$ contains $F_{\fragmentco{i}{j}}$. Initially, $M$ consists of only one column, and we set $M_{i,p+s} = A^+_{i,p+s}$ for all $i \in \fragment{p}{p+s}$.

    In query $t$, we receive the set $\{F_{\fragment{i}{t}} \mid i \in \fragment{(p+s)}{t}\}$, and we arrange these functions into a vector $v_t$ indexed $\fragment{p+s}{t}$, where $v_{t_{i}} = F_{\fragment{i}{t}}$. Then the first $s$ entries of $u_t = A^+ \oplus (M \oplus v_t)$ hold the answers $\{F_{\fragment{i}{t}} \mid i \in \mI = \fragmentco{p}{p+s}\}$. Note that in the next query, $v_{t+1}$ is one element longer than $v_t$, and note that $u_t$ is exactly what column $t+1$ of $M$ asks for, so we append $u_t$ as a new column to $M$ to prepare for the next query.

    Now let us analyze the time complexity of all $n$ queries. Note that we are always min-plus multiplying $A^+$ with a length $s+1$ vector. Thus, we can initialize a Min-Plus-OMv instance for $A^+$, which runs in total time $\Oh(\TminplusOMv(s))$. However, the min-plus products of $M$ and $v_t$ are trickier, as both are changing in size. To tackle this, we can maintain a partition of $M$ in accordance with the binary representation of the current number of columns in $M$, as in \cite{DG24}. $M$ is then partitioned into a number of smaller square matrices, for each of which we can initialize a \textsf{Min-Plus-OMv} instance. At the end, we obtain $M \oplus v_t$ by combining the results from different \textsf{Min-Plus-OMv} instances. Since a $2^k$ by $2^k$ matrix gets created $\frac{s}{2^k}$ times, and there are $\frac{s}{2^k}$ of them upon a single creation, the total running time for computing $M \oplus v_t$ throughout all $s$ queries is given by
    \begin{align}
        \sum_{k=0}^{\log(s)}\frac{s}{2^k}\cdot \frac{s}{2^k} \cdot \TminplusOMv(2^k) = \Oh(\TminplusOMv(s))\label{eqn:online_RNA_total_timebound}
    \end{align}
using a similar analysis as \cref{eqn:online_RNA_helper_timebound}.
\end{proof}

\onlineRNA
\begin{proof}
   Online RNA Folding and Online Dyck Edit Distance can be cast as instances of Online Scored Parsing on bounded monotone grammars. Thus, all of the matrices and vectors involved in \cref{lem:online_queries} are bounded and monotone. Using the $\Oh(n^3/2^{\Omega(\sqrt{\log(n)})})$-time algorithm for bounded monotone \textsf{Min-Plus-OMv} \cite{OMVwilliams, HP25}, we obtain the desired running time. 
\end{proof}

\end{document}